\newcommand{\bool}{\set{0,1}}
\newcommand{\Cbb}{\mathbb{C}}
\newcommand{\Nbb}{\mathbb{N}}
\newcommand{\tb}{\mathbf{t}}
\newcommand{\Ac}{\mathcal{A}}
\newcommand{\Bc}{\mathcal{B}}
\newcommand{\Cc}{\mathcal{C}}
\newcommand{\Dc}{\mathcal{D}}
\newcommand{\Ec}{\mathcal{E}}
\newcommand{\Fc}{\mathcal{F}}
\newcommand{\Sc}{\mathcal{S}}
\newcommand{\Qc}{\mathcal{Q}}
\newcommand{\wt}[1]{\widetilde{#1}}
\newcommand{\eps}{\varepsilon}
\newcommand{\bef}{\preccurlyeq}
\newcommand{\prob}[2][]{\operatorname*{\mathbb{P}}_{#1 }\brac*{#2}}
\newcommand{\Expect}[2][]{\operatorname*{\mathbb{E}}_{#1 }\brac*{#2}}
\newcommand{\Var}[2][]{\operatorname*{\normalfont{\text{Var}}}_{#1 }\paren*{#2}}
\newcommand{\bO}[1]{\operatorname*{O}\paren*{#1}}
\newcommand{\bOt}[1]{\operatorname*{\wt{O}}\paren*{#1}}
\newcommand{\lO}[1]{\operatorname*{o}\paren*{#1}}
\newcommand{\bOm}[1]{\operatorname*{\Omega}\paren*{#1}}
\newcommand{\bT}[1]{\operatorname*{\Theta}\paren*{#1}}
\DeclareMathOperator{\poly}{poly}
\DeclarePairedDelimiter\floor{\lfloor}{\rfloor}
\DeclarePairedDelimiter\ceil{\lceil}{\rceil}
\DeclarePairedDelimiter\norm{\lVert}{\rVert}
\DeclarePairedDelimiter\abs{|}{|}
\DeclarePairedDelimiter\brac{\lbrack}{\rbrack}
\DeclarePairedDelimiter\set{\lbrace}{\rbrace}
\DeclarePairedDelimiter\paren{\lparen}{\rparen}
\DeclarePairedDelimiter\interval{\lbrack}{\rparen}
\DeclarePairedDelimiter\bra{\langle}{|}
\DeclarePairedDelimiter\ket{|}{\rangle}
\DeclarePairedDelimiterX\braket[2]{\langle}{\rangle}{#1\delimsize\vert#2}
\newtheorem{theorem}{Theorem}
\newtheorem{lemma}[theorem]{Lemma}
\theoremstyle{definition}
\newtheorem{definition}[theorem]{Definition}
\theoremstyle{remark}
\newtheorem{fact}[theorem]{Fact}
\newcommand{\nth}{^\text{th}}
\newcommand{\din}{d^\text{in}}
\newcommand{\dout}{d^\text{out}}
\newcommand{\dedge}[1]{\overrightarrow{#1}}
\newcommand{\degb}[1]{d_{#1}^{\rightarrow}}
\newcommand{\tgk}{t^{>k}}
\newcommand{\tlk}{t^{<k}}
\newcommand{\Tgk}{T^{>k}}
\newcommand{\Tlk}{T^{<k}}
\newcommand{\hist}[1]{\mathsf{Snap}^{#1}}
\newcommand{\histps}[1]{\mathsf{PsSnap}^{#1}}
\newcommand{\db}[1]{d^{\leq#1}}
\newcommand{\dbps}[1]{\wt{d}^{\leq#1}}
\newcommand{\da}[1]{d^{>#1}}
\newcommand{\doutb}[1]{d^{\text{out},\leq#1}}
\newcommand{\doutbps}[1]{\wt{d}^{\text{out},\leq#1}}
\newcommand{\douta}[1]{d^{\text{out},>#1}}
\newcommand{\bps}[1]{\wt{b}^{#1}}
\newcommand{\inc}{\mathtt{inc}}
\newcommand{\measure}{\mathtt{query\_edge}}
\newcommand{\cleanup}{\mathtt{cleanup}}
\newcommand{\create}{\mathtt{create}}
\newcommand{\update}{\mathtt{update}}
\newcommand{\queryone}{\mathtt{query\_one}}
\newcommand{\querypair}{\mathtt{query\_pair}}
\newcommand{\yes}{\texttt{Yes}}
\newcommand{\no}{\texttt{No}}
\newcommand{\accuracy}{\kappa}
\newcommand{\prefix}{\rho}
\newcommand{\mdicut}{{\normalfont\textsc{Max-DiCut}}}
\newcommand{\mdcut}{\mdicut}
\newcommand{\examplebox}[1]{\noindent\fbox{%
    \parbox{0.97\textwidth}{%
    \textbf{Example:\\}
    #1
    }
}
}
\title{How to Design a Quantum Streaming Algorithm Without Knowing Anything About Quantum Computing}
\date{}
\author{John Kallaugher\\Sandia National
Laboratories\\\texttt{jmkall@sandia.gov} \and Ojas Parekh\\Sandia National
Laboratories\\\texttt{odparek@sandia.gov} \and Nadezhda
Voronova\\ Boston University\\\texttt{voronova@bu.edu}}
\begin{document}
\maketitle
\begin{abstract}
\noindent
A series of work~\cite{GKKRW07,K21,KPV24} has shown that asymptotic advantages
in \emph{space complexity} are possible for quantum algorithms over their
classical counterparts in the streaming model. We give a simple quantum sketch
that encompasses all these results, allowing them to be derived from entirely
classical algorithms using our quantum sketch as a black box. The quantum
sketch and its proof of correctness are designed to be accessible to a reader
with no background in quantum computation, relying on only a small number of
self-contained quantum postulates.
\end{abstract}

\newpage
\tableofcontents
\newpage

\pagenumbering{arabic}

\section{Introduction}
Settings in which data is generated much faster than it can be stored are
increasingly commonplace in many applications, such as those arising
in processing and analyzing internet traffic.  In the data stream model, which
captures such settings, random access to the input is prohibited, and input
data elements arrive sequentially.  A streaming algorithm in this model must
process each data element as it arrives using as little space as possible,
ideally polylogarithmic in the size of the entire input.  While streaming
algorithms were conceived to compute statistics of numerical data
streams~\cite{Mor78,FM85,AMS96}, they have since been applied more
broadly---notably for estimating graph
parameters~\cite{BKS02,McGregorSurvey14}.  Streaming graph algorithms are well
poised to address applications stemming from massive graphs, such as those
derived from social or sensor network activity.

Quantum computing leverages quantum mechanics to process information and offers
the hope of exponential resource advantages over classical computing.  Certain
problems, such as factoring integers, can be solved exponentially faster with
quantum algorithms than with the best-known classical algorithms~\cite{S99}.
Yet provable exponential quantum advantages against best-possible classical
algorithms remain scarce.  Provable exponential advantages are achievable in
restricted computational models such as the data stream
model~\cite{LG06,GKKRW07}, where such advantages are with respect to space (the
number of classical or quantum bits required) rather than execution time.  

Space is an especially critical resource for quantum computing, as developing
error-tolerant and scalable quantum bits (qubits) is a major challenge.  It is
unclear whether noisy intermediate-scale quantum (NISQ) computers or early
fault-tolerant quantum computers, with relatively limited quantities of logical
qubits, will be able to realize quantum advantages.  Space-efficient quantum
algorithms, including quantum streaming algorithms, offer an alternative
opportunity for such advantages.  Even early quantum computers with limited
quantum memory may be able to process larger data sets than possible
classically, and problems admitting exponential quantum space advantages are
promising candidates.   
\paragraph{The Quantum Streaming Model} In the streaming model, the input to a
problem is received as a ``stream'' $(\sigma_i)_{i=1}^m$, one element at a
time, in an arbitrary order. For each possible value of $\set{\sigma_i : i \in
\brac{m}}$ there is a set of valid answers to output. The task of a
streaming algorithm is to output one of these valid answers after processing
the stream.  Note that the answer is not allowed to depend on the \emph{order}
of the stream, and a ``promise'' on the input can be enforced by allowing all
possible answers for inputs violating the promise. For instance, in a graph
streaming problem, the stream elements $\sigma_i$ might be edges of the graph,
and the desired output might be to approximate some parameter of the graph to a
desired accuracy.  Typically in this model we are interested in the
\emph{space} complexity of the algorithm: how many bits (or qubits) of storage
are needed to solve the problem, although other parameters, such as update
time, may also be considered.

In the \emph{quantum} streaming model, the input stream takes the same
(classical) form, but it is processed by an algorithm with access to quantum
resources. Formally, we may think of the algorithm as having an array of qubits
and a classical algorithm that, for every update processed, chooses a quantum
circuit built from some finite set of gates and measurement operators, to apply
to the qubits. The algorithm is allowed to perform both pre-processing and
post-processing operations. Its space complexity, then, is the sum of the
number of qubits used and the number of bits required by the classical
algorithm. Note that we allow measurements at intermediate points while
processing the stream: it is not in general possible to defer measurements to
the end of the algorithm because \emph{which} measurements are performed may
depend on the updates seen, which would require additional space to remember.

\paragraph{Sketching} A closely related concept is the idea of a ``sketch'' of
a dataset~\cite{Cor17}. This is a succinct representation of a dataset (smaller than the
dataset itself; for our purposes we will want it to use at most $\lO{n}$ bits or
qubits for an $n$-bit input) that supports a restricted set of queries. Basic
results in classical and quantum information imply that these queries cannot be
expressive enough to allow reconstructing the entire dataset, and we will also
allow them to be \emph{randomized}, obeying only some probabilistic guarantee
of correctness. We will often refer to a sketch as ``containing'' an element or
having an ``element'' added to it, to denote that element being in the dataset
sketched, but it is important to note that this does not put sketches in
one-to-one correspondence with sets of elements---two identical sketches may
contain different elements in this sense.

A minimal requirement for a sketch to be useful for streaming is that it can be
updated: whenever a new element is seen in the stream, it must be possible to
update the sketch to contain that element. Often other update operations, such
as the ability to merge sketches, are considered---in our case, we will
consider sketches that support permuting the underlying dataset (that is, given
a permutation $\pi$, there is an update operation for the sketch that is
equivalent to applying $\pi$ to every element previously added, \emph{before}
it was added to the sketch).

In the quantum setting, the no-cloning theorem~\cite[p.~79]{Wil17} means that
it is not, in general, possible to copy a quantum sketch. This in turn makes it
possible for the queries supported by the sketch to be meaningfully
\emph{destructive}, irreversibly changing the sketch. 

\paragraph{Other Quantum Streaming Models}
We only consider the standard one-pass streaming model in this work. When
multiple passes are allowed (that is, the input stream is given to the
algorithm multiple times), quantum advantages are known that are not encompassed
by our sketch, provided the number of passes is more than
constant~\cite{LG06,M16,HM19}.

It is known~\cite{AD11} that quantum automata with only a single qubit can
solve the ``coin problem'' of differentiating a coin that is heads with
probability $p$ from one that is heads with probability $p + \eps$, while a
classical automaton would need $\bOm{(p(1-p)/\eps}$ states~\cite{HC70}.
However, such quantum automata cannot be implemented as constant-space quantum
streaming algorithms under our definition, as they depend on the ability to
perform noiseless rotations that depend on $\eps$, and so accounting for noise
(or the error introduced by constructing such rotations from a finite set of
gates) would blow up the space.

\paragraph{Related Work} A provable exponential quantum space advantage for a
streaming graph problem was first\footnote{A similar separation was proved
in~\cite{LG06}, although the problem considered is not formally a streaming
problem in the sense we use in this paper, as it comes with a guarantee on the
\emph{order} in which the updates will be received.} established by Gavinsky,
Kempe, Kerenidis, Raz, and de Wolf~\cite{GKKRW07}, who considered a streaming
variant of the Boolean Hidden Matching (BHM) problem.  While BHM can be seen as
a graph problem, it is not known to have algorithmic applications, although it
is a powerful tool for proving \emph{lower bounds}.  The question of a quantum
streaming advantage for a ``natural'' and previously classically studied
problem, as articulated by Jain and Nayak~\cite{JN14}, remained open.
Kallaugher discovered the first such advantage, for counting triangles in graph
streams~\cite{K21}, which is well studied and motivated by
applications~\cite{BKS02,BOV13,JK21}.  However, this advantage is polynomial in
the input size, and Kallaugher, Parekh, and Voronova~\cite{KPV24} gave the
first natural exponential quantum streaming advantage, for approximating the
value of the Max Directed Cut problem in a directed graph stream.  This also
represents the first known quantum advantage for approximating a graph problem,
albeit in the streaming model.  The details of the graph problem considered are
critical, since Kallaugher and Parekh~\cite{KP22} showed that no quantum
streaming advantage is possible for the very similar Max Cut problem in
undirected graphs.     

\paragraph{Our Contributions} One of our main goals is to provide experts in
designing classical algorithms a simple entry point for designing novel and
cutting-edge quantum algorithms offering quantum advantages over classical
counterparts.  We observe that quantum streaming algorithms provide a unique
opportunity toward this end.  Learning more traditional quantum algorithms such
as Shor's algorithm~\cite{S99} requires an understanding of quantum circuits
and more basic kernels such as the Quantum Fourier Transform.  The details of
such algorithms can be daunting and a potential deterrent for newcomers to
quantum computing.  In contrast, we show that a simple quantum sketch coupled
with more sophisticated but purely classical techniques captures
state-of-the-art quantum streaming algorithms that represent recent
breakthrough results.  While conventional wisdom might suggest that novel
\emph{quantum} techniques are necessary for breakthroughs in quantum
algorithms, we observe through the present work that the novelty lies on the
classical side for recent quantum streaming algorithms.  More broadly we
believe that effectively enlisting and engaging classical algorithms
researchers is critical in advancing quantum algorithms, and we intend for this
work to provide such a path.  

We show that the quantum streaming advantages for the Boolean Hidden Matching~\cite{GKKRW07}, Triangle Counting~\cite{K21}, and Max Directed Cut~\cite{KPV24} problems may be cast as classical algorithms employing a simple-to-understand quantum sketch as a black box.  The sketch and accompanying proofs of correctness rely only on a few basic and self-contained quantum postulates.  As such, they are accessible to readers with no background in quantum computing and can serve as an alternative to other first examples in quantum information such as quantum teleportation or superdense coding~\cite{NC10}.  Isolating and abstracting the quantum components of the above quantum streaming advantages also better illustrates how and why quantum streaming advantage arises, especially to those unversed in quantum algorithms.  

\paragraph{How to Read This Paper} In the remainder of this section, we will
give a high level description of the sketch and its application to streaming
algorithms. For an exact description of the operations provided by the sketch,
see Section~\ref{sec:description}. In Section~\ref{sec:construction} we prove
that the sketch can be implemented. This is the only part of the paper that
uses any quantum information, and can be skipped by the reader who is happy to
have the sketch as a ``black box''. However, the quantum information postulates
used are all stated in the section, and so it should not require any additional
quantum background.

In Sections~\ref{sec:bhm},~\ref{sec:triangle},~\ref{sec:dicut}, we give explicit
constructions of the known one-pass quantum streaming algorithms as algorithms
that are classical other than their use of the sketch as a subroutine. These
are in increasing order of complexity. For
Sections~\ref{sec:triangle},~\ref{sec:dicut}, the ``quantum part'' of the
original algorithms can in each case be characterized by a single lemma, so we
prove the ``black box'' variant of that lemma rather than reproducing all the
content of the original paper. Section~\ref{sec:dicut} also contains, by way of
presenting a simpler version of the problem solved, a solution to a ``counting heavy
edges'' problem that may be of independent interest as a streaming primitive. 

\subsection{A Quantum Sketch for Pair Sampling}
The core of our paper is a quantum sketch $\Qc_T$ for a set $T$ (contained in a
$\poly(\abs{T})$-sized universe of possible elements $U$) that solves a ``pair
sampling'' problem in the stream. Given a stream of query pairs $(u,v) \in P
\subseteq U^2$, we want to estimate the number of pairs from $P$ that are
contained in $T$ (that is, $\abs{\set{(u,v) \in P : u \in T \wedge v \in T}}$),
and, assuming we determine that this number is non-zero, we want to, by the end
of the stream, randomly sample one of those pairs in $\set{(u,v) \in P : u \in
T \wedge v \in T}$. Note that we do not know $P$ at the time we form the sketch $\Qc_T$.

One classical approach to this problem would be to form $\Qc_T$ by randomly
subsampling $T$. It is easy to see that this would require sampling at least
$\bOm{\sqrt{\abs{T}}}$ elements: if $k$ elements are sampled, any given pair is
lost with probability $1 - k^2/\abs{T}^2$, and so if $k = \lO{\sqrt{\abs{T}}}$,
we can expect to lose \emph{every} pair from $P$ contained in $T$, even if the
greatest possible  number of such pairs ($\abs{T}/2$) is present. A reduction
to the Boolean Hidden Matching problem~\cite{GKKRW07} proves that this is
optimal up to a log factor: \emph{any} classical algorithm needs
$\bOm{\sqrt{\abs{T}}}$ bits of storage to solve the sampling part \emph{or} the
counting part of the problem, even if we only want, say, a $\pm \abs{T}/10$
bound on the additive error.

Our quantum sketch solves this problem with only $\bO{\log n}$ qubits, assuming
the universe $U$ has size $\poly(n)$.  This comes with two important caveats:
\begin{itemize}
\item It only works as described when the pairs in $P$ are \emph{disjoint}.
However, it will not fail arbitrarily when this does not happen. Rather, every
time a pair query $(u,v)$ is made, both $u$ and $v$ are removed from the
sketch, with the effect that e.g.\ a subsequent query to $(u,w)$ will be
treated as if $u$ were not in $T$. Working around this limitation is a core
challenge in writing algorithms that make use of the sketch, as in e.g.\ graph
problems where the pairs we want to query are edges, there will sometimes be
high-degree vertices that cause a large amount of overlap among the pairs to be
queried.
\item When the sketch returns a sample (which is supposed to be a pair from $P$
contained in $T$), it may sometimes instead return an element $(u,v)$ from $P$
such that only \emph{one} of $u$ and $v$ is a member of $T$. However, we will
have two guarantees that mitigate this drawback. Firstly, the probability of
returning a valid pair will be proportional to the fraction of $T$ that is
covered by pairs from $P$, i.e.\ to $\frac{\abs{\set{(u,v) \in P : u \in T
\wedge v \in T}}}{\abs{T}}$.  Secondly, the sample will come with a $\pm$ sign.
If the sample is valid, this will always be $+$, while if it is spurious it
will be $+$ or $-$ with equal probability. We can therefore use the $-$
responses to ``cancel out'' the effect of spurious samples, depending of course
on the application in question.
\end{itemize}
In order to be useful for streaming algorithms, we must be able to update this
sketch as well as query it in the stream. We will have a somewhat more powerful
update rule than simply being able to add elements to the sketch: at any point,
we can execute an arbitrary permutation $\pi$ (on the universe $U$), replacing
the sketched set $T$ with $\pi(T)$. In particular, we can add elements to the
sketch by initializing it on a set of ``dummy elements'' and then using this
operation to replace a dummy element whenever we want a new element added to
the sketch (although note that means we must start by initializing the sketch
with a set as large as the set of all elements we will add to the sketch).

A formal description of the operations supported by the sketch can be found in
Section~\ref{sec:description}. Note that we formalize the query process by 
saying that a query to a pair $(u,v)$ contained in $T$ returns \yes{} with
probability proportional to $1/\abs{T}$ (destroying the set as this happens),
and otherwise causes $u$ and $v$ to be removed from $T$, which will suffice
both for the estimation and sampling part of the task described. In fact, it is
somewhat more powerful, as the fact that we don't have to wait until the end of
the stream to receive our sampled pair will be necessary for the Max Directed
Cut algorithm in Section~\ref{sec:dicut}.

\subsection{Streaming Applications}
We give three applications of the sketch, corresponding to the three known
examples of one-pass quantum advantage in the streaming model.

\paragraph{Boolean Hidden Matching} The first demonstration of quantum space
advantage in streaming was in~\cite{GKKRW07}, for a streaming version of the
Boolean Hidden Matching problem (also described in this paper). Boolean Hidden
Matching is a one-way communication problem in which Alice has a string $x \in
\bool^n$ representing set of vertices $[n]$ labeled by bits, and Bob has a
partial matching $M$ on $\brac{n}$ (that is, a set of disjoint edges with
vertex set $\brac{n}$), along with an edge label for each edge in the matching,
given by $z \in \bool^{M}$. There is a secret bit $b \in \bool$ such that $x_u
\oplus x_v \oplus z_{uv} = b$ for every edge $uv \in M$, and their task is to
determine $b$, using as little communication as possible and with only one-way
communication (Alice to Bob). The authors of~\cite{GKKRW07} showed that this
problem can be solved with $\bO{\log n}$ qubits of communication, but requires
$\bOm{\sqrt{n}}$ bits if only classical communication is allowed.

In the streaming version, the stream consists of edges from $M$ with their
labels, and individual bits of $x$ (in the form $(i, x_i)$), in any order and
interspersed in any way.  They gave a $\bO{\log n}$-qubit streaming algorithm
for this problem, based on the same principles as the communication protocol.

We show that this streaming problem can be solved with $\bO{\log n}$ space by
an algorithm that is entirely classical other than its use of our sketch. This
is the most ``pure'' use of the sketch, as the edges in $M$ correspond exactly
to the pairs we want to sample, as the problem can be solved with knowledge of
any $x_u,x_v$ and $z_{uv}$ (indeed, the sketch was inspired by the quantum
protocol for this problem). The main complications come in the fact that an
edge $uv$ may arrive before one or both of $x_u$ and $x_v$. This algorithm is
described in Section~\ref{sec:bhm}.

\paragraph{Triangle Counting} After~\cite{GKKRW07} established that exponential
quantum streaming advantage was possible, the first work to show advantage for
a \emph{natural} one-pass streaming problem was~\cite{K21}, which gave it for
the \emph{triangle counting} problem. This is a graph streaming problem, in
which an $m$-edge graph $G$ is received one edge at a time, and the objective is to
approximate the number of \emph{triangles} in $G$—the number of triples $u,v,w$
such that $uv, vw, wu$ are all edges in the graph. It is known~\cite{BKS02}
that small-space algorithms for this problem are impossible when $G$ contains
very few triangles, or when~\cite{BOV13} too many of these triangles share the
same edge. In this discussion, for simplicity we will assume $G$ contains
$\bT{m}$ triangles, no more than $\bO{1}$ of which intersect at any given edge.

In this setting, by~\cite{KP17}, no classical algorithm can do better than
$\bOm{\sqrt{m}}$ space, with the ``hard instance'' being when every triangle in
the graph intersects at a single vertex. In particular, this occurs when the
stream is a star ($\bT{m}$ edges all incident to one vertex) followed by edges
that may or may not complete triangles with the edges of that star. Conversely,
the quantum algorithm of~\cite{K21} achieves $\bOt{m^{1/3}}$ space. 

We show how to implement that algorithm as a classical algorithm with access to
the quantum sketch as a subroutine. Here the set sketched contains the ordered
pairs $(u,v)$ and $(v,u)$ for each edge $uv$ seen in the stream. Then, the
``pairs'' to be queried whenever we see an edge $uv$ are $((w,u),(w,v))$ for
each $w$ in the set of possible vertices for the graph. Therefore, the number
of ``yes'' answers is the number of triangles completed by $uv$. The
complication in the algorithm comes from the fact that while the pairs queried
for any individual $uv$ are disjoint, there is an overlap between those queried
for any incident edges $uv$, $vt$.

As in the original algorithm of~\cite{K21}, the solution uses the fact that the
worst case for the quantum-sketch-based strategy (i.e.\ for every triangle, its
third edge to arrive in the stream is incident to many other edges) is
incompatible with the worst case for classical algorithms. Therefore, by
interpolating between an entirely classical estimator and an estimator based on
the quantum sketch, an algorithm improving over the best classical algorithm is
achieved. However, the cost of this interpolation is that only a polynomial
advantage ($\bO{m^{1/3}}$ v.\ $\bOm{\sqrt{m}}$) is achieved, despite 
exponential advantage being possible on those instances that are hardest for
classical algorithms. We describe the algorithm in Section~\ref{sec:triangle}.

\paragraph{Maximum Directed Cut} Finally,~\cite{KPV24} gave exponential quantum
streaming advantage for a different natural graph problem: the maximum directed
cut problem in (directed) graph streams. The maximum directed cut value of a
digraph $(V,E)$ is defined as the maximum over all $x \in \bool^V$ of
$\abs{\set{\dedge{uv} \in E : x_u = 0 \wedge x_v = 1}}$. That is, it is the
maximum, over all partitions of $V$ into a ``head'' and ``tail'' set, number of
edges that can be made to go from the head set to the tail set. The objective
of the streaming problem is to approximate this number with the best possible
\emph{approximation} ratio $\alpha$: to return a number in $\brac{\alpha \cdot
\text{Opt}, \text{Opt}}$, where $\text{Opt}$ is the true optimal cut value.

In~\cite{CGV20}, it was shown that any approximation ratio than $\frac{4}{9}$
requires $\bOm{\sqrt{n}}$ classical space, while~\cite{SSSV23b,S23} showed that
$0.4844$ is achievable in $\bOt{\sqrt{n}}$ space by estimating a ``first-order
snapshot''. This is given by grouping the vertices of the graph into a constant
number of classes $C_i$ based on their biases (the bias of a vertex $v$ is $b_v =
\frac{\dout_v - \din_v}{d_v}$, where $\dout_v$ is the number of edges
$\dedge{vu}$ in the graph, $\din_v$ is the number of edges $\dedge{uv}$, and
$d_v = \dout_v + \din_v$), and counting, for every pair of classes $C_i, C_j$,
the number of edges $\dedge{uv}$ with $u \in C_i$ and $v \in C_j$.

This problem has a natural ``pair sampling'' structure: for each pair of
classes we want to sketch the vertices in $C_i$ and those in $C_j$, and then,
for each edge $\dedge{uv}$, query the sketch to ask if $u$ is in $C_i$ and $v$
is in $C_j$. There are two major obstacles to achieving this. One is the issue
we saw in the triangle counting case: the queries we want to make will not be
disjoint. The other is that we need to dynamically update the ``bias class'' of
each vertex as the stream is processed, as these classes depend on the edges
that have been seen incident to the vertex.

It turns out that the same solution applies to both of these issues: each
vertex is represented by a ``stack''  of elements in the sketch (in fact,
several stacks) $(v,i)_{i = 1, \dots}$ which is incremented whenever an edge is
seen incident to that sketch. Then, the presence of an element $(v,d)$
corresponds to the vertex having degree at least $d$, and so querying e.g.\
$((u,d),(v,d))$ tests whether vertices $u$ and $v$ both have degree $\ge d$ in
the stream seen so far. This avoids the problem of pairs overlapping, as while
this query removes $(u,d)$ and $(v,d)$ from the sketch, it also replaces them
when the stack is incremented. These degrees are then used to calculate biases
(some additional tricks are required to encode both the in- and out-degree of a
vertex in one stack, and several stacks are needed per vertex to account for
different degree thresholds that correspond to different biases).

This is the logic of the algorithm in~\cite{KPV24}—in Section~\ref{sec:dicut}
we show how it can be implemented with an entirely classical algorithm that
runs the quantum sketch as a subroutine.

\section{Quantum Pair Sketch}
\label{sec:description}
In this section we will describe a quantum sketch $\Qc_T$ that summarizes a set
$T$ contained in a known universe $U = \set{0,1}^m$. It will require $m = \log
|U|$ qubits to store and support both updates to $T$ and queries that check
(with probability $1/\abs{T}$) whether a given element or pair is contained in
$T$.  This operation will be \emph{destructive}: if the queried element is
successfully identified as being in $T$, the sketch is destroyed, and otherwise
it is removed from $T$.

Formally, the operations are as follows:
\begin{itemize}
    \item $\create(T)$: Takes as input $T \subseteq U$ and returns the sketch $\Qc_T$.
    \item $\update(\pi,\Qc_T)$: Takes as input a permutation $\pi : U \to U$ and
    a sketch $\Qc_T$, and (in-place) transforms $\Qc_T$ to $\Qc_{\pi(T)}$,
    where $\pi(T) = \set{\pi(x) : x \in T}$.
    \item $\queryone(x,\Qc_T)$. Takes as input an element $x \in U$ and a
    sketch $\Qc_T$, and probabilistically tests whether $x \in T$. Returns an
    element of $\set{\in, \bot}$ and modifies the sketch as follows:
    \begin{itemize}
      \item If $x \in T$:
        \begin{itemize}
          \item[]With probability $1/\abs{T}$ destroy $\Qc_T$ and return $\in$.
          \item[]With probability $1 - 1/\abs{T}$ replace $\Qc_T$ with
            $\Qc_{T \setminus \set{x}}$ and return $\bot$.
        \end{itemize}
      \item If $x \not\in T$:
        \begin{itemize}
            \item[] Leave $\Qc_T$ unchanged and return $\bot$.
        \end{itemize}
    \end{itemize}
     \item $\querypair(x, y, \Qc_T)$: Takes as input elements $x\not=y$ from $U$
     and a sketch $\Qc_T$, and probabilistically tests whether $\set{x,y}
     \subseteq T$. Returns an element of $\set{+1,-1,\bot}$ and modifies the
     sketch as follows:
     \begin{itemize}
        \item If $\set{x,y} \subseteq T$:
            \begin{itemize}
                \item[] With probability $2/\abs{T}$ destroy $\Qc_T$ and return $+1$.
                \item[] With probability $1 - 2/\abs{T}$ replace $\Qc_T$ with
                $\Qc_{T \setminus \set{x,y}}$ and return $\bot$.
            \end{itemize}
        \item If $\abs{\set{x,y} \cap T} = 1$:
            \begin{itemize}
                \item[] With probability $1/\abs{T}$ destroy $\Qc_T$ and return
                a randomly chosen $r \in \set{+1,-1}$.
                \item[] With probability $1 - 1/\abs{T}$ replace $\Qc_T$ with
                $\Qc_{T \setminus \set{x,y}}$ and return $\bot$.
            \end{itemize}
        \item If $\set{x,y} \cap T = \emptyset$:
           \begin{itemize}
                \item[] Leave $\Qc_T$ unchanged and return $\bot$.
           \end{itemize}
     \end{itemize}
\end{itemize}

We note that, because of the no-cloning theorem~\cite[p.~79]{Wil17}, it is not
possible to copy the sketch, or to check whether it contains some element
without potentially destroying the sketch. 

It may be unclear how the output of the query operations above may be construed
as \yes{} or \no{} answers to queries.  Any query that returns an answer
besides $\bot$ is considered a success, and destroys the sketch.  A successful
answer for $\queryone$ is $\in$, which is interpreted as \yes{}.  The
$\querypair$ operation returns numerical values upon success.  By designing
algorithms that consider the numerical values in \emph{expectation}, we may
interpret $+1$ as \yes{} and $0$ as \no{}.  This is because when
$\querypair(x,y,\Qc_T)$ succeeds and $\{x,y\} \subseteq T$, then it always
returns $+1$.  If $\querypair(x,y,\Qc_T)$ succeeds and $|\{x,y\} \cap T| = 1$,
the values $\pm 1$ are returned with equal probability, and so they cancel out
to 0 in expectation, which we interpret as \no{}.  This will suffice for many
applications.

We will make frequent use of an additional operation that relies only on the
basic sketch operations defined above.  We will want to add elements to $T$
during the course of the sketch.  If we have a bound $m$ on the maximum number
of elements we will ever want to add, then we add $m$ extra ``dummy'' elements
$\{d_1,\ldots,d_m\}$ to the universe $U$, and include those elements in the set
$T$ used when we initialize the sketch with $\create(T)$.  The $t\nth$ time we
want to add a new element $x$ to $T$, we perform $\update(\pi,T)$ taking $\pi$
to be the transposition that swaps $d_t$ and $x$. These dummy elements are not
quite ``free''---the size of the set $T$  will depend on $m$, and this will
impact the outcome probabilities for query operations on $\Qc_T$.

One more useful property of this sketch, which we will prove in the following
section, follows as direct consequence of the definitions above.
\begin{restatable}{lemma}{reordering}
\label{lm:reordering}
For any sequence of $\queryone$ and $\querypair$ operations applied to $\Qc_T$, interleaved in any
way with $\update$ operations, there is a unique $T'$ such that the sketch will
become $\Qc_{T'}$ after these operations \emph{if none of them destroy the
sketch}. Moreover, the probability that none of the operations destroy the
sketch is $\frac{\abs{T'}}{\abs{T}}$.
\end{restatable}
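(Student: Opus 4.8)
The plan is to separate the two assertions: existence/uniqueness of $T'$ follows by tracking a purely deterministic ``shadow'' of the set through the operation sequence, and the probability bound then falls out as a telescoping product, one factor per operation.

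First I would pin down $T'$. The key point is that, \emph{conditioned on an operation not destroying the sketch}, its effect on the underlying set is deterministic: $\update(\pi)$ sends $\Qc_S$ to $\Qc_{\pi(S)}$; $\queryone(x)$ sends $\Qc_S$ to $\Qc_{S \setminus \set{x}}$ — and this description is literally correct even when $x \notin S$, since then the sketch is ``left unchanged'' and $S \setminus \set{x} = S$; and $\querypair(x,y)$ sends $\Qc_S$ to $\Qc_{S \setminus \set{x,y}}$, again correct in each of its three sub-cases, including the one with $\set{x,y} \cap S = \emptyset$. So, writing the operation sequence as $O_1,\dots,O_k$, setting $S_0 = T$, and letting $S_i$ be the image of $S_{i-1}$ under the deterministic map attached to $O_i$, I obtain a well-defined sequence $T = S_0, S_1, \dots, S_k$, and I put $T' := S_k$. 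A short induction on $i$ shows that, conditioned on none of $O_1,\dots,O_i$ destroying the sketch, the state is \emph{exactly} $\Qc_{S_i}$ with no residual randomness (each surviving branch of $O_i$ leaves a determined set); this gives both that the sketch becomes $\Qc_{T'}$ on the all-survive event and that $T'$ is the unique set with this property.

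Next I would compute the survival probability by the chain rule. Let $p_i$ be the probability that $O_i$ fails to destroy the sketch given that $O_1,\dots,O_{i-1}$ did not, so the pre-$O_i$ state is $\Qc_{S_{i-1}}$. Going through the cases: for $\update$, $p_i = 1$ and $\abs{S_i} = \abs{S_{i-1}}$; for $\queryone(x)$, either $x \in S_{i-1}$ with $p_i = 1 - 1/\abs{S_{i-1}}$ and $\abs{S_i} = \abs{S_{i-1}} - 1$, or $x \notin S_{i-1}$ with $p_i = 1$ and $\abs{S_i} = \abs{S_{i-1}}$; and for $\querypair(x,y)$ the destroy probability is $2/\abs{S_{i-1}}$, $1/\abs{S_{i-1}}$, or $0$ according as $\abs{\set{x,y} \cap S_{i-1}}$ is $2$, $1$, or $0$ — which is exactly the amount by which $\abs{S_{i-1}}$ shrinks to give $\abs{S_i}$. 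In every case this collapses to the single identity $p_i = \abs{S_i}/\abs{S_{i-1}}$, and a division by $\abs{S_{i-1}}$ occurs only when a queried element actually lies in $S_{i-1}$, so $\abs{S_{i-1}} \geq 1$ there. Multiplying and telescoping,
\[
\prob{\text{no operation destroys } \Qc_T} \;=\; \prod_{i=1}^k p_i \;=\; \prod_{i=1}^k \frac{\abs{S_i}}{\abs{S_{i-1}}} \;=\; \frac{\abs{S_k}}{\abs{S_0}} \;=\; \frac{\abs{T'}}{\abs{T}},
\]
using $\abs{T} \geq 1$ (the degenerate case $T = \emptyset$, where every query returns $\bot$ trivially, is excluded or read as a vacuous $0/0$).

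There is no real obstacle here; the only things to be careful about are (i) observing that conditioning on survival removes \emph{all} randomness from the sketch state, so it is always a ``pure'' $\Qc_S$, making the induction and the chain rule clean, and (ii) checking that the ``leave unchanged'' branches of $\queryone$ and $\querypair$ also satisfy $p_i = \abs{S_i}/\abs{S_{i-1}}$, where it degenerates to $1 = 1$. Getting the deterministic descriptions $\queryone(x): S \mapsto S \setminus \set{x}$ and $\querypair(x,y): S \mapsto S \setminus \set{x,y}$ to hold verbatim in \emph{all} sub-cases is precisely what makes the whole argument collapse to one telescoping line.
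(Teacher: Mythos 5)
Your proposal is correct and is essentially the paper's own argument: the paper phrases it as an induction on the number of query operations, peeling off the last one, while you unroll the same identity $p_i = \abs{S_i}/\abs{S_{i-1}}$ into a forward telescoping product, but the key observations (deterministic set evolution conditioned on survival, and per-operation survival probability equal to the ratio of set sizes in every sub-case) are identical. No gaps.
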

Note that in particular, this implies that if we perform any sequence of
$\queryone$ and $\querypair$ operations on \emph{disjoint} singletons and pairs,
the distribution on outcomes is independent of the ordering of the operations,
as the probability of each non-$\bot$ outcome of each query is proportional to
the size of the set underlying the query at the time the query is made. This
means that when we conduct a batch of disjoint queries it will not be necessary
to worry about the order in which they are made, simplifying the description of
our algorithms.

\section{Implementing the Pair Sketch}
\label{sec:construction}
In this section we will show how to implement the sketch described in
Section~\ref{sec:description}.  This section is self-contained, presented in an elementary 
manner, and does not require any quantum computing background.  However, this section
may be skipped, since the quantum streaming algorithms in the sequel are purely classical beyond
using the quantum pair sketch in a black-box manner that does not require understanding its implementation.
\begin{theorem}
\label{thm:implementability}
A quantum algorithm with $\log{\abs{U}}$ qubits can implement the sketch $\Qc$
described in Section~\ref{sec:description}.
\end{theorem}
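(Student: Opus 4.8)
The plan is to encode the set $T \subseteq U = \set{0,1}^m$ as the uniform superposition over its elements, $\ket{T} = \frac{1}{\sqrt{\abs{T}}}\sum_{x \in T}\ket{x}$, on $m$ qubits. I will then show that each of the four sketch operations can be realized by a short quantum circuit acting on this state (possibly together with a constant number of ancilla qubits that are returned to a fixed state or measured), using only the following postulates, which I would state explicitly at the start of the section: (i) a classical permutation $\pi$ on basis states extends to a unitary $U_\pi$; (ii) unitaries can be controlled on the value of a register; (iii) measuring a register in the computational basis yields outcome $b$ with probability equal to the squared norm of the projection onto $\ket{b}$, and collapses the state accordingly; and (iv) reversible classical computation on basis states is unitary. $\create(T)$ is just the classical preparation of $\ket{T}$ (done in preprocessing, so its cost is not counted against the qubit budget), and $\update(\pi,\Qc_T)$ is the application of $U_\pi$, which sends $\ket{T}$ to $\ket{\pi(T)}$ exactly because $\pi$ is a bijection — this is immediate and will be the warm-up.

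The substance is in $\queryone$ and $\querypair$. For $\queryone(x,\Qc_T)$, I would adjoin one ancilla qubit initialized to $\ket{0}$, apply the reversible map $\ket{y}\ket{0} \mapsto \ket{y}\ket{[y = x]}$ (a controlled-NOT on the ancilla conditioned on the register equaling the fixed string $x$), and then measure the ancilla. When $x \notin T$ the ancilla is $\ket{0}$ deterministically, the measurement returns $0$, and uncomputing restores $\ket{T}$ unchanged — matching the ``$x \notin T$'' branch. When $x \in T$, the state before measurement is $\frac{1}{\sqrt{\abs{T}}}\big(\sum_{y \in T \setminus \set{x}}\ket{y}\ket{0} + \ket{x}\ket{1}\big)$, so by postulate (iii) the ancilla reads $1$ with probability exactly $1/\abs{T}$, in which case the register collapses to $\ket{x}$ and we declare the sketch destroyed and return $\in$; with probability $1 - 1/\abs{T}$ it reads $0$ and the register collapses to $\frac{1}{\sqrt{\abs{T}-1}}\sum_{y \in T \setminus \set{x}}\ket{y} = \ket{T \setminus \set{x}}$, so we return $\bot$. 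This reproduces the specification exactly. For $\querypair(x,y,\Qc_T)$ I would instead compute, into a small ancilla register, an indicator that the basis state lies in $\set{x,y}$ (taking care that $x \ne y$), measure it, and branch: the ``both in $T$'' case gives projection weight $2/\abs{T}$ onto $\mathrm{span}\{\ket{x},\ket{y}\}$; the ``exactly one in $T$'' case gives weight $1/\abs{T}$. The $\pm 1$ sign in the one-in-$T$ case I would produce by an extra fair coin flip (classically, or by measuring a freshly prepared $\ket{+}$ ancilla) — the specification says it is uniform and carries no correlation, so nothing quantum is needed there; in the two-in-$T$ case we always return $+1$. In every branch, the post-measurement register state is exactly the claimed $\ket{T'}$ (or the sketch is destroyed), which I will verify by the same collapse computation.

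The main obstacle, and the one place where care is genuinely required, is establishing \emph{composability}: the sketch is used across many interleaved $\update$, $\queryone$, and $\querypair$ calls, so I must argue that after each non-destructive operation the register is \emph{exactly} in the state $\ket{T'}$ for the set $T'$ prescribed by the specification (no residual entanglement with discarded ancillas, no spurious relative phases), so that the next operation behaves correctly. Concretely I would prove by induction on the operation sequence that the invariant ``the $m$-qubit register holds $\ket{T'}$ and all ancillas are disentangled'' is maintained; the key sub-claims are that the indicator-computation ancillas are either measured (and their outcome used) or properly uncomputed, and that the controlled-permutation and controlled-NOT steps introduce no phases on the surviving branch. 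This also immediately yields Lemma~\ref{lm:reordering}: the surviving set $T'$ is determined by peeling off, in order, the classical set-update induced by each operation ($\pi$ applied, or $x$/$\set{x,y}$ removed on a $\bot$ outcome), and the probability of surviving all operations telescopes as a product of the conditional survival probabilities $\frac{\abs{T_{i}}}{\abs{T_{i-1}}}$ (with $\abs{T_i} = \abs{T_{i-1}}$ across $\update$s and across queries to elements not present), collapsing to $\frac{\abs{T'}}{\abs{T}}$. I would close by noting the qubit count is exactly $m = \log\abs{U}$ since every ancilla is returned or consumed, giving the bound in the theorem statement.
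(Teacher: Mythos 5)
Your encoding of $T$ as the uniform superposition, your treatment of $\create$, $\update$, and $\queryone$, and your composability discussion all match the paper's construction (the paper phrases $\queryone$ directly as a projective measurement with $P_{\in} = \ket{x}\bra{x}$ and $P_\bot = I - P_{\in}$ rather than via an ancilla, but these are equivalent). However, there is a genuine gap in your implementation of $\querypair$, and it sits exactly at the one point where something essentially quantum is required. You propose to measure only the indicator of membership in $\set{x,y}$ (i.e., the two-outcome measurement $\set{P, I - P}$ with $P$ the projector onto $\mathrm{span}\set{\ket{x},\ket{y}}$), and then to generate the $\pm 1$ sign classically: ``always $+1$'' in the two-in-$T$ case and a fair coin flip in the one-in-$T$ case. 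The algorithm cannot do this, because it does not know which case it is in --- that is precisely the information the sketch is unable to reveal. If it always outputs $+1$ it violates the spec in the one-in case (and destroys the cancellation-in-expectation property that every application in the paper relies on); if it always flips a coin it violates the spec in the two-in case. Nor can the case be determined after the coarse measurement succeeds, since the post-measurement states $\frac{1}{\sqrt 2}(\ket{x}+\ket{y})$ and $\ket{x}$ (or $\ket{y}$) are non-orthogonal.

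The fix, which is the paper's construction, is to make the sign itself a measurement outcome: measure with the rank-one projectors $P_{+1} = \frac{1}{2}(\ket{x}+\ket{y})(\bra{x}+\bra{y})$ and $P_{-1} = \frac{1}{2}(\ket{x}-\ket{y})(\bra{x}-\bra{y})$ together with $P_\bot = I - P_{+1} - P_{-1}$. When both $x,y \in T$ their amplitudes in $\ket{\psi_T}$ are equal, so the component along $\ket{x}-\ket{y}$ vanishes by interference and one gets $+1$ with probability $2/\abs{T}$ and never $-1$; when exactly one is in $T$, the state has equal overlap $1/(2\abs{T})$ with each of $\ket{x}\pm\ket{y}$, yielding the uniformly random sign. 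The apparent ``coin flip'' is thus an artifact of the measurement statistics, not an auxiliary randomization, and this distinction cannot be reproduced by any classical post-processing of your coarser measurement.
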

\subsection{Quantum Preliminaries}
We start by describing three facts about quantum computing that suffice to
construct the sketch. For a more complete discussion, see e.g.~\cite{NC10}.
In order to make this section more accessible, we use a running example in the notation of traditional linear algebra along with the braket notation.

First, we describe a superposition over qubit states. We will follow the
standard in quantum information of using $\ket{\psi}$ to denote a (column)
vector, and $\bra{\psi}$ to denote its conjugate transpose (i.e., $\bra{\psi} = \ket{\psi}^\dagger$). For a set $S$ and
complex vector space $\Cbb^S$, we write the standard basis elements of the
space as $\ket{s}$ for each $s \in S$.
\begin{fact}[State]
\label{fact:superposition}
An $m$-qubit \emph{state} $\ket{\psi}$ is a unit vector in
$\Cbb^{2^m}$. 
\end{fact}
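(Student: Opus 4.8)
The plan is to recognize that Fact~\ref{fact:superposition} is not a theorem derived from more primitive mathematical facts but a \emph{postulate} of quantum mechanics — the ``state postulate'' — which the paper explicitly adopts as one of the ``three facts about quantum computing that suffice to construct the sketch.'' Accordingly, the ``proof'' consists of stating the postulate precisely and, for the reader who wants motivation rather than a bare axiom, exhibiting how the $m$-qubit case is assembled from the single-qubit case via the tensor-product structure of composite quantum systems.

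First I would recall the single-qubit postulate: the state of one qubit is a unit vector in $\Cbb^2$, written $\alpha\ket{0} + \beta\ket{1}$ with $\abs{\alpha}^2 + \abs{\beta}^2 = 1$. The normalization is not arbitrary — it is forced by the Born rule, which assigns to a standard-basis measurement the outcome $s$ with probability equal to the squared magnitude of the corresponding amplitude, and these probabilities must sum to $1$. Next I would invoke the composite-systems postulate: the state space of a system built from subsystems is the tensor product of the subsystems' state spaces. For $m$ qubits this is $\Cbb^2 \otimes \cdots \otimes \Cbb^2$ ($m$ factors), canonically isomorphic to $\Cbb^{2^m}$ through the identification of the product basis vectors $\ket{b_1}\otimes\cdots\otimes\ket{b_m}$ with $\ket{b_1 b_2 \cdots b_m}$ for $b_i \in \bool$, i.e.\ with the standard basis vector $\ket{s}$ indexed by $s \in \bool^m$; this is exactly the indexing $S = \bool^m$, $\Cbb^S = \Cbb^{2^m}$ used in the paper. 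Since the tensor product of finite-dimensional Hilbert spaces carries the induced inner product, every (possibly entangled) unit vector of $\Cbb^{2^m}$ is a legitimate $m$-qubit state and conversely, with normalization again being precisely the condition that the Born-rule probabilities $\abs{\braket{s}{\psi}}^2$ over $s \in \bool^m$ sum to $1$.

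The main obstacle is therefore conceptual rather than technical: there is essentially nothing to prove, because the statement \emph{is} an axiom (or, read another way, the definition of ``$m$-qubit state''). The only care required is (i) to fix the standard-basis indexing $S = \bool^m$ cleanly so that $\Cbb^S$ and $\Cbb^{2^m}$ are literally the same space, and (ii) to make explicit that the paper takes this as a black-box postulate, so a reader with no quantum background can proceed downstream — to $\create$, $\update$, $\queryone$, $\querypair$, and Lemma~\ref{lm:reordering} — without reconstructing its physical justification. I would then simply state the Fact and move on to the remaining two postulates (measurement and unitary evolution).
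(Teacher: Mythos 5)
Your reading is correct and matches the paper exactly: Fact~\ref{fact:superposition} is stated as an unproved quantum postulate (one of three, with the reader referred to~\cite{NC10}), so there is nothing to prove and the paper offers only an illustrative example rather than an argument. Your optional tensor-product/Born-rule motivation goes beyond what the paper provides but is consistent with it and does not change the essential stance of treating the statement as an axiom.
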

As the name implies, such a state can be stored using $m$ quantum bits
(qubits). Note that this promises nothing about how we can interact with a
state.  Quantum mechanics does not allow for random access to the entries of a
state $\ket{\psi}$, and extracting information from a state is typically a
destructive process.

\examplebox{
	Let $m = 2$. Then $\ket{\psi}$ represents some column vector 
	\begin{align*}
    \ket{\psi} & = \begin{bmatrix}
    		\alpha_{1} \\
        \alpha_{2} \\
        \alpha_{3} \\
        \alpha_{4} \\
    \end{bmatrix} \in \Cbb^{4}, \norm{\ket{\psi}}_2 = 1,
  \end{align*}

which requires $2$ qubits to store. The basis elements are denoted as 

	\begin{align*}
    \ket{1} & = \begin{bmatrix}
    		1 \\
        0 \\
        0 \\
        0 \\
    \end{bmatrix} 
    & \ket{2} & = \begin{bmatrix}
    		0 \\
        1 \\
        0 \\
        0 \\
    \end{bmatrix}
    & \ket{3} & = \begin{bmatrix}
    		0 \\
        0 \\
        1 \\
        0 \\
    \end{bmatrix}
    & \ket{4} & = \begin{bmatrix}
    		0 \\
        0 \\
        0 \\
        1 \\
    \end{bmatrix}
  \end{align*}

}

Next we will describe how states can be computed.

\begin{fact}[Quantum Computation]
\label{fact:computation}
A \emph{quantum computation} starts with a fixed $m$-qubit state, typically
$\ket{1}$ as defined in the example above, and applies a finite sequence of
unitary matrices on $\Cbb^{2^m}$ (linear transformations that preserve inner
products and therefore vector lengths) to produce a state $\ket{\psi}$: 
\begin{equation*}
\ket{\psi} = U_l U_{l-1} \cdots U_1 \ket{1}.
\end{equation*}  
Each $U_i$ is selected from a finite set of \emph{gates}, and $l$ is the length or running time of the computation.
\end{fact}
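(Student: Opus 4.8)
The statement is a \emph{postulate} of the quantum computational model rather than a theorem to be derived from earlier results, so strictly speaking there is nothing to prove: it stipulates, by fiat, what a quantum computation \emph{is}. Accordingly, the plan is only to record the single internal consistency check that the phrasing quietly invokes, namely that the object it produces, $\ket{\psi} = U_l U_{l-1} \cdots U_1 \ket{1}$, is a legitimate $m$-qubit state in the sense of Fact~\ref{fact:superposition} --- a unit vector in $\Cbb^{2^m}$ --- so that ``produce a state $\ket{\psi}$'' is justified.

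First I would observe that the parenthetical ``preserve inner products and therefore vector lengths'' is immediate: for any vector one has $\norm{\ket{v}}_2^2 = \braket{v}{v}$, so a linear map preserving the inner product automatically preserves the norm. Next, the composition of unitaries is unitary --- if $U$ and $V$ each preserve inner products then so does $UV$, since $\braket{UVx}{UVy} = \braket{Vx}{Vy} = \braket{x}{y}$ --- so the overall map $U_l U_{l-1} \cdots U_1$ is unitary on $\Cbb^{2^m}$. Finally, the fixed starting state $\ket{1}$ is a standard basis vector, hence a unit vector, so its image $\ket{\psi}$ under a unitary is again a unit vector in $\Cbb^{2^m}$; by Fact~\ref{fact:superposition} this is precisely the data of an $m$-qubit state. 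That closes the loop: a quantum computation as defined does output a valid state, and $l$ being finite is part of the definition, so there is nothing further to check there.

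The only ``obstacle'' worth flagging is conceptual rather than technical: one might ask why \emph{this} is the right definition --- why unitary evolution, why a finite gate set, why initialize at $\ket{1}$. These are answered by physics (closed-system quantum evolution is unitary) and by the engineering reality that a device implements only finitely many operations; they are modeling choices, not mathematical assertions, and so lie outside the scope of any proof. Everything genuinely mathematical in the statement is the trivial ``unitaries compose to a unitary and map unit vectors to unit vectors'' observation above.
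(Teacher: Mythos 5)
Your reading matches the paper's: Fact~\ref{fact:computation} is stated as one of the three self-contained quantum postulates and the paper offers no proof of it, so treating it as a definition-by-fiat and only checking internal consistency (unitaries preserve norms, compose to a unitary, and map the unit vector $\ket{1}$ to a unit vector, i.e.\ a valid state per Fact~\ref{fact:superposition}) is exactly the right stance, and your check is correct.
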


  Since $U = U_l \cdots U_1$ is a unitary matrix, any quantum computation can be viewed as applying a unitary transformation to $\ket{1}$.  However, since there are any infinite number of unitaries, arbitrary unitaries cannot be implemented by quantum computation in finite time.  By the same token, arbitrary states cannot be produced in finite time.  As in the classical case, quantum computation that is polynomial in length with respect to the number of qubits is of particular interest as a proxy for tractable computation on a quantum computer.   

Finally, we will describe how we can extract classical information from a state using \emph{projective}\footnote{This is only one way of formalizing quantum measurement, but it will suffice for our purposes.}  measurements.
\begin{fact}[Projective Measurement]
\label{fact:measurement}

Let $(P_i)_{i \in I}$ be a collection of orthogonal projectors on $\Cbb^{2^m}$ (i.e., $P_i^2 = P_i = P_i^\dagger$), such that $\sum_{i \in I}
P_i = I_{2^m}$.  We interpret $i \in I$ as probabilistic outcomes, which we will call \emph{measurement outcomes}.  We can \emph{measure} a state $\ket{\psi}$ with these
projectors, with the following distribution on outcomes:
\begin{itemize}
\item[] For each $i \in I$, with probability $\norm{P_i\ket{\psi}}_2^2$, return
the measurement outcome $i$, and replace the state $\ket{\psi}$ with $P_i
\ket{\psi}/\norm{P_i\ket{\psi}}_2$.
\end{itemize}
\end{fact}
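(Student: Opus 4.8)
Fact~\ref{fact:measurement} is a \emph{postulate} of quantum mechanics, not a consequence of Facts~\ref{fact:superposition}--\ref{fact:computation}, so strictly speaking there is nothing to derive from within the framework built so far, and the honest course is to adopt it as given. In place of a proof I would supply two things: a short check that the stated rule is \emph{well-posed}, and a remark situating it as a repackaging of the single most elementary measurement rule --- ``measure in an orthonormal basis'' --- so that a reader willing to accept only that minimal rule can recover the form stated here.

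For well-posedness, the one computation that matters uses $P_i^\dagger P_i = P_i^2 = P_i$ together with $\sum_{i\in I}P_i = I_{2^m}$:
\[
  \sum_{i \in I} \norm{P_i\ket{\psi}}_2^2 \;=\; \sum_{i \in I} \bra{\psi} P_i \ket{\psi} \;=\; \bra{\psi}\paren*{\sum_{i \in I} P_i}\ket{\psi} \;=\; \braket{\psi}{\psi} \;=\; 1,
\]
so the outcome probabilities are nonnegative and sum to $1$; and for each $i$ with $\norm{P_i\ket{\psi}}_2>0$ the vector $P_i\ket{\psi}/\norm{P_i\ket{\psi}}_2$ is a unit vector, hence a valid state by Fact~\ref{fact:superposition} (the case $\norm{P_i\ket{\psi}}_2 = 0$ has probability $0$ and may be ignored). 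I would also record the two properties that make ``measurement'' an apt name: remeasuring reproduces the first outcome with certainty (from $P_i^2 = P_i$), and orthogonality $P_iP_j = 0$ for $i\ne j$ makes states supported in distinct ranges perfectly distinguishable.

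For the reduction, take as primitive only the rule that measuring $\ket{\psi}=\sum_s\alpha_s\ket{s}$ returns $s$ with probability $\abs{\alpha_s}^2$ and collapses $\ket{\psi}$ to $\ket{s}$. The ranges $\mathrm{range}(P_i)$ decompose $\Cbb^{2^m}$ into orthogonal subspaces; measuring in an orthonormal basis adapted to this decomposition and reporting only which block $i$ the observed basis vector falls in yields block $i$ with probability $\sum_{s:\,\ket{s}\in\mathrm{range}(P_i)}\abs{\alpha_s}^2 = \norm{P_i\ket{\psi}}_2^2$ by the Pythagorean identity, so the probability part of Fact~\ref{fact:measurement} is immediate. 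The post-measurement state $P_i\ket{\psi}/\norm{P_i\ket{\psi}}_2$ is genuinely more than a coarse-grained basis measurement delivers (that collapses to a single basis vector); I would either recover it via the standard device of performing the measurement coherently on an enlarged system and discarding the ancilla, or --- more cheaply --- flag it as the residual content of the postulate, noting it is the unique normalized vector in $\mathrm{range}(P_i)$ whose description does not depend on the arbitrary choice of basis inside the block.

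The only real difficulty here is conceptual, not technical: ``proving'' a physical postulate is meaningful only relative to an even more primitive one, so I would be explicit that the above establishes nothing from nothing --- it reduces Fact~\ref{fact:measurement} to basis measurement plus the Born rule --- and that for the purposes of this paper the cleanest choice is to take the projective-measurement rule as stated, with the well-posedness check certifying internal consistency and the reduction certifying agreement with the familiar special case.
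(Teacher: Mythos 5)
You have correctly recognized that the paper states this as one of its three quantum postulates (a \emph{Fact} cited to~\cite{NC10}) and offers no proof, so taking it as given is exactly the paper's treatment. Your supplementary well-posedness check (using $P_i^\dagger P_i = P_i$ and $\sum_{i\in I}P_i = I_{2^m}$ to show the outcome probabilities sum to $1$ and the post-measurement states are unit vectors) is correct and consistent with Fact~\ref{fact:superposition}, and your honest flagging that the post-measurement state is genuinely part of the postulate rather than derivable from coarse-grained basis measurement is accurate.
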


\examplebox{
		Let $\ket{\psi} = (\alpha_{1}, \alpha_{2}, \alpha_{3}, \alpha_{4})^{T} \in \Cbb^{4}, \norm{\ket{\psi}}_2 = 1$. Let $P_{(1, 2, 4)}$ be a projective operator onto the subspace generated by $\ket{1}, \ket{2}, \ket{4}$, and $P_{(3)}$ be a projective operator onto the subspace generated by $\ket{3}$: 

		\begin{align*}
    P_{(1, 2, 4)} = \ket{1}\bra{1} + \ket{2}\bra{2} + \ket{4}\bra{4} & = \begin{bmatrix}
    		1 & 0 & 0 & 0\\
        0 & 1 & 0 & 0\\
        0 & 0 & 0 & 0\\
        0 & 0 & 0 & 1\\
    \end{bmatrix} 
    & P_{(3)} = \ket{3}\bra{3} & = \begin{bmatrix}
    		0 & 0 & 0 & 0\\
        0 & 0 & 0 & 0\\
        0 & 0 & 1 & 0\\
        0 & 0 & 0 & 0\\
    \end{bmatrix} 
  \end{align*}

Note that $P_{(1, 2, 4)} + P_{(3)} = I_{4}$. Then the possible outcomes of the measurement of $\ket{\psi}$ are the following:

\begin{itemize}
	\item $(1, 2, 4)$ outcome with probability $$\norm{P_{(1, 2, 4)}\ket{\psi}}_2^2 = \norm{(\alpha_{1}, \alpha_{2}, 0, \alpha_{4})^{T}}_2^2 = |\alpha_{1}|^2 + |\alpha_{2}|^2 + |\alpha_{4}|^2\text{.}$$ 
	The state is updated to be $(\alpha_{1}, \alpha_{2}, 0, \alpha_{4})^{T} / \sqrt{|\alpha_{1}|^2 + |\alpha_{2}|^2 + |\alpha_{4}|^2}$.
	\item $(3)$ outcome with probability $$\norm{P_{(3)}\ket{\psi}}_2^2 = \norm{(0, 0, \alpha_{3}, 0)^{T}}_2^2 = |\alpha_{3}|^2 \text{.}$$ 
	The state is updated to be $ (0, 0, \alpha_{3}/|\alpha_{3}|, 0)^{T} = (0, 0, 1, 0)^{T}$.
\end{itemize}

}

\subsection{Implementation}
For any $T \subseteq U = \brac{2^m}$, we will implement the sketch $\Qc_T$ as
the superposition $\ket{\psi} = \frac{1}{\sqrt{|T|}} \sum_{t \in T} \ket{t}$ in
$\Cbb^U$.  By Fact~\ref{fact:superposition} we can store the state created by
$\create(T)$ with $m = \log{\abs{U}}$ qubits.  For the remainder of this
section, we will write $\ket{\psi_T}$ for this state.  The state $\ket{\psi_T}$
can be efficiently produced by a quantum computation, as dictated by
Fact~\ref{fact:computation} (see e.g., Lemma 1 in~\cite{fefferman2015power}
and~\cite{shukla2024efficient}).  

\examplebox{
	Let $U = \brac{4}$ and $T = \set{1, 2, 3}$. Then the state representing set $T$ is 

	\begin{align*}
    \ket{\psi_T} & = \frac{1}{\sqrt{|T|}} \sum_{t \in T} \ket{t} = \frac{1}{\sqrt{3}}\begin{bmatrix}
    		1 \\
        1 \\
        1 \\
        0 \\
    \end{bmatrix} \in \Cbb^{4}, \norm{\ket{\psi_T}}_2 = 1 \text{.}
  \end{align*}

}

Therefore, we need to prove the operations $\update$, $\queryone$, and
$\querypair$ can be realized on a quantum computer.  We start with $\update(\pi,
\Qc_T)$, recalling its definition: 
\begin{itemize}
\item $\update(\pi,\Qc_T)$: Takes as input a permutation $\pi : U \to U$ and
    a sketch $\Qc_T$, and (in-place) transforms $\Qc_T$ to $\Qc_{\pi(T)}$,
	    where $\pi(T) = \set{\pi(x) : x \in T}$.
\end{itemize}
\begin{lemma}
$\update(\pi,T)$ is realizable on a quantum computer using $m$ qubits.
\end{lemma}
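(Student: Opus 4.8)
The plan is to realize $\update(\pi,\Qc_T)$ by a single unitary that merely permutes the computational basis. Define $U_\pi$ to be the linear operator on $\Cbb^U = \Cbb^{2^m}$ determined by $U_\pi\ket{x} = \ket{\pi(x)}$ for each $x \in U$, extended linearly. In the standard basis its matrix has exactly one $1$ in every row and in every column, and zeros elsewhere, so it is a (real) permutation matrix; hence $U_\pi^\dagger U_\pi = I_{2^m}$ and $U_\pi$ is unitary. Crucially, it acts on the same $m$ qubits that store the sketch, so no extra space is introduced.

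Next I would check that $U_\pi$ performs exactly the claimed update, and in fact does so no matter which set the sketch currently encodes (which matters once $\queryone$/$\querypair$ operations have been interleaved with updates, cf.\ Lemma~\ref{lm:reordering}). For any $S \subseteq U$, applying $U_\pi$ to the representation $\ket{\psi_S} = \frac{1}{\sqrt{|S|}}\sum_{s\in S}\ket{s}$ gives
\[
U_\pi\ket{\psi_S} \;=\; \frac{1}{\sqrt{|S|}}\sum_{s\in S} U_\pi\ket{s} \;=\; \frac{1}{\sqrt{|S|}}\sum_{s\in S}\ket{\pi(s)} \;=\; \frac{1}{\sqrt{|\pi(S)|}}\sum_{s'\in\pi(S)}\ket{s'} \;=\; \ket{\psi_{\pi(S)}},
\]
where the third equality uses that $\pi$ is a bijection, so the images $\pi(s)$ are pairwise distinct and $|\pi(S)| = |S|$. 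Taking $S = T$ shows $U_\pi$ sends $\Qc_T$ to $\Qc_{\pi(T)}$ in place, as required, and the same computation for arbitrary $S$ confirms the update behaves correctly regardless of the sketch's current contents.

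Finally I would argue that $U_\pi$ is a legitimate quantum computation in the sense of Fact~\ref{fact:computation}. Since $U_\pi$ only permutes standard basis states it is a classical reversible map on $\set{0,1}^m$, so it can be synthesized from a standard universal reversible gate set acting on those same $m$ qubits; moreover every permutation factors into transpositions, and a transposition $\ket{a}\leftrightarrow\ket{b}$ of two basis strings is elementary — after an invertible affine-over-$\Fbb_2$ relabeling of the basis (itself a composition of $\mathrm{NOT}$ and $\mathrm{CNOT}$ gates) it becomes a single multiply-controlled $\mathrm{NOT}$ — and a single transposition (swapping a dummy element with a new element) is the only form of $\pi$ our streaming applications ever require. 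The one point that deserves care is that this synthesis introduces no ancilla qubits beyond the $m$ that store $\ket{\psi_T}$; this is the step I would be most careful about, but it is a standard fact about reversible circuits and is entirely separate from the correctness of the update, which is already settled by the displayed identity above.
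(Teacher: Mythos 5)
Your proof is correct and follows essentially the same route as the paper's: define the basis-permuting operator $U_\pi$, observe it is unitary, and verify by linearity that it maps $\ket{\psi_T}$ to $\ket{\psi_{\pi(T)}}$. The additional discussion of gate synthesis and ancilla-freeness is a welcome elaboration of what the paper delegates to Fact~\ref{fact:computation}, but it does not change the underlying argument.
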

\begin{proof}
Let $V$ be the linear transformation defined by acting on each standard basis
element $\ket{u}$, for $u \in U$, as $V\ket{u} = \ket{\pi(u)}$. Note that this
is unitary, as it preserves vector lengths. Therefore, by
Fact~\ref{fact:computation}, since $\ket{\psi_T}$ is produced by a quantum
computation, a quantum computer may apply $V$ to $\ket{\psi_T}$, resulting in
\[
V\ket{\psi_T} = \frac{1}{\sqrt{|T|}} \sum_{t \in T} V\ket{t} = \frac{1}{\sqrt{|T|}} \sum_{t \in
T} \ket{\pi(t)} = \ket{\psi_{\pi(T)}}
\]
and so $\Qc_{T}$ has been replaced with $\Qc_{\pi(T)}$.
\end{proof}

The $\update$ operations  we will employ for the quantum streaming algorithms
described in this work are each implementable using constant time on a quantum
computer.

\examplebox{
	Let $\pi: [4] \to [4]$ be defined as $\pi(1) = 2, \pi(2) = 3, \pi(3) = 4, \pi(4) = 1$. Then the linear transformation corresponding to $\pi$ is

	\begin{align*}
    U & = \begin{bmatrix}
    		0 & 0 & 0 & 1\\
        1 & 0 & 0 & 0\\
        0 & 1 & 0 & 0\\
        0 & 0 & 1 & 0\\
    \end{bmatrix} 
  \end{align*}

  and, if applied to our $\ket{\psi_T} = \frac{1}{\sqrt{3}} (1, 1, 1, 0)^T$, it will update the state to 

\begin{align*}
    U\ket{\psi_T} & = \frac{1}{\sqrt{3}}\begin{bmatrix}
    		0 & 0 & 0 & 1\\
        1 & 0 & 0 & 0\\
        0 & 1 & 0 & 0\\
        0 & 0 & 1 & 0\\
    \end{bmatrix}
    \begin{bmatrix}
    		1\\
        1\\
        1\\
        0\\
    \end{bmatrix} 
    = \frac{1}{\sqrt{3}} \begin{bmatrix}
    		0\\
        1\\
        1\\
        1\\
    \end{bmatrix}
    = \ket{\psi_{\set{2, 3, 4}}} = \ket{\psi_{\pi(T)}} \text{.}
  \end{align*}

}

\noindent
Next, we show how to implement $\queryone(x, \Qc_T)$:
\begin{itemize}
\item $\queryone(x,\Qc_T)$. Takes as input an element $x \in U$ and a
    sketch $\Qc_T$, and probabilistically tests whether $x \in T$. Returns an
    element of $\set{\in, \bot}$ and modifies the sketch as follows:
    \begin{itemize}
      \item If $x \in T$:
        \begin{itemize}
          \item[]With probability $1/\abs{T}$ destroy $\Qc_T$ and return $\in$.
          \item[]With probability $1 - 1/\abs{T}$ replace $\Qc_T$ with
            $\Qc_{T \setminus \set{x}}$ and return $\bot$.
        \end{itemize}
      \item If $x \not\in T$:
        \begin{itemize}
            \item[] Leave $\Qc_T$ unchanged and return $\bot$.
        \end{itemize}
    \end{itemize}
\end{itemize}

\begin{lemma}
$\queryone(x,\Qc_T)$ is realizable on a quantum computer using $m$ qubits.
\end{lemma}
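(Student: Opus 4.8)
The plan is to realize $\queryone(x,\Qc_T)$ as a single projective measurement of the state $\ket{\psi_T}=\frac{1}{\sqrt{\abs{T}}}\sum_{t\in T}\ket{t}$ with the two orthogonal projectors $P_{\in}=\ket{x}\bra{x}$ and $P_{\bot}=I_{2^m}-\ket{x}\bra{x}$. These satisfy $P_{\in}^2=P_{\in}=P_{\in}^\dagger$, likewise for $P_{\bot}$, they are mutually orthogonal, and $P_{\in}+P_{\bot}=I_{2^m}$, so Fact~\ref{fact:measurement} lets us measure $\ket{\psi_T}$ with them directly on the same $m$ qubits. The query then returns $\in$ on outcome $P_{\in}$ — at which point we declare the sketch destroyed and discard the residual state — and returns $\bot$ on outcome $P_{\bot}$.

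To check correctness I would simply read off the two cases from Fact~\ref{fact:measurement}. If $x\in T$, then $P_{\in}\ket{\psi_T}=\frac{1}{\sqrt{\abs{T}}}\ket{x}$, so the answer $\in$ occurs with probability $\norm{P_{\in}\ket{\psi_T}}_2^2=1/\abs{T}$ as required, and the complementary outcome $P_{\bot}$ occurs with probability $1-1/\abs{T}$, leaving the normalized state $P_{\bot}\ket{\psi_T}/\norm{P_{\bot}\ket{\psi_T}}_2=\frac{1}{\sqrt{\abs{T}-1}}\sum_{t\in T\setminus\set{x}}\ket{t}=\ket{\psi_{T\setminus\set{x}}}$, which is exactly $\Qc_{T\setminus\set{x}}$; we output $\bot$. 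If $x\notin T$, then $\braket{x}{\psi_T}=0$, so $P_{\in}\ket{\psi_T}=0$ and $P_{\bot}\ket{\psi_T}=\ket{\psi_T}$; the outcome is $\bot$ with probability $1$, the state is unchanged, and we output $\bot$. Comparing with the specification in Section~\ref{sec:description}, all three branches are reproduced exactly.

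The only point that needs care is the qubit budget claimed in the lemma. A two-outcome measurement that must preserve a superposition (rather than collapse to a single computational-basis vector) is \emph{not} a coarse-graining of the full standard-basis measurement, so the naive implementation would couple in a fresh ancilla (flip it iff the $m$ registers hold $x$, then measure and discard it), costing $m+1$ qubits. I avoid this by taking Fact~\ref{fact:measurement} at face value: it grants arbitrary projective measurements on $\Cbb^{2^m}$ as a primitive, and $\{P_{\in},P_{\bot}\}$ is one such, so $m$ qubits suffice. Beyond making this modeling observation explicit, the proof is just the amplitude bookkeeping above; the same projector-pair idea, with a slightly richer family of projectors built from $\ket{x}\bra{x}+\ket{y}\bra{y}$ to also generate the $\pm 1$ sign, is what I would use for the subsequent $\querypair$ lemma.
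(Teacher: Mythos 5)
Your proof is correct and follows essentially the same route as the paper: the same two projectors $P_{\in}=\ket{x}\bra{x}$ and $P_{\bot}=I-P_{\in}$, the same probability computation via Fact~\ref{fact:measurement}, and the same post-measurement state analysis. Your extra remark about the qubit budget is a reasonable clarification but not a departure; the paper likewise takes projective measurement as a primitive on the $m$ qubits.
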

\begin{proof}
We define projectors $P_{\in}$, $P_{\bot}$ as: 
\begin{align*}
P_{\in} &= \ket{x}\bra{x}\\
P_{\bot} &= I - P_{\in}
\end{align*}
$P_{\in}$ projects onto the basis vector $\ket{x}$, and $P_{\in} + P_{\bot} =
I$, so by Fact~\ref{fact:measurement} we can measure with these projectors. If
$x \in T$, $\norm{P_{\in}\ket{\psi_T}}^2 = \norm*{\frac{\ket{x}}{\sqrt{T}}}^2 =
\frac{1}{T}$ and so we get the outcomes:
\begin{itemize}
\item With probability $\frac{1}{T}$, return $\in$. As we discard the sketch in this case it does not matter what happens to the state.
\item With probability $1 - \frac{1}{T}$, return $\bot$ and $\ket{\psi_T}$ is replaced with $P_{\bot} \ket{\psi_T}$ normalized to length 1, so as \[
P_{\bot} \ket{\psi_T} = \frac{1}{\sqrt{\abs{T}}}\sum_{t \in T \setminus\set{x}} \ket{t} = \frac{\sqrt{\abs{T} - 1}}{\sqrt{\abs{T}}}\ket{\psi_{T \setminus \set{x}}}
\] 
and so after normalization the state is $\ket{\psi_{T \setminus \set{x}}}$ and therefore $\Qc_T$ has been replaced by $\Qc_{T \setminus \set{x}}$.
\end{itemize}
So when $x \in T$ we have the desired distribution on outcomes. When $x \not\in
T$, $\norm{P_{\in}\ket{\psi_T}}^2 = 0$, and so we always return $\bot$ and
$P_\bot\ket{\psi_T} = \ket{\psi_T}$, so the sketch state is always unchanged,
also as desired.
\end{proof}

\examplebox{
	The current state of the sketch in this running example is $\ket{\psi_T} = \frac{1}{\sqrt{3}} (0, 1, 1, 1)^T$, and it stores $T = \set{2, 3, 4}$. 

	Let the query be $\queryone(1, \Qc_T)$. The corresponding projectors are 

		\begin{align*}
    P_{\in} = \ket{1}\bra{1} & = \begin{bmatrix}
    		1 & 0 & 0 & 0\\
        0 & 0 & 0 & 0\\
        0 & 0 & 0 & 0\\
        0 & 0 & 0 & 0\\
    \end{bmatrix} 
    & 
        P_{\bot}  = \ket{2}\bra{2} + \ket{3}\bra{3} + \ket{4}\bra{4} & = \begin{bmatrix}
    		0 & 0 & 0 & 0\\
        0 & 1 & 0 & 0\\
        0 & 0 & 1 & 0\\
        0 & 0 & 0 & 1\\
    \end{bmatrix} 
  \end{align*}

  Then the possible outcomes are 
  \begin{itemize}
  	\item $\in$
  	\begin{itemize}
  		\item[] With probability $\norm{P_{\in}\ket{\psi_{T}}}_2^2 = \norm{(0, 0, 0, 0)^{T}}_2^2 = 0$, and the sketch is destroyed.
  	\end{itemize}
  	\item $\bot$
  	\begin{itemize}
  		\item[] With probability $\norm{P_{\bot}\ket{\psi_{T}}}_2^2 = \norm{\frac{1}{\sqrt{3}}(0, 1, 1, 1)^{T}}_2^2 = 1$, and the updated state is $\frac{1}{\sqrt{3}}(0, 1, 1, 1)^{T} = \ket{\psi_{T\setminus \set{1}}} = \ket{\psi_{T}}$.
  	\end{itemize}
  \end{itemize}
So our state remains unchanged.  If the next query is $\queryone(4, \Qc_T)$, then we have:
		\begin{align*}
    P_{\in} = \ket{4}\bra{4} &     & 
        P_{\bot}  = \ket{1}\bra{1} + \ket{2}\bra{2} + \ket{3}\bra{3} & 
  \end{align*}
The possible outcomes are
\begin{itemize}
  	\item $\in$
  	\begin{itemize}
  		\item[] With probability $\norm{P_{\in}\ket{\psi_{T}}}_2^2 = \norm{\frac{1}{\sqrt{3}}(0, 0, 0, 1)^{T}}_2^2 = 1/3$, and the sketch is destroyed. 
  	\end{itemize}
  	\item $\bot$
  	\begin{itemize}
  		\item[] With probability $\norm{P_{\bot}\ket{\psi_{T}}}_2^2 = \norm{\frac{1}{\sqrt{3}}(0, 1, 1, 0)^{T}}_2^2 = 2/3$, and the updated state is $\frac{1}{\sqrt{2}}(0, 1, 1, 0)^{T} = \ket{\psi_{T \setminus \set{4}}}$.
  	\end{itemize}
  \end{itemize}
Let's assume the second outcome occurred so that the sketch is not destroyed, and we now have $T = \{2,3\}$.
}

Finally, we show how to implement $\querypair(x,y,\Qc_T)$:
\begin{itemize}
  \item $\querypair(x, y, \Qc_T)$: Takes as input elements $x\not=y$ from $U$
	and a sketch $\Qc_T$, and probabilistically tests whether $\set{x,y}
	\subseteq T$. Returns an element of $\set{+1,-1,\bot}$ and modifies the
	sketch as follows:
	\begin{itemize}
	  \item If $\set{x,y} \subseteq T$:
		\begin{itemize}
		  \item[] With probability $2/\abs{T}$ destroy $\Qc_T$ and return $+1$.
		  \item[] With probability $1 - 2/\abs{T}$ replace $\Qc_T$ with
			$\Qc_{T \setminus \set{x,y}}$ and return $\bot$.
		\end{itemize}
	  \item If $\abs{\set{x,y} \cap T} = 1$:
		\begin{itemize}
		  \item[] With probability $1/\abs{T}$ destroy $\Qc_T$ and return
			a randomly chosen $r \in \set{+1,-1}$.
		  \item[] With probability $1 - 1/\abs{T}$ replace $\Qc_T$ with
			$\Qc_{T \setminus \set{x,y}}$ and return $\bot$.
		\end{itemize}
	  \item If $\set{x,y} \cap T = \emptyset$:
		\begin{itemize}
		  \item[] Leave $\Qc_T$ unchanged and return $\bot$.
		\end{itemize}
	\end{itemize}
\end{itemize}

\begin{lemma}
$\querypair(x,y,\Qc_T)$ is realizable on a quantum computer using $m$ qubits.
\end{lemma}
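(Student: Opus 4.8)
The plan is to realize $\querypair(x,y,\Qc_T)$ as a single projective measurement in the sense of Fact~\ref{fact:measurement}; the only real choice to make is the two-dimensional subspace spanned by $\ket{x}$ and $\ket{y}$, and the right one is the $\pm$ basis. Set $\ket{+} = \frac{1}{\sqrt{2}}(\ket{x} + \ket{y})$ and $\ket{-} = \frac{1}{\sqrt{2}}(\ket{x} - \ket{y})$, which are orthonormal since $x \neq y$, and define
\[
P_{+1} = \ket{+}\bra{+}, \qquad P_{-1} = \ket{-}\bra{-}, \qquad P_{\bot} = I - P_{+1} - P_{-1}.
\]
These are orthogonal projectors, and since $P_{+1} + P_{-1} = \ket{x}\bra{x} + \ket{y}\bra{y}$ is the projector onto $\mathrm{span}\set{\ket{x},\ket{y}}$, they sum to $I$, so Fact~\ref{fact:measurement} lets us measure $\ket{\psi_T}$ with them, identifying the outcomes with $+1$, $-1$, $\bot$. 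I would also note that each of $P_{+1}, P_{-1}, P_{\bot}$ is unchanged under swapping $x \leftrightarrow y$, so the mixed case below may be analyzed assuming without loss of generality that $x \in T$ and $y \notin T$.

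Next I would compute the amplitudes, using that $\braket{x}{\psi_T} = \abs{T}^{-1/2}$ when $x \in T$ and $0$ otherwise (and likewise for $y$). If $\set{x,y} \subseteq T$ then $\braket{+}{\psi_T} = \sqrt{2/\abs{T}}$ and $\braket{-}{\psi_T} = 0$, so outcome $+1$ occurs with probability $\norm{P_{+1}\ket{\psi_T}}^2 = 2/\abs{T}$ and $-1$ never occurs. If exactly one of $x,y$ is in $T$ then $\braket{+}{\psi_T} = \braket{-}{\psi_T} = 1/\sqrt{2\abs{T}}$, so each of $+1$ and $-1$ occurs with probability $1/(2\abs{T})$, i.e. a non-$\bot$ answer with total probability $1/\abs{T}$ and a uniformly random sign, matching "randomly chosen $r \in \set{+1,-1}$". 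If $\set{x,y} \cap T = \emptyset$ then both amplitudes are $0$. This reproduces the claimed success probabilities and $\pm 1$ distribution in all three cases.

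It remains to check the post-measurement state on outcome $\bot$. In every case $P_{\bot}\ket{\psi_T} = \ket{\psi_T} - \braket{x}{\psi_T}\ket{x} - \braket{y}{\psi_T}\ket{y} = \frac{1}{\sqrt{\abs{T}}}\sum_{t \in T \setminus \set{x,y}}\ket{t}$, which equals $\sqrt{\abs{T \setminus \set{x,y}}/\abs{T}}\cdot\ket{\psi_{T \setminus \set{x,y}}}$; thus, conditioned on $\bot$, the normalized state is $\ket{\psi_{T \setminus \set{x,y}}}$, i.e. the sketch becomes $\Qc_{T \setminus \set{x,y}}$, and this happens with probability $\abs{T \setminus \set{x,y}}/\abs{T}$, which is $1 - 2/\abs{T}$, $1 - 1/\abs{T}$, or $1$ according to whether two, one, or none of $x,y$ lie in $T$ — exactly the complementary probabilities. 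Since $\ket{\psi_T}$ lives in $\Cbb^U$ and all the projectors act there, the operation uses $m = \log\abs{U}$ qubits. I do not anticipate a genuine obstacle beyond spotting the $\pm$ basis; the one point that needs a little care is the bookkeeping in the $\bot$ branch, namely that "remove both $x$ and $y$" is consistent with removing only whichever of them is actually in $T$, which the single identity for $P_{\bot}\ket{\psi_T}$ handles uniformly.
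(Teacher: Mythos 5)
Your proposal is correct and matches the paper's proof essentially exactly: the paper uses the same projectors $P_{\pm 1} = \frac{1}{2}(\ket{x}\pm\ket{y})(\bra{x}\pm\bra{y})$ and $P_{\bot} = I - P_{+1} - P_{-1}$, carries out the same three-case amplitude computation, and analyzes the $\bot$ branch the same way. The only (cosmetic) difference is that you additionally note the identity $P_{+1}+P_{-1}=\ket{x}\bra{x}+\ket{y}\bra{y}$ to handle the post-measurement state in all cases uniformly, which the paper does case by case.
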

\begin{proof}
We define projectors $P_{+1}, P_{-1}, P_\bot$ as:
\begin{align*}
P_{+1} &= \frac{1}{2}(\ket{x}+\ket{y})(\bra{x}+\bra{y})\\
P_{-1} &= \frac{1}{2}(\ket{x}-\ket{y})(\bra{x}-\bra{y})\\
P_{\bot} &= I - P_{+1} - P_{-1} 
\end{align*}
$P_{+1}$ projects onto the vector $\frac{\ket{x} + \ket{y}}{\sqrt{2}}$ and
$P_{-1}$ onto $\frac{\ket{x} - \ket{y}}{\sqrt{2}}$, while $P_{+1} + P_{-1} +
P_{\bot} = I$, so by Fact~\ref{fact:measurement} we can measure with these
projectors. We can now break down the outcomes of this measurement by whether
both $x$ and $y$ are in $T$, exactly one of them is, or neither of them is, and
show that the distribution of outcomes is as desired in each case.

Firstly, if $\set{x, y} \subseteq T$, then $\norm{P_{+1} \ket{\psi_T}}^2 = 2/T$ and $\norm{P_{-1} \ket{\psi_T}}^2 = 0$, so the outcomes are:
\begin{itemize}
\item With probability $\frac{2}{T}$, return $+1$. As we discard the sketch in
this case it does not matter what happens to the state.
\item With probability $1 - \frac{2}{T}$, return $\bot$ and $\ket{\psi_T}$ is
replaced with $P_{\bot} \ket{\psi_T}$ normalized to length 1, so as \[
P_{\bot} \ket{\psi_T} = \frac{1}{\sqrt{\abs{T}}}\sum_{t \in T \setminus\set{x,y}} \ket{t} = \frac{\sqrt{\abs{T} - 2}}{\sqrt{\abs{T}}}\ket{\psi_{T \setminus \set{x,y}}}
\]
and so after normalization the state is $\ket{\psi_{T \setminus \set{x,y}}}$
and therefore $\Qc_T$ has been replaced by $\Qc_{T \setminus \set{x,y}}$.
\end{itemize}
So we have the right distribution on outcomes when $\set{x, y} \subseteq T$. Next, if
$\abs{\set{x, y} \cap T} = 1$, then $\norm{P_{+1} \ket{\psi_T}}^2 = \norm{P_{-1}
\ket{\psi_T}}^2 = 1/2T$, so the outcomes are:
\begin{itemize}
\item With probability $\frac{1}{T}$, return $+1$. As we discard the sketch in
this case it does not matter what happens to the state.
\item With probability $\frac{1}{T}$, return $-1$. As we discard the sketch in
this case it does not matter what happens to the state.
\item With probability $1 - \frac{2}{T}$, return $\bot$ and $\ket{\psi_T}$ is
replaced with $P_{\bot} \ket{\psi_T}$ normalized to length 1, which as in the
previous case is $\ket{\psi_{T \setminus \set{x,y}}}$ and therefore $\Qc_T$ has
been replaced by $\Qc_{T \setminus \set{x,y}}$.
\end{itemize}
So our distribution is right in this case too. Finally, if $\set{x,y} \cap T =
\emptyset$, $\norm{P_{+1} \ket{\psi_T}}^2 = \norm{P_{-1} \ket{\psi_T}}^2 = 0$, so we
always return, and $P_\bot \ket{\psi_T} = \ket{\psi_T}$ and the sketch is
always unchanged, as desired.
\end{proof}

\examplebox{
	The current state of the sketch in this running example is $\ket{\psi_T} = \frac{1}{\sqrt{2}} (0, 1, 1, 0)^T$, and it stores $T = \set{2, 3}$.  To illustrate additional cases, we will also consider $\ket{\psi_S} = \frac{1}{\sqrt{3}} (1, 1, 0, 1)^T$, corresponding to $S = \{1,2,4\}$.

	Let the queries be $\querypair(2, 3, \Qc_T)$ and $\querypair(2, 3, \Qc_S)$. For both the corresponding projectors are 
	\begin{align*}
    P_{+1} & = \frac{1}{2}\begin{bmatrix}
    		0 & 0 & 0 & 0\\
        0 & 1 & 1 & 0\\
        0 & 1 & 1 & 0\\
        0 & 0 & 0 & 0\\
    \end{bmatrix} 
    & 
    P_{-1} & = \frac{1}{2}\begin{bmatrix}
    		0 & 0 & 0 & 0\\
        0 & 1 & -1 & 0\\
        0 & -1 & 1 & 0\\
        0 & 0 & 0 & 0\\
    \end{bmatrix} 
    & 
    P_{\bot} & = \begin{bmatrix}
    		1 & 0 & 0 & 0\\
        0 & 0 & 0 & 0\\
        0 & 0 & 0 & 0\\
        0 & 0 & 0 & 1\\
    \end{bmatrix} 
  \end{align*}
  
  Then the possible outcomes are 
  \begin{itemize}
  	\item $+1$
  	\begin{itemize}
  		\item[] For $T$: with probability $\norm{P_{+1}\ket{\psi_{T}}}_2^2 = \norm{\frac{1}{\sqrt{2}}(0, 1, 1, 0)^{T}}_2^2 = 1$, and the sketch is destroyed. 
  		\item[] For $S$: with probability $\norm{P_{+1}\ket{\psi_{S}}}_2^2 = \norm{\frac{1}{2\sqrt{3}}(0, 1, 1, 0)^{T}}_2^2 = 1/6$, and the sketch is destroyed. 
  	\end{itemize}
  	\item $-1$
  	\begin{itemize}
  		\item[] For $T$: with probability $\norm{P_{-1}\ket{\psi_{T}}}_2^2 = \norm{(0, 0, 0, 0)^{T}}_2^2 = 0$, and the sketch is destroyed.
  		\item[] For $S$: with probability $\norm{P_{-1}\ket{\psi_{S}}}_2^2 = \norm{\frac{1}{2\sqrt{3}}(0, 1, -1, 0)^{T}}_2^2 = 1/6$, and the sketch is destroyed.
  	\end{itemize}
  	\item $\bot$
  	\begin{itemize}
  		\item[] For $T$: with probability $\norm{P_{\bot}\ket{\psi_{T}}}_2^2 = \norm{(0, 0, 0, 0)^{T}}_2^2 = 0$. 
 		\item[] For $S$: with probability $\norm{P_{\bot}\ket{\psi_{S}}}_2^2 = \norm{\frac{1}{\sqrt{3}}(1, 0, 0, 1)^{T}}_2^2 = 2/3$, and the updated state is $\frac{1}{\sqrt{2}}(1, 0, 0, 1)^{T} = \ket{\psi_{\set{1, 4}}} = \ket{\psi_{S \setminus \set{2, 3}}}$.
  	\end{itemize}
  \end{itemize}}
  
\noindent
This example is illustrative of an important property of the sketch: the pair
query to sketch $\Qc_T$ always succeeds and returns $+1$ in the example above.
Therefore, when the query is interpreted as asking whether the sketch contains
the pair, the $+1$ always corresponds to \yes{} as desired, since $\{2,3\}
\subseteq T$.

This covers all the operations required of $\Qc_T$, proving
Theorem~\ref{thm:implementability}.  Finally, we will prove Lemma~\ref{lm:reordering}.
\reordering*
\begin{proof}
We proceed by induction on $t$, the number of $\queryone$ and $\querypair$
operations in the sequence. As $\update$ operations are deterministic, cannot
destroy the sketch and do not change the size of the underlying set, the result
holds trivially for $t = 0$. Now, suppose it holds for $t$. The aforementioned
observation about $\update$ means that without loss of generality we may assume
the final update in the sequence is a $\queryone$ or $\querypair$ operation.
Let $T''$ be the underlying set of elements of the sketch before this last
operation is executed, if the sketch is not destroyed by any previous
operation. By the inductive hypothesis, $T''$ is deterministic, and the
probability that no operation before the last one destroys the set is
$\frac{\abs{T''}}{\abs{T}}$.

Now, whether the last operation is $\queryone$ or $\querypair$, let $X$ be the
set of elements queried (so $X$ is either a singleton or a pair). Note that in
each case the sum of the probability of all results that involve destroying the
sketch is $\frac{\abs{X \cap T''}}{\abs{T''}}$. Furthermore, if the sketch is
\emph{not} destroyed, all elements of $X$ are removed from the sketch, and so
$T ' = T'' \setminus X$. Therefore, $T'$ is deterministic as desired. Moreover,
using the inductive hypothesis, the probability that the sketch is not
destroyed after all $t$ query operations is \[
\frac{\abs{T''}}{\abs{T'}}\paren*{1 - \frac{\abs{X \cap T''}}{\abs{T''}}} =
\frac{\abs{T''} - \abs {X \cap T''}}{\abs{T}} = \frac{\abs{T'}}{\abs{T}}
\]
completing the proof.
\end{proof}

\section{Boolean Hidden Matching in the Stream}
\label{sec:bhm}
In the Boolean Hidden Matching problem defined in the streaming setting (see
\cite{GKKRW07}) the algorithm receives a string $x \in \bool^n$ representing set of vertices $[n]$ each labeled by a bit $x_v$, and an $\alpha n$-edge
partial matching $M$ on $\brac{n}$, with labels $z \in \bool^{M}$, as a stream of
elements. The string updates take the form $(v, x_v)$ for $v \in \brac{n}$,
while the matching updates take the form $(uv, z_{uv})$, for $uv \in M$. The
updates can arrive in any order, and the two types of update can be intermingled.

We are promised that either, for every $uv \in M$, $x_u \oplus x_v = z_{uv}$,
or for every such $uv$, $x_u \oplus x_v = \overline{z_{uv}}$. The objective of
the problem is to return $0$ in the first case, and $1$ in the second.

\paragraph{Overview of the algorithm.} 

The algorithm will start with a set $\set{(u, 0, b) \mid u \in [n], b \in \set{0, 1}}$ contained in the sketch. The first element in each tuple represents a vertex, the second represents the current label of this vertex (in the beginning all the labels are set to $0$), while the last label is there for a technical reason described later.

Each time a new element arrives in the stream, our sketch will respond in one
of two ways. If a new vertex label $(v, x_v)$ has arrived and $x_v = 1$,  the
sketch will update the label by performing an $\update$ operation with the swap
$(v, 0, b) \leftrightarrow (v, 1, b)$ for $b \in \set{0, 1}$. If a new labeled
edge $(uv, z_{uv})$ has arrived, the algorithm tries to recover the labels of
$u$ and $v$ from the sketch. To do that, the algorithm makes four $\querypair$
queries:
\begin{align*}
\querypair((u, 0, 0), (v, 0, 0))\\
\querypair((u, 1, 0), (v, 1, 0))\\
\querypair((u, 0, 1), (v, 1, 1))\\
\querypair((u, 1, 1), (v, 0, 1))
\end{align*}
Note that for each of the queries $\querypair((u, a, c), (v, b, c))$, $c$
denotes the parity of the labels of $u, v$ being queried: $c = a \oplus b$.
This guarantees that all the queries in this step are of disjoint pairs, and
are thus independent. Since the edges in the stream form a matching, all the
queries throughout the whole computation are of disjoint pairs as well. 

If one of the queries returns $+1$, the algorithm has recovered $z_{u, v}$ and
a guess at the current labels of $u$ and $v$. Because $\querypair$ is four
times as likely to return $+1$ when both of the elements queried are in the
sketch as when only one is, the guess is correct with at least $2/3$
probability. Then, the algorithm can continue classically by updating the
labels of $u, v$ as updates to them appear, and comparing their parity to
$z_{uv}$ at the end.

This means that, as the probability of a query in each update returning $+1$ is
$\bT{1/n}$, we have 1) a $\bT{\alpha}$ probability of obtaining an answer and
2) a constant-factor gap between the probability of being right and being wrong
in the event we do. So a majority vote of $\bT{1/\alpha}$ copies of the
algorithm run in parallel suffices.

\subsection{Algorithm}

Our algorithm will use the pair sampling sketch with universe $U = [n] \times
\set{0, 1} \times \set{0, 1}$, which will trivially embed in $\brac{2^{\bO{\log
n}}}$. It will be divided into a ``quantum'' stage, which lasts until the
sketch returns something other than $\bot$, and a classical stage, which
processes all the updates after this happens.

We will define the permutation $\pi_v : U \rightarrow U$ as follows: for each
$b \in \set{0,1}$, swap $(v,0,b)$ with $(v,1,b)$.

The algorithm is formally presented as Algorithm \ref{alg:bhm_unordered} below.

\begin{algorithm}
\caption{Quantum Streaming Algorithm for Boolean Hidden Matching}\label{alg:bhm_unordered}
\begin{algorithmic}[1]
\State $\Qc := \create([n]\times \{0\} \times \{0, 1\})$. \\
\textbf{Quantum Stage}
\For{stream updates $\sigma$}
    \If{$\sigma = (w, x_w)$}
        \If{$x_w = 1$}
            \State $\update(\pi_w, \Qc)$
        \EndIf    
    \ElsIf{$\sigma = (uv, z_{uv})$}
        \For{$a, b \in \{0, 1\}$}
            \State $r := \querypair((u, a, a \oplus b), (v, b, a \oplus b), \Qc)$ 
            \If{$r = 1$}
				\State $c \coloneqq a \oplus b \oplus z_{uv}$
				\State Terminate the quantum stage and process all remaining
				updates in the classical stage, sending $uv$ and $c$ to that stage.
				\EndIf
			\If{$r = -1$}
				\State Terminate the algorithm entirely, returning $\bot$.
			\EndIf
        \EndFor 
    \EndIf
\EndFor
\\ \textbf{Classical Stage}
\For{remaining stream updates $\sigma$}
    \If{$\sigma = (w, x_w)$}
        \If{$w = u$ or $w = v$}
            \State $c \coloneqq c \oplus x_w$
        \EndIf
    \EndIf
\EndFor
\If{the algorithm never reached the classical stage}
\State \textbf{return} $\bot$
\Else
\State \textbf{return} $c$
\EndIf
\end{algorithmic}
\end{algorithm}

\subsection{Correctness}

We will now prove the correctness of this algorithm. First, we describe the set
maintained in the sketch.
\begin{lemma}
\label{lm:bhminvariant}
After processing any number of entire updates, if the quantum stage has not yet terminated, the underlying set of $\Qc$ is \[
\set{(v, y_v, b) : v \in \brac{n}, b \in \bool, \text{$(uv,z_{uv})$ has not yet arrived in the stream for any $u \in \brac{n}$}}
\]
where $y_v = x_v$ if the update $(v, x_v)$ has already arrived in the stream, and is $0$ otherwise.
\end{lemma}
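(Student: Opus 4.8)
The plan is to prove this by induction on the number of \emph{complete} stream updates processed, where a complete update is either a string update $(w,x_w)$ or an entire edge update $(uv,z_{uv})$ together with all four of its $\querypair$ calls, and throughout we condition on the event that the quantum stage has not yet terminated. For the base case (zero updates), line~1 initializes $\Qc$ to hold $[n]\times\{0\}\times\{0,1\}$; at this point no edge $(uv,z_{uv})$ has arrived, so every vertex is ``active'', and no string update has arrived, so $y_v = 0$ for every $v$. Thus the underlying set is exactly the set claimed by the lemma.

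For the inductive step, assume the claim after $k$ complete updates and consider the $(k+1)$-st. If it is a string update $(w,x_w)$ with $x_w = 0$, the algorithm touches $\Qc$ not at all, and the only change to the claimed set is that $y_w$ passes from ``$0$ because $(w,x_w)$ has not arrived'' to ``$x_w = 0$''; both sides are unchanged. If $x_w = 1$, the algorithm applies $\update(\pi_w,\Qc)$ with $\pi_w$ the involution swapping $(w,0,b)\leftrightarrow(w,1,b)$ for each $b$. Since $\pi_w$ fixes every tuple not involving $w$, only the $w$-tuples of the underlying set move, and by the inductive hypothesis they go from $\{(w,0,0),(w,0,1)\}$ (if $w$ is active, in which case the unarrived $(w,x_w)$ forces the old $y_w = 0$) or from $\emptyset$ (if $w$ is inactive) to $\{(w,1,0),(w,1,1)\}$ or $\emptyset$ respectively; this matches the claimed set with the new value $y_w = x_w = 1$, and no other vertex is affected.

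The main case is an edge update $(uv,z_{uv})$ for which the quantum stage survives, i.e.\ all four $\querypair$ calls return $\bot$. First I would note that since $M$ is a matching, neither $u$ nor $v$ lies in any edge of $M$ other than $uv$, so before this update both are active and, by the inductive hypothesis, $T$ contains exactly the four tuples $(u,y_u,0),(u,y_u,1),(v,y_v,0),(v,y_v,1)$ among the eight possible $u$- and $v$-tuples. The crucial observation is that $\querypair(x,y,\cdot)$ returning $\bot$ always replaces the underlying set by $T\setminus\{x,y\}$ --- whether $\{x,y\}\subseteq T$, $|\{x,y\}\cap T| = 1$, or $\{x,y\}\cap T = \emptyset$ --- and that the four queried pairs $\{(u,a,a\oplus b),(v,b,a\oplus b)\}_{a,b\in\bool}$ are pairwise disjoint and together exhaust all eight $u$- and $v$-tuples. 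Hence after the update the underlying set is $T$ with every $u$- and $v$-tuple deleted, which is precisely the set claimed by the lemma, since now an edge incident to each of $u$ and $v$ has arrived while no other vertex's activity or label has changed. (Lemma~\ref{lm:reordering} ensures the order of these four disjoint queries is immaterial, although the deletion argument above does not actually rely on it.)

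I expect the only real obstacle to be the bookkeeping in the edge case: the temptation is to case on which of the four queries ``would have'' matched given $y_u,y_v$, but the clean argument is to observe that a $\bot$ outcome deletes its queried pair unconditionally, so the post-update set depends on neither $y_u$ nor $y_v$. The string-update cases and the base case are routine.
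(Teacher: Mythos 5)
Your proposal is correct and follows essentially the same induction as the paper: base case from \texttt{create}, the swap $\pi_w$ for label updates, and the observation that on an edge update all four disjoint $\querypair$ calls returning $\bot$ simply delete all eight $u$- and $v$-tuples (vacuously when absent), which is exactly the claimed set. No meaningful differences from the paper's argument.
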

\begin{proof}
We proceed by induction on the number of updates processed. When $0$ updates
have been processed, $y = 0^n$ and no updates have yet arrived, so the result
holds by the definition of $\create$.

Now suppose it holds after $t$ updates, and we see a new update $(v, x_v)$.
Then $y_v$ becomes $x_v$. If $x_v = 0$, then no change will be made to the
sketch, and the desired set also has not changed. If $x_v = 1$,
$\update(\pi_v, \Qc)$ is executed, removing $(v,0,b)$ from the sketch and
replacing it with $(v,1,b)$ for each $b \in \bool$, and so the desired set is
still maintained.

Now suppose we see an update $(uv, z_{uv})$. Then four queries will be
performed, and if any return anything other than $\bot$ the quantum stage ends
and so the lemma continues to hold trivially. So suppose they all return
$\bot$. Then $(w,a,b)$ are removed from the sketch for $w \in \set{u,v}$ and
$a,b \in \bool$, which as $u, v$ are now edges incident to an update
$(uv,z_{uv})$ that arrived in the stream, means the desired set is still
maintained.
\end{proof}
Next, we show that the algorithm is more likely to return a correct answer than an
incorrect one.
\begin{lemma}
With probability $\alpha$ the algorithm will return the correct answer, and it
returns an incorrect answer with probability at most $\alpha/2$. Otherwise it
returns $\bot$.
\end{lemma}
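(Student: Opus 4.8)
The plan is to analyze a single run of the algorithm by layering the invariant of Lemma~\ref{lm:bhminvariant} on top of the reordering lemma (Lemma~\ref{lm:reordering}), using crucially that every $\querypair$ call made over the course of the algorithm is on a pair disjoint from every other queried pair.

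First I would classify the four queries issued when an edge $(uv,z_{uv})$ arrives during the quantum stage. By Lemma~\ref{lm:bhminvariant}, at that moment the elements of the sketch of the form $(u,\cdot,\cdot)$ are exactly $(u,y_u,0)$ and $(u,y_u,1)$, and likewise $(v,y_v,0),(v,y_v,1)$ for $v$ (since neither endpoint's edge has arrived yet). Writing $a^\ast=y_u$ and $b^\ast=y_v$, this gives: the query with $(a,b)=(a^\ast,b^\ast)$ has both elements in the sketch (the \emph{good} query); the two queries with $(a,b)\in\{(a^\ast,\overline{b^\ast}),(\overline{a^\ast},b^\ast)\}$ have exactly one element in the sketch (the \emph{bad} queries, each of which has $a=y_u$ or $b=y_v$ but not both); and the query with $(a,b)=(\overline{a^\ast},\overline{b^\ast})$ has neither element in the sketch, so by definition it always returns $\bot$. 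I would also note that the eight elements touched by one edge are distinct and, since $M$ is a matching, the element sets associated with distinct edges are disjoint, so \emph{every} $\querypair$ call in the run is on a pair disjoint from all others; hence Lemma~\ref{lm:reordering} and the remark following it apply to the whole query sequence.

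Second I would pin down what is returned on the branches that reach the classical stage. A good query returns only $+1$ or $\bot$, so a $-1$ can come only from a bad query, and then the algorithm returns $\bot$; thus correctness needs checking only when some query returns $+1$. If the $+1$ comes from the good query of edge $uv$, then $a=y_u$ and $b=y_v$; chasing the classical-stage assignment $c := c\oplus x_w$ through the two cases (whether $(w,x_w)$ arrived before or after the edge, for $w\in\{u,v\}$) shows the final value of $c$ equals $x_u\oplus x_v\oplus z_{uv}$, which by the promise is exactly the correct answer bit. If instead the $+1$ comes from a bad query, then exactly one of $a=y_u$, $b=y_v$ holds and the other endpoint is flipped, and the same bookkeeping yields $c=\overline{x_u\oplus x_v\oplus z_{uv}}$, the incorrect answer. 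So: a $+1$ from a good query gives the correct answer, a $+1$ from a bad query gives the incorrect answer, a $-1$ (only from a bad query) gives $\bot$, and an all-$\bot$ run gives $\bot$.

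Finally I would compute the probabilities with Lemma~\ref{lm:reordering}. List the queries $q_1,q_2,\ldots$ in execution order; of these $\alpha n$ are good, $2\alpha n$ are bad, and $\alpha n$ are zero. For each $j$, Lemma~\ref{lm:reordering} gives $\Pr[q_1=\cdots=q_{j-1}=\bot]=|S_{j-1}|/(2n)$ where $S_{j-1}$ is the (deterministic) underlying set on the all-$\bot$ branch after $j-1$ queries, and conditioned on this event $q_j$ returns $+1$ with probability $2/|S_{j-1}|$ if good, or $+1$ (resp.\ $-1$) each with probability $1/(2|S_{j-1}|)$ if bad. The factor $|S_{j-1}|$ cancels, so each good query contributes exactly $\tfrac1n$ to $\Pr[\text{correct}]$ and each bad query exactly $\tfrac1{4n}$ to $\Pr[\text{incorrect}]$; summing, $\Pr[\text{correct}]=\alpha n\cdot\tfrac1n=\alpha$ and $\Pr[\text{incorrect}]=2\alpha n\cdot\tfrac1{4n}=\alpha/2$, with the remaining mass on $\bot$ (consistency check: $\Pr[\text{all }\bot]=|T'|/(2n)=1-2\alpha$ since on the all-$\bot$ branch $T'$ retains only the $2(n-2\alpha n)$ elements of unmatched vertices, and indeed $\alpha+\tfrac\alpha2+\tfrac\alpha2=2\alpha=1-(1-2\alpha)$). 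I expect the main obstacle to be the second step---verifying that a $+1$ from a \emph{bad} query yields the \emph{incorrect} answer rather than something unpredictable---since it requires carefully following the classical-stage XORs through all four combinations of whether each endpoint's label update precedes or follows the edge; by contrast, once disjointness lets us invoke Lemma~\ref{lm:reordering}, the probability computation is immediate.
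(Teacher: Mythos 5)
Your proposal is correct and follows essentially the same route as the paper: use Lemma~\ref{lm:bhminvariant} to classify each edge's four queries as one ``good,'' two ``bad,'' and one empty, use Lemma~\ref{lm:reordering} (plus disjointness of all queried pairs) to get the per-query termination probabilities $1/n$ and $1/(4n)$, and trace the classical-stage XORs to confirm the good branch outputs $x_u\oplus x_v\oplus z_{uv}$. The only (harmless) extra is that you verify a bad $+1$ yields exactly the complemented answer, whereas the paper only needs that incorrect outputs are confined to bad terminations, whose total probability is $\alpha/2$.
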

\begin{proof}
First note the algorithm can only terminate on updates of the form
$(uv,z_{uv})$. Consider the four $\querypair$ operations performed when
$(uv,z_{uv})$ arrives.  Let $S$ be the underlying set of the sketch before the
operations are performed.  By Lemma~\ref{lm:bhminvariant}, $S$ contains
$(w,y_u,b)$ for $w \in \set{u,v}$ and $b \in \bool$, and no other elements that
can affect the outcome of the query (besides their effect on the size of the
set). 

Therefore, for these four queries $\querypair((u,a,a\oplus b), (v,b,a\oplus
b))$, there is one such that two elements of the pair queried are in $S$, two such that one is, and
one such that neither is. By Lemma~\ref{lm:reordering}, for each of these queries,
the probability that the algorithm does not terminate before making that query
(because of a query in a previous update or an earlier query in this update) is
$1/\abs{S'}$, where $S'$ is the set at query time if the query happens. Note
that the queried pairs are disjoint, so every element queried is in $S'$ iff it
is in $S$. By the definition of $\querypair$, this implies that the probability
that any given query returns $+1$ \emph{and} the algorithm does not terminate before
it is made is \[
\frac{2}{\abs{S'}} \cdot \frac{\abs{S'}}{2n} = \frac{1}{n}
\]
for the query with two elements in $S$, $\frac{1}{4n}$ for each of the queries
with one element in $S$, and $0$ for the other.

Therefore, as there are $\alpha n$ such updates in total, the probability over
all updates that we terminate on a $+1$ returned by a query where both elements
are in $S$ is \[
\frac{1}{n} \cdot \alpha n = \alpha
\]
while the probability we terminate on a $+1$ returned by a query where this is
\emph{not} the case is \[
\frac{1}{4n}\cdot 2 \cdot \alpha n = \frac{\alpha}{2}
\]
from the two out of each batch of four queries that only include one element in $S$.

Therefore, as the algorithm only returns something other than $\bot$ when it
terminates on a query returning $+1$, it will suffice to show that, if the
algorithm terminates on a query where both of the elements are in $S$ at the
time, it always returns the correct answer.

To see this, first note that, by the description of $S$ given earlier in the
proof, that query will be $\querypair((u,y_u,y_u \oplus y_v),(v,y_v,y_u \oplus
y_v))$, performed after seeing the update $(uv,z_{uv})$. So if it terminates
the quantum stage, it is by sending $uv$ and $y_u \oplus y_v \oplus z_{uv}$ to
the classical stage. Now for $w = u,v$, recall that $y_w$ is $x_w$ if $x_w$ has
already appeared in the stream (and therefore will not appear in the classical
stage) and $0$ if it has not (and so will appear in the classical stage).
Moreover, if $x_w$ does appear in the classical stage, it will be XORed onto
the bit we hold. So we are guaranteed to output $x_u \oplus x_v \oplus
z_{uv}$, which is exactly the desired outcome.
\end{proof}
We are now ready to prove correctness.
\begin{theorem}
There exists an quantum algorithm that solves the streaming Boolean Hidden
Matching problem with probability $2/3$ with $\bO{\frac{1}{\alpha}\log{n}}$ quantum memory. The
algorithm is classical except for the use of the quantum sketch described in
Section~\ref{sec:description}.
\end{theorem}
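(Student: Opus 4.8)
The plan is to run $k = \Theta(1/\alpha)$ independent copies of Algorithm~\ref{alg:bhm_unordered} in parallel on the same stream, and take a majority vote over the non-$\bot$ outputs (returning an arbitrary fixed answer, say $0$, if every copy returns $\bot$). Each copy uses a fresh independent sketch $\Qc$, but they all see the same updates $\sigma$, so the copies are mutually independent. By the previous lemma, a single copy returns the correct answer with probability $\alpha$, an incorrect answer with probability at most $\alpha/2$, and $\bot$ otherwise.

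First I would analyze a single copy conditioned on it producing a non-$\bot$ output: the conditional probability of being correct is at least $\frac{\alpha}{\alpha + \alpha/2} = \frac{2}{3}$, while the conditional probability of being wrong is at most $\frac{1}{3}$. So each copy that ``speaks'' is a biased coin favouring the correct answer. Next I would handle two sources of failure for the majority vote: (i) too few copies produce a non-$\bot$ answer, so the vote is unreliable or vacuous, and (ii) enough copies speak but a majority of them are wrong. For (i), the number of speaking copies is stochastically at least $\Bi{k}{\alpha}$ (each copy is non-$\bot$ with probability at least $3\alpha/2 \geq \alpha$ independently); choosing $k = C/\alpha$ for a suitable constant $C$ and applying a Chernoff bound gives that at least, say, $C/2$ copies speak except with probability $\leq 1/12$. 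For (ii), conditioned on any particular set of speaking copies (of size at least $C/2$), each speaks correctly independently with probability $\geq 2/3$, so by a Chernoff/Hoeffding bound the fraction that are correct exceeds $1/2$ except with probability $e^{-\Omega(C)}$, which is $\leq 1/12$ for $C$ a large enough absolute constant. A union bound over (i) and (ii) then yields overall success probability $\geq 1 - 1/6 \geq 2/3$. (The constant can be pushed arbitrarily close to $1$ by increasing $C$, hence increasing the constant hidden in $\Theta(1/\alpha)$.)

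For the space bound, each copy stores one sketch on universe $U = [n] \times \{0,1\} \times \{0,1\}$, which embeds in $[2^{\bO{\log n}}]$, so by Theorem~\ref{thm:implementability} each sketch needs $\bO{\log n}$ qubits; the classical state per copy (the bit $c$, the remembered edge $uv$, loop counters) is also $\bO{\log n}$ bits. With $\bO{1/\alpha}$ copies the total is $\bO{\frac{1}{\alpha}\log n}$, and everything outside the $\create/\update/\queryone/\querypair$ calls is classical, as required. I would also note that the promise structure is what makes the classical post-processing meaningful: under the promise, the quantity $x_u \oplus x_v \oplus z_{uv}$ that a correctly-answering copy outputs is exactly the secret bit $b$ and is the same for every edge $uv \in M$, so there is a consistent correct answer for the vote to converge to.

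The main obstacle is purely in the bookkeeping of the two-sided Chernoff argument: a single copy does not simply output ``correct with probability $\geq 2/3$'' — it outputs $\bot$ with probability $1 - \Theta(\alpha)$, so one cannot directly amplify by repetition in the usual way. The care needed is in decoupling ``how many copies respond'' from ``how many responders are right,'' and in verifying that the conditional-on-responding bias is bounded below by $2/3$ uniformly, using that the $\alpha/2$ bound on the error probability is an upper bound that holds regardless of the sketch's internal randomness. Once that decoupling is set up cleanly, the rest is a routine union bound.
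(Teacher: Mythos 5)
Your proposal is correct and follows essentially the same route as the paper, which simply invokes the preceding lemma and states that a majority vote over $\Theta(1/\alpha)$ parallel copies suffices. Your write-up just fills in the details the paper leaves implicit (interpreting the vote as being over non-$\bot$ outputs, the $2/3$ conditional bias of responding copies, and the two Chernoff bounds), all of which are sound.
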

\begin{proof}
By the previous lemma, there is an $\alpha$ probability of returning the
correct answer and no more than an $\alpha/2$ probability of returning an
incorrect one. So by running $\bT{1/\alpha}$ copies of the algorithm in
parallel and taking a majority vote of their outputs, we can return the correct
answer with probability $2/3$.
\end{proof}

\section{Triangle Counting}
\label{sec:triangle}
In this section we prove that the quantum triangle counting algorithm
introduced in~\cite{K21} can be implemented as an algorithm that is classical
other than in using the sketch from Section~\ref{sec:description}.

\begin{definition}[Triangle Counting]
In the (streaming) triangle counting problem, an $m$-edge graph $G = (V,E)$ is
received as a stream of edges in an arbitrary order. The goal of the problem is
to return a number in $((1 - \eps)T,(1 + \eps)T)$ with probability $1 -
\delta$, where $T$ is the number of triangles in $G$: the number of $(u,v,w)
\in V^3$ such that $uv,vw,wu$ are all in $E$, and $\eps,\delta > 0$ are
accuracy parameters.

We are given $T', \Delta_E$ such that $T \ge T'$ and no more than $\Delta_E$
triangles in $G$ share any given edge.
\end{definition}
Formally the space complexity of a triangle counting algorithm is in terms of
$m$, $\Delta_E$, and $T'$, but the standard convention is to assume $T' =
\bOm{T}$ and so we can use $T$ instead.

We will prove the following theorem, a ``black boxed'' version of Theorem~$1$ of~\cite{K21}.
\begin{restatable}{theorem}{trianglecounting}
\label{thm:trianglecounting}
For any $\varepsilon, \delta \in (0,1\rbrack$, there is a streaming
algorithm that uses \[
O\paren*{ \frac{m^{8/5}}{T^{6/5}}\Delta_E^{4/5}\log n
\cdot\frac{1}{\varepsilon^2}\log \frac{1}{\delta}}
\] quantum and classical bits in expectation to return a $(1 \pm
\varepsilon)$-multiplicative approximation to the triangle count in an
insertion-only graph stream with probability $1 - \delta$. This algorithm is
entirely classical except for its use of the sketch in
Section~\ref{sec:description}.

Here $m$ is the number of edges in the stream, $T$ the number of triangles, and
$\Delta_E$ the greatest number of triangles sharing any given edge.
\end{restatable}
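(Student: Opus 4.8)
The plan is to reconstruct the estimator from~\cite{K21} but with the ``quantum part'' routed entirely through the pair sketch of Section~\ref{sec:description}. First I would set up the sketch so that its underlying set is $T^{\mathsf{dir}} = \set{(u,v) : uv \in E}$ (both orientations of each edge seen so far, plus enough dummy elements to accommodate all $2m$ insertions, so the set size is $\bT{m}$ throughout). Whenever an edge $uv$ arrives, before adding $(u,v)$ and $(v,u)$ via $\update$ with a transposition, the algorithm issues, for each candidate vertex $w$, the query $\querypair((w,u),(w,v),\Qc)$; a $+1$ answer certifies a triangle $uvw$ whose ``completing'' edge (the last of its three edges to appear) is $uv$. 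By Lemma~\ref{lm:reordering}, since the pairs $\set{(w,u),(w,v)}$ over the different $w$ are disjoint, this batch of queries is order-independent, and the probability that the query for a genuine triangle returns $+1$ and the sketch has not been destroyed earlier is $\frac{2}{\abs{S'}}\cdot\frac{\abs{S'}}{2}\cdot\frac{1}{\abs{T^{\mathsf{dir}}}_{\text{init}}}=\bT{1/m}$, independent of the intervening removals --- this is exactly what makes the sketch a drop-in replacement for the single-qubit Fourier-sampling primitive of~\cite{K21}. The catch flagged in the introduction is that incident edges $uv$ and $vt$ generate overlapping pairs (both touch entries $(\cdot,v)$), so once a query for $uv$ removes $(w,v)$, a later query for a triangle through $vt$ sees a corrupted set; the standard fix, as in~\cite{K21}, is to only ``count'' a triangle when its completing edge is light enough, and to subsample vertices so that the expected number of queries --- hence expected collisions --- is controlled.

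Next I would assemble the two estimators that~\cite{K21} interpolates between and redo their analysis with the sketch. The \emph{quantum-sketch estimator} runs $R$ independent copies of the sketch pipeline above; each copy either outputs $\bot$ or, upon a successful query, outputs the triangle it found, and by the calculation above each of the $T$ triangles is the ``hit'' of a given copy with probability $\bT{1/m}$ (up to the light-edge restriction), so scaling the empirical hit count by $m$ and by the number of copies yields an unbiased-up-to-bias estimate of (the restricted) triangle count; the $\pm 1$ subtlety does not arise here because both queried entries are genuinely in $T^{\mathsf{dir}}$ whenever the target triangle's three edges have all been seen, so successful queries always return $+1$. The \emph{classical estimator} handles triangles whose completing edge is heavy (incident to more than some threshold $\tau$ of edges): heavy edges are rare ($\bO{m/\tau}$ of them), so a classical reservoir/sampling scheme counts triangles through them in $\bOt{m/\tau \cdot \Delta_E}$-ish space. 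Balancing the sketch budget (which grows with $R$, and $R$ must be large enough to make the hit count concentrate, i.e.\ $R = \bT{m^2/(T \cdot \text{something})}$ after accounting for light-edge degree $\le \tau$) against the classical budget (which grows with $1/\tau$) and optimizing over $\tau$ is what produces the exponent $m^{8/5}/T^{6/5}\cdot\Delta_E^{4/5}$; this optimization is essentially identical to the one in~\cite{K21} once the sketch is shown to behave like their primitive, so I would cite that balancing rather than redo it in full, and then the $\frac{1}{\eps^2}\log\frac1\delta$ factor comes from the usual median-of-means boosting on top.

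The main obstacle, and the only place real work is needed beyond bookkeeping, is controlling the \emph{collision / destruction error} introduced by overlapping pairs: I must argue that restricting to light completing edges (degree $\le \tau$ at the time they arrive) makes the probability that a target triangle's query is spoiled --- either because an earlier query in the same batch or in a batch for an incident edge already removed $(w,u)$ or $(w,v)$, or because an earlier query destroyed the sketch outright --- small enough that the estimator's bias is $\lO{\eps T}$. Lemma~\ref{lm:reordering} is the key tool: it says the probability the sketch survives a whole sequence of queries is $\abs{T'}/\abs{T}$ where $T'$ is the (deterministic) surviving set, so the chance a particular target query even gets made is $\ge (\abs{T^{\mathsf{dir}}}_{\text{init}} - (\text{total elements ever queried}))/\abs{T^{\mathsf{dir}}}_{\text{init}}$, and bounding ``total elements ever queried'' per copy by $\bO{\tau \cdot (\text{edges})} = \bO{\tau m}$ --- small relative to $\abs{T^{\mathsf{dir}}}_{\text{init}} = \bT{m}$ once $\tau$ is sub-constant in the right regime, or more carefully once we subsample --- keeps the survival probability bounded below. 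I would also need the elementary check that the per-triangle hit events across the $R$ copies are independent (they are, since the copies use independent sketches) and that within one copy the hit indicator has the claimed $\bT{1/m}$ probability conditioned on reaching the query, which again is immediate from the $\querypair$ specification plus Lemma~\ref{lm:reordering}. Everything else --- the heavy-edge classical counter, the variance bound, the median-of-means, the final parameter optimization --- is a transcription of~\cite{K21} with our sketch standing in for their qubit, so I would present it compactly and lean on that paper for the parts that are unchanged.
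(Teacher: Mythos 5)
Your high-level architecture matches the paper's (sketch the directed pairs $(u,v),(v,u)$ of seen edges, query $((w,u),(w,v))$ over all $w$ when $uv$ arrives, combine a sketch-based estimator with a classical one from~\cite{K21}, then boost and optimize parameters), but the mechanism you propose for handling overlapping queries is not the one used, and as described it has a genuine gap. The paper does \emph{not} split triangles by a hard light/heavy threshold on the completing edge. It subsamples the \emph{query batches}: on each arriving edge the $\querypair$ round is performed only with probability $1/k$. Since a query batch for an edge incident to $v$ deterministically deletes $(u,v)$ from the sketch whenever it is actually performed (given the sketch survives), the pair $(u,v),(u,w)$ for a triangle with $uv \bef uw \bef vw$ survives until $vw$ arrives with probability exactly $(1-1/k)^{\degb{uvw}+\degb{uwv}}$ over the randomness of the subsampling. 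This makes the sketch estimator \emph{exactly unbiased} for the soft quantity $\Tlk = \sum \tlk_{uvw}$ with $\tlk_{uvw}=(1-1/k)^{\degb{uvw}+\degb{uwv}}$, and the classical algorithm (Lemma~\ref{lm:classtimate}) is invoked to estimate precisely the complement $\Tgk = T - \Tlk$; optimizing $k = T^{2/5}\Delta_E^{2/5}/m^{1/5}$ between the $\bT{(km/T)^2}$ sketch cost and the $\bO{m^{3/2}\Delta_E/(T\sqrt{k})}$ classical cost gives the stated bound.

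The gap in your version is the bias control. Without subsampling the queries, any triangle with $\degb{uvw}+\degb{uwv}\ge 1$ --- i.e.\ even a single edge incident to $v$ or $w$ arriving in the relevant window --- has $(u,v)$ or $(u,w)$ removed \emph{with certainty} before $vw$ arrives, so it is never counted; this loss is not a small perturbation bounded by ``total elements ever queried $/ m$'' but can be $\bT{T}$, and restricting to completing edges of degree $\le\tau$ does not repair it (a triangle can have $\degb{uvw}\ge 1$ while $vw$ is light, and in any case $\tau\ge 1$ already permits deterministic loss). Your appeal to Lemma~\ref{lm:reordering} conflates two different things: that lemma cancels the effect of the sketch \emph{shrinking} on the probability of reaching a given query, but it gives no protection against the specific queried elements having been deleted by earlier overlapping queries, which is exactly the failure mode here. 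Relatedly, your claim that the $\pm1$ spurious outcomes ``do not arise'' is wrong --- for most $w$ exactly one of $(w,u),(w,v)$ is in the sketch, so spurious terminations occur constantly; the paper handles them by outputting $rkm$ so that they cancel in expectation. Finally, because your decomposition differs from the paper's, you cannot simply cite the balancing from~\cite{K21}: their classical estimator targets $\Tgk$ as defined by the soft weights, not the count of heavy-completing-edge triangles, so the two halves of your estimator would not sum to $T$.
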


The algorithm in~\cite{K21} is based on splitting $T$ into $\Tlk$ and $\Tgk$,
and estimating the former with a quantum algorithm and the latter with a
classical algorithm. We will reproduce the definition of these quantities here.
$k$ will be a parameter that we choose to optimize the trade-off between the
complexity of the two estimators.

\subsection{Triangle Weights}
Fix any ordering of the stream. For any edges $e, f$, we will write $e \bef f$
if $e$ arrives before $f$ in the stream. For any vertices $u,v,w \in V$ such
that $uv, vw \in E$ and $uv \bef vw$, let the degree between $uv$ and $vw$,
$\degb{uvw}$ be the number of edges incident to $v$ that arrive in between $uv$
and $vw$ (not including $uv$ or $vw$ themselves).

For any triple of vertices $(u,v,w) \in V^3$ let \[
\tlk_{uvw} = \begin{cases}
(1 - 1/k)^{\degb{uvw} + \degb{uwv}} &\mbox{if $\set{u, v, w}$ is a triangle in
the graph and $uv \bef uw \bef vw$}\\
0 &\mbox{otherwise.}
\end{cases}
\]
Likewise, let \[
\tgk_{uvw} = \begin{cases}
1 - (1 - 1/k)^{\degb{uvw} + \degb{uwv}} &\mbox{if $\set{u, v, w}$ is a triangle
in the graph and $uv \bef uw \bef vw$}\\
0 &\mbox{otherwise.}
\end{cases}
\]
We will write $\Tlk, \Tgk$ for $\sum_{(u,v,w) \in V^3} \tlk_{uvw}$, $\sum_{(u,v,w)
\in V^3} \tgk_{uvw}$, respectively, so that $T = \Tlk + \Tgk$.

For any vertex $u \in V$, we will write $\Tlk_{u} = \sum_{(v,w) \in V^2}
\tlk_{uvw}$ and $\Tgk_{u} = \sum_{(v,w) \in V^2} \tgk_{uvw}$, so $\sum_{u \in V}
\Tlk_u = \Tlk$ and $\sum_{u \in V} \Tgk_u = \Tgk$.

\subsection{Approach}
By Lemma~11 of~\cite{K21} (reproduced below), $\Tgk$ can be estimated by an
entirely classical algorithm. 
\begin{lemma}
\label{lm:classtimate}
For any $\varepsilon, \delta \in (0,1\rbrack$, there is a classical streaming
algorithm, using \[
O\paren*{\frac{m^{3/2}}{T \sqrt{k}}\Delta_E\log
n\frac{1}{\varepsilon^2}\log \frac{1}{\delta} }
\]
bits of space in expectation, that estimates $\Tgk$ to $\varepsilon T$
precision with probability $1-\delta$.
\end{lemma}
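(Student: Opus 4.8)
The plan is to build a classical reservoir-sampling estimator that is unbiased for $\Tgk$ and has variance small enough that averaging together with median-of-means boosting gives the claimed accuracy, and then to bound the expected number of stored vertex identifiers. This is the estimator underlying Lemma~11 of~\cite{K21}; since it never touches the sketch the analysis is entirely classical, and we sketch it here.

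First I would pin down the canonical witness of a triangle. If $\{u,v,w\}$ is a triangle with $uv \bef uw \bef vw$, the stream order singles out the vertex $u$ (shared by the two earliest edges) and the ordered wedge $(uv,uw)$; moreover $\degb{uvw}$ counts exactly the edges at $v$ arriving strictly inside the window $(uv,vw)$, and $\degb{uwv}$ the edges at $w$ inside $(uw,vw)$. A single copy of the basic estimator then does the following: each arriving edge enters a reservoir of anchor edges with probability $p$; for an anchor edge $uv$ we begin watching its endpoints and select one later incident edge $uw$ of $u$ (a second Bernoulli or reservoir step) to form a sampled wedge, at which point we also begin watching $w$; while watching, every edge incident to a currently watched window-endpoint independently flips a Bernoulli$(1/k)$ ``mark'' coin, and we keep a single bit recording whether any mark coin came up; when the closing edge $vw$ arrives and is recognized, the copy outputs (call its output $X$) the reciprocal of the probability of having sampled the wedge $(uv,uw)$, times that bit. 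Conditioned on having sampled the wedge, the recorded bit equals $1$ with probability exactly $1-(1-1/k)^{\degb{uvw}+\degb{uwv}} = \tgk_{uvw}$, so the expected contribution of triangle $\{u,v,w\}$ is exactly $\tgk_{uvw}$; and because each triangle has a unique canonical wedge (a sampled wedge has no ``third edge after it'' for any non-canonical labeling), the estimator is unbiased for $\Tgk$.

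The core of the proof is the variance bound. Writing $X$ as a weighted sum of triangle indicators, $\E[X^2]$ splits into (a) pairs of triangles with no common edge, whose samplings and mark coins are independent and which therefore contribute at most $(\Tgk)^2 \le T^2$; and (b) the diagonal together with pairs of triangles sharing an edge, where a common edge forces a shared sampling event and so inflates the joint probability by a factor $1/p$ (or $1/q$) relative to the product --- and here is where $\Delta_E$ enters, since each edge lies in at most $\Delta_E$ triangles, so there are only $O(T\Delta_E)$ such pairs. This yields a variance of order $T\Delta_E$ divided by the effective sampling rate $r$, so that order $\Delta_E/(\varepsilon^2 T r)$ copies suffice by Chebyshev to achieve additive error $\varepsilon T$ with constant probability, and $O(\log(1/\delta))$ independent medians boost this to $1-\delta$. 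The expected space is the expected number of live ``watching'' states over all copies (each holding $O(1)$ vertex identifiers, hence $O(\log n)$ bits); choosing $p$ and $q$ --- and exploiting that the mark rate $1/k$ lets a state ``engage'' only at the right granularity --- to balance the number of states against the variance is what produces the stated $O\paren*{\tfrac{m^{3/2}}{T\sqrt k}\Delta_E\log n \cdot \tfrac{1}{\varepsilon^2}\log\tfrac1\delta}$ bound; the ``in expectation'' is inherited from the reservoir/Bernoulli sampling.

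I expect the main obstacle to be the simultaneous control of space and variance at high-degree vertices: for an anchor edge incident to a vertex $u$ of large residual degree, remembering all later incident edges of $u$ blows up the space, whereas keeping a single reservoir sample inflates the estimator's weight --- and hence its variance --- by that residual degree. Resolving this tension is exactly where one uses the definition of $\Tgk$ (only triangles with large ``middle windows'' contribute non-negligibly) together with the $\Delta_E$ bound, and where the precise tradeoff between the sampling rates and the mark rate $1/k$ is forced, in order to drive the exponent of $m$ down to $3/2$ and extract the $1/\sqrt k$ factor. This is precisely the content of the analysis in~\cite{K21}, which can be followed essentially verbatim here.
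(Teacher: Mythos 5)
The paper offers no proof of this lemma: it is imported wholesale as Lemma~11 of \cite{K21}, with the reader explicitly referred to Section~5 of that paper for the (entirely classical) algorithm, so there is nothing in-paper to compare your argument against. Your sketch of the wedge-sampling estimator with Bernoulli$(1/k)$ marking is a plausible reconstruction of what \cite{K21} does, and since you likewise defer the one quantitatively essential step --- the parameter balancing that actually produces the $m^{3/2}/(T\sqrt{k})$ space bound --- to \cite{K21}, your treatment is consistent with (indeed more detailed than) the paper's.
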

We therefore need to estimate $\Tlk$ with a black-box algorithm, proving:
\begin{lemma}[Black-box version of Lemma~7 of~\cite{K21}]
\label{lm:questimate}
For any $\varepsilon, \delta \in (0,1\rbrack$, there is a streaming
algorithm, using \[
O\paren*{\paren*{\frac{km}{T}}^2\log n \frac{1}{\varepsilon^2}\log
\frac{1}{\delta}}
\]
quantum and classical bits, that estimates $\Tlk$ to $\varepsilon T$
precision with probability $1-\delta$. This algorithm is entirely classical
except for its use of the sketch in Section~\ref{sec:description}.
\end{lemma}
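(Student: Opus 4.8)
The plan is to realize the quantum estimator of~\cite{K21} for $\Tlk$ as a single ``copy'' built around one instance of the pair sketch, then run $N$ independent copies and output a median of means.

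\paragraph{One copy.} Assume a bound $\bar m = \Theta(m)$ on the number of edges is known (otherwise the usual doubling costs an extra $O(\log n)$ factor; we also assume the stream lists distinct edges). Let the universe $U$ consist of all ordered pairs of distinct vertices together with a set $D$ of $2\bar m$ dummy elements, so $|U| = \poly(n)$ and the sketch needs $O(\log n)$ qubits; initialize $\Qc := \create(D)$. When an edge $e = xy$ arrives, flip an independent coin and call $e$ a \emph{query edge} with probability $1/k$. If $e$ is a query edge, then for every vertex $w \notin \set{x,y}$ call $\querypair((w,x),(w,y),\Qc)$; if some call returns $r \in \set{+1,-1}$, record $\mathtt{value} := r$ and stop processing this copy. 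In every case then add $(x,y)$ and $(y,x)$ to $\Qc$ by two $\update$ operations, each a transposition swapping a fresh dummy from $D$ with the new directed edge. If nothing was recorded by the end of the stream, set $\mathtt{value} := 0$. The copy outputs $\bar m k \cdot \mathtt{value}$.

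\paragraph{Expectation.} The heart of the proof is the identity $\E[\bar m k \cdot \mathtt{value}] = \Tlk$. Condition on the set $Q$ of query edges; then the sequence of $\querypair$ and $\update$ operations applied to $\Qc$ is deterministic, so by Lemma~\ref{lm:reordering} every prefix of it leaves behind a deterministic set --- call $T'_{i-1}$ the set just before the $i$-th $\querypair$, assuming the sketch is not yet destroyed --- and the probability that the $i$-th query is the first to destroy the sketch equals $\abs{\set{x_i,y_i} \cap T'_{i-1}}/(2\bar m)$. Indexing the queries $q_i$ by pairs (query edge, apex), this yields $\E[\mathtt{value} \mid Q] = \frac{1}{\bar m}\cdot\#\set{i : \text{both elements of } q_i\text{'s pair lie in } T'_{i-1}}$, because a destroying $q_i$ with both elements present always returns $+1$, whereas with only one present it returns $\pm 1$ with equal probability and so contributes $0$ in expectation --- this cancellation of spurious ``one-edge'' events is precisely what the signed output of $\querypair$ buys us. Then one checks that, for $q_i$ arising from query edge $xy$ and apex $w$, both $(w,x)$ and $(w,y)$ lie in $T'_{i-1}$ exactly when $\set{w,x,y}$ is a triangle whose last-arriving edge is $xy$ (so $w$ is its apex vertex), no edge of $Q$ incident to $x$ arrives strictly between $wx$ and $xy$, and no edge of $Q$ incident to $y$ arrives strictly between $wy$ and $xy$ --- the point being that the element $(w,x)$ enters the sketch when edge $wx$ arrives and is deleted precisely by the first query triggered by a query edge incident to $x$. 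Taking expectation over $Q$ (the event $xy \in Q$ and the two ``no intervening query edge'' events concern disjoint edge sets, hence are independent), the contribution of such a triangle is $\frac{1}{k}(1-1/k)^{\degb{wxy}+\degb{wyx}}$, which is $\frac1k$ times its $\tlk$-weight; summing over triangles gives $\E[\mathtt{value}] = \Tlk/(\bar m k)$.

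\paragraph{Concentration, space, and the hard part.} Since $\mathtt{value}\in\set{-1,0,1\}$ we have $\Var{\bar m k\cdot\mathtt{value}}\le(\bar m k)^2$, so averaging $O((km/T)^2\eps^{-2})$ i.i.d.\ copies gives additive error $\eps T$ with probability $\ge 2/3$, and a median over $O(\log\frac1\delta)$ such averages boosts this to $1-\delta$; each copy stores an $O(\log n)$-qubit sketch plus $O(\log n)$ classical bits (a dummy counter, $\mathtt{value}$, a ``done'' flag, the current coin), for a total of $O\paren*{(km/T)^2\log n\cdot\eps^{-2}\log\frac1\delta}$ bits, all quantum use being black-box calls to $\create$, $\update$, $\querypair$. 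The main obstacle is the expectation identity above: one has to treat the conditioning ``the sketch survives up to $q_i$'' cleanly (Lemma~\ref{lm:reordering} is designed for exactly this), verify the bijection between ``both elements present'' queries and triangles with the correct degree-between exponents, and confirm that the $\pm1$ cancellation kills the potentially numerous one-edge events without perturbing the mean.
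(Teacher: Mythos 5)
Your proposal is correct and follows essentially the same route as the paper: the same sketch over ordered vertex pairs plus $2m$ dummies, the same $1/k$-subsampled batch of disjoint apex queries per arriving edge, the same use of Lemma~\ref{lm:reordering} to handle the survival conditioning, the same $\pm 1$ cancellation of one-element events, and the same $(1-1/k)^{\degb{wxy}+\degb{wyx}}$ accounting leading to $\E[X]=\Tlk$, followed by averaging and a median. The only cosmetic differences are that you condition on the set of query edges globally rather than per triangle, and exclude $w\in\{x,y\}$ from the apex loop; neither changes the argument.
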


To understand the approach, it is helpful to first consider the case where, for
every triangle $uvw$ with $uv \bef uw \bef vw$, there are no other edges
incident to $v$ or $w$ (note that in this case $T = \Tlk$. In that case, we
could use the following strategy:
\begin{itemize}
\item Start with $2m$ ``dummy'' elements in the sketch.
\item Whenever an edge $vw$ arrives, first query the pairs $((u,v),(u,w))$ for
every $u \in V$. Then swap two dummy elements for $(v,w)$ and $(w,v)$ in the
sketch. Terminate with the value returned by the query if it is anything other
than $\bot$.
\end{itemize}
The query would have a $1/m$ chance\footnote{In fact, the chance grows as the
sketch shrinks from previous queries deleting elements from it. However, by our
Lemma~\ref{lm:reordering}, this is exactly canceled out by the probability of
the algorithm terminating before the query.} of returning $+1$ for every such
triangle $uvw$ in the stream, as when $vw$ arrives, $(u,v)$ and $(u,w)$ would
both be in the sketch. Moreover, every pair query with non-zero expectation
(that is, the ones where both elements of the pair queried are in the sketch at
the time of the query) corresponds to such a triangle: if $(u,v)$ and $(u,w)$
are being queried and both $(u,v)$ and $(u,w)$ are in the sketch, that implies
that $vw$ just arrived and $uv$, $uw$ arrived earlier in the stream. Therefore,
multiplying the output of this algorithm by $m$ gives as an unbiased estimator
for the number of triangles with $\bO{m^2}$ variance, and so averaging
$\bT{\frac{m}{T}}$ copies would suffice for e.g.\ any constant accuracy.

Of course, in practice, other edges can be incident to $v$ and $w$. When do
they cause us problems? When an edge incident to $v$ arrives between $uv$ and
$vw$, or an edge incident to $w$ arrives between $uw$ and $vw$, as the
resulting query will delete $uv$ or $uw$ from the graph. Recalling our
definitions section, this is precisely when $\degb{uvw} + \degb{uwv} > 0$.

Our solution is, for each edge, to only perform the $\querypair$ operation with
probability $1/k$, giving us the chance of avoiding these troublesome edges, at
the cost of only having a $1/k$ chance of querying $((u,v),(u,w))$ when we see
$vw$. So now our probability of returning because of the triangle $uvw$ is
proportional to \[
\frac{1}{k}(1 - 1/k)^{\degb{uvw} + \degb{uwv}} = \frac{\tlk_{uvw}}{k}
\]
if we order $u,v,w$ such that $uv \bef uw \bef vw$, and so we will need to
repeat $\bT{k^2}$ times to estimate $\Tlk$.

One might note here that the definition of $\Tlk$ seems to match what the
algorithm achieves conveniently well. This is of course not a
coincidence:~\cite{K21} splits up $T$ this way precisely because $\Tlk$
captures something the quantum estimator can easily calculate, while $\Tgk$ can
be efficiently calculated by a classical estimator. As the latter point is
already entirely classical, we will not discuss it here: see Section~5
of~\cite{K21} for that algorithm.

\subsection{Algorithm}
We now introduce our estimator for $\Tlk$ using the pair sampling sketch of
Section~\ref{sec:description} as a subroutine. Our universe for the sketch will
be $U = \brac{n}^2  \cup \paren*{[2m] \times \set{S}}$, where $S$ is a fixed
label denoting a ``scratch'' element. Let $g:[m]\to \bool$ be a fully
independent hash function\footnote{Storing such a function would, in general,
require $\bT{m\log m}$ bits. However, we will only query it at any given value
once, so it can be implemented by sampling random bits and discarding them
after use. The use of hash function notation is for readability.} such that \[
g(\ell) = \begin{cases}
1 & \mbox{with probability $1/k$}\\
0 & \mbox{otherwise.}
\end{cases}
\]
\begin{algorithm}
\caption{Quantum streaming algorithm for estimating $\Tlk$}\label{alg:triangle_counting}
\begin{algorithmic}[1]
\State $\Qc := \create([2m] \times \{S\})$. 
\State $\ell := 0$
\For{$uv \in E$:}
	\State $\ell := \ell + 1$
	\If{$g(\ell) = 1$}
		\For{$w \in V$:}
			\State $r := \querypair((w, u), (w, v), \Qc)$
			\If{$r \ne \bot$}
    	    	\textbf{return} $rkm$
    		\EndIf
    	\EndFor
	\EndIf	
	\State $\update(\pi_{\ell, uv}, \Qc)$
\EndFor
\State \textbf{return} 0
\end{algorithmic}
\end{algorithm}

Here $\pi_{\ell, uv}$ denotes the permutation executing the following swaps:
\begin{itemize}
	\item $(2\ell - 1, S)$ for $(u, v)$
	\item $(2\ell, S)$ for $(v, u)$ 
\end{itemize}
So as $\ell$ is incremented at each timestep, this operation swaps two
``scratch elements'' for $(u,v)$ and $(v,u)$.

\begin{lemma}
\label{lm:triangle_stateinv}
For all $\ell = 0, \dots, m$, after the algorithm has processed $\ell$ updates,
if it has not yet terminated the underlying set of $\Qc$ is 
\[
	T_{\ell} = \set{(i, S) : i = 2\ell + 1, \dots, 2m} \cup S_\ell
\]
where 
\[
	S_\ell = \set*{(u, v) : \exists i \in \brac{\ell}, uv = \sigma_i, \underset{j=i+1,\dots,\ell}{\forall} \paren*{g(j) = 0 \vee v \not \in \sigma_j}}
\]
and $\sigma_i$ denotes the $i\nth$ edge in the stream. The size of $T_{\ell}$ is $|T_{\ell}| = 2m - 2\ell + |S_{\ell}|$.
\end{lemma}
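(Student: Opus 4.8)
The plan is to prove the statement by induction on $\ell$, tracking the underlying set of $\Qc$ through the processing of each stream edge, in the same spirit as the proof of Lemma~\ref{lm:bhminvariant}. Throughout I will use the (standard, and here implicit) assumption that each edge of $G$ appears exactly once in the stream, since the algorithm allocates exactly two scratch elements per stream update. For $\ell = 0$ the sketch has just been initialized by $\create([2m]\times\{S\})$, so its underlying set is $\{(i,S) : i \in [2m]\}$, which is $T_0$ since $S_0 = \emptyset$, and $|T_0| = 2m = 2m - 0 + |S_0|$.

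For the inductive step, assume that after $\ell$ updates the algorithm has not terminated and the underlying set is $T_\ell = \{(i,S) : 2\ell+1 \le i \le 2m\} \cup S_\ell$ with $S_\ell \subseteq [n]^2$. Suppose the next edge is $\sigma_{\ell+1} = uv$ (note $u \ne v$, so the queries below are well-formed). If $g(\ell+1) = 1$ and some query $\querypair((w,u),(w,v),\Qc)$ returns a value other than $\bot$, the algorithm terminates and the claim for $\ell+1$ is vacuous. Otherwise I will compute the new underlying set in the two remaining cases and check it equals $T_{\ell+1}$. Three facts drive this: (i) since $\{u,v\}$ has not appeared before step $\ell+1$, neither $(u,v)$ nor $(v,u)$ lies in $S_\ell$, hence neither is in $T_\ell$; (ii) the scratch elements $(2\ell+1,S)$ and $(2\ell+2,S)$ do lie in $T_\ell$, so by its definition the swap $\pi_{\ell+1,uv}$ replaces them by $(u,v)$ and $(v,u)$ while leaving exactly the scratch elements $(i,S)$ with $2(\ell+1)+1 \le i \le 2m$; and (iii) the pairs $\{(w,u),(w,v)\}$ for $w \in V$ are pairwise disjoint, and by the definition of $\querypair$ (and Lemma~\ref{lm:reordering}) conditioning on all of them returning $\bot$ deterministically removes from the set every present element of the form $(a,b)$ with $b \in \{u,v\}$ — that is, every queried element — and adds nothing.

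It then remains to reconcile this with the definition of $S_{\ell+1}$. Comparing the defining condition of $S_{\ell+1}$ with that of $S_\ell$, the only change is the extra clause for $j = \ell+1$, namely $g(\ell+1) = 0 \vee b \notin \sigma_{\ell+1} = \{u,v\}$, together with the new index $i = \ell+1$, which (with its vacuous inner condition) contributes exactly $(u,v)$ and $(v,u)$. If $g(\ell+1) = 0$, no queries are made, the new clause is vacuously true, so $S_{\ell+1} = S_\ell \cup \{(u,v),(v,u)\}$; matching this against the lone $\update$ performed in this case yields $T_{\ell+1}$. If $g(\ell+1) = 1$ and all queries return $\bot$, the new clause restricts the old pairs to those whose second coordinate avoids $\{u,v\}$, so $S_{\ell+1} = \{(a,b) \in S_\ell : b \notin \{u,v\}\} \cup \{(u,v),(v,u)\}$; this is exactly the post-query set of fact (iii) with $(u,v),(v,u)$ reinserted by $\pi_{\ell+1,uv}$ as in fact (ii). In both cases the new underlying set is $T_{\ell+1}$, and $|T_{\ell+1}| = 2m - 2(\ell+1) + |S_{\ell+1}|$ because the scratch part contributes $2m - 2(\ell+1)$ elements and is disjoint from $S_{\ell+1} \subseteq [n]^2$.

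The only place real care is needed — the main obstacle — is the bookkeeping of the nested-quantifier description of $S_\ell$: one must see clearly that the batch of queries $\querypair((w,u),(w,v),\Qc)$ over all $w$ deletes precisely the pairs whose second coordinate is an endpoint of the newly arrived edge, and that this deletion is exactly the event recorded by the $j = \ell+1$ clause in the definition of $S_{\ell+1}$. Distinctness of the stream edges is also used, both to guarantee that the elements introduced by the swap $\pi_{\ell+1,uv}$ are fresh and to ensure that a deleted pair has no alternative earlier occurrence that could keep it in $S_{\ell+1}$.
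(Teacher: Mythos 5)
Your proof is correct and follows essentially the same route as the paper's: induction on $\ell$, observing that the batch of $\querypair((w,u),(w,v),\Qc)$ calls (conditioned on all returning $\bot$) deletes exactly the elements of $S_\ell$ whose second coordinate lies in $\{u,v\}$ — which is precisely the effect of the new $j=\ell+1$ clause in the definition of $S_{\ell+1}$ — and that $\pi_{\ell+1,uv}$ then swaps two fresh scratch elements for $(u,v)$ and $(v,u)$. Your explicit attention to the distinctness of stream edges and the disjointness of the queried pairs is consistent with (and slightly more careful than) the paper's argument.
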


\begin{proof}
We proceed by induction on $\ell$. For $\ell = 0$, 
\[
	T_{\ell} = \set{(i, S) \mid i \in [2m]}
\]
and so the result holds. Now, for any $\ell \in \brac{m-1}$, suppose that the result
holds after $\ell$ updates.  Let $xy$ (with $x<y$) be the $(\ell + 1)\nth$
update. 

First consider the effect of the $\querypair$ operations. If $g(\ell + 1) = 0$,
they do not happen, and so the underlying set of $\Qc$ is unchanged. If $g(\ell
+ 1) = 1$, we can assume all of them return $\bot$, as otherwise the algorithm
would terminate. Therefore, their effect on the underlying set of $\Qc$ is to delete \[
   \set{(w, z) \in S_{\ell} : z \in \set{x,y}}
\]
from it. Therefore, whether $g(\ell + 1) = 0$ or $1$, the underlying set of $\Qc$ after the $\querypair$ operations and before the $\update$ operation is \[
\set{(i, S) : i = 2\ell + 1, \dots, 2m} \cup S_\ell'
\]
where \[
S_\ell' = \set*{(u, v) : \exists i \in \brac{\ell}, uv = \sigma_i,
\underset{j=i+1,\dots,\ell+1}{\forall} \paren*{g(j) = 0 \vee v \not \in
\sigma_j}}.
\]
Note that $S_\ell' \cup \set{(x,y),(y,x)} = S_{\ell + 1}$. Next, the algorithm will execute $\update(\pi_{(\ell+1), xy}, \Qc)$, and
the set underlying $\Qc$ will become
\[
	\set{(i, S) : i = 2\ell + 3, \dots, 2m} \cup S_{\ell} \cup \set{(x, y), (y,
	x)} = T_{\ell+1}
\]
completing the proof.
\end{proof}
For the rest of the analysis we will write $X$ for the random variable
corresponding to the output of the algorithm.
\begin{lemma}
\label{lm:triangle_exp}
\[
\Expect{X} = \Tlk
\]
\end{lemma}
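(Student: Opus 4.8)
The plan is to compute $\Expect{X}$ by summing, over every stream position $\ell$ where a potential ``completing edge'' arrives, the probability that the algorithm terminates at that position with a $+1$ returned by a query corresponding to a genuine triangle (times $km$), minus nothing, since spurious terminations will be shown to contribute $0$ in expectation. The crucial bookkeeping tool is Lemma~\ref{lm:reordering}: for any particular query $\querypair((w,u),(w,v),\Qc)$ made when processing the $\ell\nth$ edge $uv$ (conditioned on $g(\ell)=1$), the probability that the algorithm has \emph{not} already terminated before this query is $\frac{|S'|}{|T_0|} = \frac{|S'|}{2m}$, where $S'$ is the underlying set of the sketch at the moment this query is made. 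Because all queries made when processing a single edge $uv$ are over disjoint pairs (they are $((w,u),(w,v))$ for distinct $w$), an element of such a pair is in $S'$ if and only if it is in the set $T_{\ell-1}$ after the $\update$ following edge $\ell-1$ together with the deletions from earlier queries in this same batch — but since the batches are disjoint, membership in $S'$ for the pair $\{(w,u),(w,v)\}$ is equivalent to membership in $T_{\ell-1}$. So the probability that this query returns $+1$ \emph{and} the algorithm has not already terminated is $\frac{|S'|}{2m}\cdot\frac{2}{|S'|} = \frac{1}{m}$ if both $(w,u)$ and $(w,v)$ are in $T_{\ell-1}$, is $\frac{1}{2m}\cdot\tfrac12$ in expectation (averaging the $\pm1$) if exactly one is, and $0$ if neither is.

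First I would identify, for a fixed $\ell$ with $uv=\sigma_\ell$ and fixed $w$, exactly when both $(w,u)$ and $(w,v)$ lie in $S_{\ell-1}$ (equivalently $T_{\ell-1}$). By the characterization in Lemma~\ref{lm:triangle_stateinv}, $(w,u)\in S_{\ell-1}$ means $wu$ arrived at some position $i<\ell$ and no ``active'' edge (one with $g=1$) incident to $u$ arrived strictly between $i$ and $\ell$; similarly for $(w,v)$. Combined with the fact that $uv$ is arriving now at position $\ell$, this says exactly: $wu,wv,uv$ form a triangle, $wu\bef wv\bef uv$ or $wv\bef wu\bef uv$ — i.e.\ $uv$ is the last of the three edges — and no active edge incident to $u$ falls between $wu$ and $uv$, and no active edge incident to $v$ falls between $wv$ and $uv$. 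Renaming so that the triangle is $(a,b,c)$ with $ab\bef ac\bef bc$ (so $a=w$, $\{b,c\}=\{u,v\}$, and $bc=\sigma_\ell$), the event ``both in $S_{\ell-1}$ and $g(\ell)=1$'' is the event that $g$ evaluates to $1$ on the positions of $ab$... wait, more precisely: we need $g(\ell)=1$ (the completing edge is active, triggering the batch) and we need no active edge incident to $b$ strictly between $ab$ and $bc$, and none incident to $c$ strictly between $ac$ and $bc$. The number of such ``dangerous'' positions is exactly $\degb{abc}+\degb{acb}$ by definition, and each is independently active with probability $1/k$; also $g$ on position $\ell$ is independent of all of these. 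Hence the probability of the whole event is $\frac{1}{k}(1-1/k)^{\degb{abc}+\degb{acb}}$.

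Next I would assemble the expectation. Summing $km\cdot\frac1m = k$ times this probability over all triangles gives contribution $\sum_{\text{triangles }(a,b,c):\,ab\bef ac\bef bc} k\cdot\frac1k(1-1/k)^{\degb{abc}+\degb{acb}} = \sum (1-1/k)^{\degb{abc}+\degb{acb}} = \sum_{(u,v,w)\in V^3}\tlk_{uvw} = \Tlk$, using the definition of $\tlk$. It remains to argue the spurious terminations (queries returning $\pm1$ where exactly one of $(w,u),(w,v)$ is in the sketch) contribute $0$. For any fixed such query, conditioned on reaching it, it returns $+1$ and $-1$ with equal probability, so its contribution to $\Expect{X}$ is $km\cdot\big(\Pr[\text{reach, return }+1]-\Pr[\text{reach, return }-1]\big)=0$; summing the zero contributions over all $\ell,w$ of this type still gives $0$. (One must also note the event ``terminate here and return $r$'' is, across different $(\ell,w)$, a partition of the terminating outcomes, so linearity of expectation applies cleanly; and the ``neither in the sketch'' queries never cause termination, contributing $0$ trivially; and if the algorithm never returns a non-$\bot$ it outputs $0$.) Adding up, $\Expect{X}=\Tlk$.

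The main obstacle I anticipate is the careful translation in the second paragraph: rigorously matching the set-membership conditions from Lemma~\ref{lm:triangle_stateinv} to the combinatorial description of a triangle with a prescribed edge-arrival order and a prescribed set of ``blocking'' edges that must be inactive, and confirming that the count of blocking positions is precisely $\degb{abc}+\degb{acb}$ with full mutual independence (including independence from $g(\ell)$, the activation of the completing edge). A secondary subtlety is making the ``not yet terminated'' probability clean: one must invoke Lemma~\ref{lm:reordering} correctly to see that the $|S'|$ appearing in the per-query survival probability cancels exactly against the $1/|S'|$ (resp.\ $2/|S'|$) in the query's success probability, so that the order in which the various batches and within-batch queries occur is irrelevant and each relevant query contributes exactly $\frac1m$ (resp.\ $\frac1{2m}$) to the relevant joint probability — this is where the disjointness of the queried pairs, both within a batch and (trivially) across batches, is used.
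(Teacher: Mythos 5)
Your proposal is correct and follows essentially the same route as the paper's proof: decompose the expectation over (completing edge, third vertex) pairs, use Lemma~\ref{lm:triangle_stateinv} to characterize when both $(w,u)$ and $(w,v)$ survive in the sketch, invoke Lemma~\ref{lm:reordering} so the survival probability $\abs{S'}/2m$ cancels against the query's $2/\abs{S'}$ success probability, note the $\pm1$ cancellation for queries with exactly one element present, and use independence of $g$ to get the factor $(1-1/k)^{\degb{uvw}+\degb{uwv}}$. The only difference is presentational (you compute joint probabilities of terminating outcomes where the paper conditions on events $\Ec,\Fc$), so there is nothing substantive to add.
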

\begin{proof} 
We can write $X = \sum_{vw \in E}\sum_{u \in V}X_{u,vw}$, where $X_{u,vw}$ is
the random variable that is $0$ if, when $vw$ is the $\ell\nth$ edge in the
stream, $g(\ell) = 0$ and therefore no query is performed in that update, or
the $\querypair((u,v),(u,w))$ operation returns $\bot$, or if any previous
$\querypair$ operation destroys the sketch, and $rkm$ if it returns $r$.

We will show that for every vertex $u \in V$ and edge $vw \in E$, \[
\Expect{X_{u,vw}} = \tlk_{uvw} + \tlk_{uwv}
\]
which by summing over the $X_{u,vw}$ will give us the lemma.

First, consider the case where $\tlk_{uvw} + \tlk_{uwv}= 0$. By the definition
of $\tlk_{uvw}$, this implies that $uvw$ is not a triangle in $G$, or that it
is but at least one of $uv$ and $uw$ arrived after $vw$ in the stream. In
either case, by Lemma~\ref{lm:triangle_stateinv}, at least one of $(u,v)$ and
$(u,w)$ will not be in the sketch at the time the query occurs, if it occurs at
all. If neither of them is, the query is guaranteed to return $\bot$ and so
$\Expect{X_t} = 0$. If exactly one of them is, then either the query returns
$\bot$ or it is equally likely to return $1$ or $-1$, and so again
$\Expect{X_t} = 0$.

Now suppose $\tlk_{uvw} + \tlk_{uwv} > 0$. Without less of generality, let
$\tlk_{uvw}$ be the one that is non-zero, so $uv \bef uw \bef vw$. Now, if $vw$
is the $\ell\nth$ edge to arrive, this query happens iff $g(\ell) = 1$, and if
so the set underlying the sketch before any of the queries in the $\ell\nth$
update are performed is $T_{\ell - 1}$ by Lemma~\ref{lm:triangle_stateinv}. As
the queries performed in one update are all disjoint, $(u,v)$ and $(u,w)$ are
therefore in the sketch at the time the query is performed iff they are in
$T_\ell$.

By the definition of $T_{\ell-1}$, $uv, uw$ are in the set iff there are no edges incident
to $v$ that arrive after $uv$ in an update $\ell' < \ell$ such that $g(\ell')
= 1$, \emph{and} there are no edges incident to $w$ that arrive after $uw$ in
an update $\ell' < \ell$ such that $g(\ell') = 1$. Over the random choice of
$g$, the probability of this happening is $(1 - 1/k)^{\degb{uvw} + \degb{uwv}}$.

Let $\Fc$ be the event that this happens, and let $\Ec$ be the event that the
query occurs (i.e.\ the algorithm does not terminate before the query takes
place, and $g(\ell) = 1$). Then, by the specification of $\querypair$, and the fact that the
algorithm returns $rkm$ when the query returns $r$, \[
\Expect{X_t | \Ec, \Fc} = \frac{2km}{\abs{S'}}
\]
where $S'$ is the set just before the query occurs, and by
Lemma~\ref{lm:reordering} and the random choice of $g$ (noting that $g$ is fully
independent and so $g(\ell)$ is independent of $\Fc$, $\prob{\Ec} =
\frac{\abs{S'}}{2km}$, so \[
\Expect{X_t | \Fc} = 1.
\]
Clearly $\Expect{X_t | \overline{\Fc}, \overline{\Ec}} = 0$, and if the query
happens but $\Fc$ doesn't, there is at most one of $uv, uw$ in the sketch at
the time, and so by the same argument as in the $\tlk_{uvw} + \tlk_{uwv} = 0$
case, $\Expect{X_t | \overline{\Fc},\Ec} = 0$ too.

So we conclude that \[
\Expect{X_t} = \Expect{X_t | \Fc} \prob{\Fc} = (1 - 1/k)^{\degb{uvw} + \degb{uwv}} = \tlk_{uvw} + \tlk_{uwv}
\]
concluding the proof.
\end{proof}

\begin{lemma}
\label{lm:triangle_var}
\[
	\Var{X} \le (km)^2
\]
\end{lemma}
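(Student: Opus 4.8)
The plan is to observe that $X$ is a bounded random variable, and then bound its variance by its second moment. Reading off Algorithm~\ref{alg:triangle_counting}, there are only two kinds of output: either every $\querypair$ call returns $\bot$, in which case the algorithm falls through to the final line and returns $0$; or some $\querypair$ call returns a value $r \in \set{+1,-1}$, at which point the sketch is destroyed and the algorithm immediately halts, returning $rkm$. Since the algorithm returns as soon as a query produces a non-$\bot$ value, at most one such value is ever produced, so $X$ takes values in $\set{-km, 0, km}$, and in particular $X^2 \le (km)^2$ with probability $1$.

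From here the conclusion is a single line: $\Var{X} = \Expect{X^2} - \Expect{X}^2 \le \Expect{X^2} \le (km)^2$, using that the second moment of a random variable supported on $\set{-km,0,km}$ is at most $(km)^2$.

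The only thing worth remarking is that there is essentially no obstacle: the bound is deliberately crude, using neither $\Expect{X} = \Tlk$ from Lemma~\ref{lm:triangle_exp} nor the finer structure of the query success probabilities. It is loose but exactly what is needed downstream, where one averages $\bT{(km/T)^2 \varepsilon^{-2}\log(1/\delta)}$ independent copies of $X$ and applies a standard concentration argument to obtain the $\pm\varepsilon T$ estimate of $\Tlk$ with failure probability $\delta$ promised by Lemma~\ref{lm:questimate}.
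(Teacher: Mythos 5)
Your proof is correct and matches the paper's argument exactly: the paper's one-line proof is ``This follows from the fact that $\abs{X} \le km$,'' and you have simply spelled out the same bound $\Var{X} \le \Expect{X^2} \le (km)^2$ in full.
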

\begin{proof}
This follows from the fact that $\abs{X} \le km$.
\end{proof}

\begin{lemma}
\label{lm:triangle_questimate}
For any $\varepsilon, \delta \in (0,1\rbrack$, there is a quantum streaming
algorithm, using 

\[
	O\paren*{\paren*{\frac{km}{T}}^2\log n \frac{1}{\varepsilon^2}\log
	\frac{1}{\delta}}
\]

quantum and classical bits, that estimates $\Tlk$ to $\varepsilon T$
precision with probability $1-\delta$.
\end{lemma}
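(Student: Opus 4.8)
The plan is to combine the single-copy guarantees established in Lemmas~\ref{lm:triangle_exp} and~\ref{lm:triangle_var} with a standard ``average many independent copies, then take a median'' amplification, and then account for the space cost of each copy. Concretely, I would run $N = \bT{\frac{(km)^2}{T^2 \eps^2}}$ independent copies of Algorithm~\ref{alg:triangle_counting}, let $Y$ be the average of their outputs, and argue via Chebyshev that a single such average is within $\eps T$ of $\Tlk$ with probability at least, say, $2/3$; then I would run $\bT{\log \frac1\delta}$ independent groups of these and take the median, using a Chernoff bound to push the failure probability down to $\delta$. The reason this works with the stated $N$ is that $\Var{Y} = \Var{X}/N \le (km)^2/N$, so choosing $N = \Theta\!\paren*{\frac{(km)^2}{(\eps T)^2}}$ gives $\Var{Y} \le (\eps T)^2/3$, and Chebyshev yields $\prob{\abs{Y - \Tlk} \ge \eps T} \le 1/3$ since $\Expect{Y} = \Tlk$ by Lemma~\ref{lm:triangle_exp}.

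For the space bound, each copy of Algorithm~\ref{alg:triangle_counting} uses a single sketch $\Qc$ over the universe $U = \brac{n}^2 \cup (\brac{2m}\times\set{S})$, which has size $\poly(n)$ (assuming $m = \poly(n)$, the standard convention), so by Theorem~\ref{thm:implementability} each sketch costs $\bO{\log n}$ qubits; the surrounding classical bookkeeping (the counter $\ell$, the loop over $w \in V$, the on-the-fly bits of $g$) is also $\bO{\log n}$. Multiplying by the $N \cdot \bT{\log\frac1\delta} = \bT{\frac{(km)^2}{T^2\eps^2}\log\frac1\delta}$ total copies gives the claimed bound $\bO{\paren*{\frac{km}{T}}^2 \log n \cdot \frac{1}{\eps^2}\log\frac1\delta}$.

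I expect the main subtlety — though not a real obstacle — to be the bookkeeping around \emph{expected} versus worst-case space: Lemma~\ref{lm:triangle_questimate} as stated (unlike Theorem~\ref{thm:trianglecounting}) claims a worst-case bound, and indeed each copy here always uses exactly one sketch of $\bO{\log n}$ qubits regardless of the stream, so there is no expectation needed at this stage — the in-expectation language only enters later when Lemma~\ref{lm:classtimate} is combined in. A second minor point is that the sampled bits of $g$ must be generated and discarded on the fly (as the footnote in the algorithm's definition notes), so that storing $g$ does not blow up the space; since each value $g(\ell)$ is used only within the processing of edge $\ell$, this is immediate. Everything else is the routine Chebyshev-plus-median argument, so I would state it compactly.
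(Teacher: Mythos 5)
Your proposal is correct and matches the paper's proof essentially verbatim: average $\bT{\paren*{km/(T\varepsilon)}^2}$ parallel copies, apply Chebyshev to get success probability $2/3$, take a median over $\bT{\log\frac{1}{\delta}}$ repetitions, and note that each copy costs $\bO{\log n}$ qubits and classical bits since $\abs{U} = \bO{n^2}$. Your side remarks on worst-case versus expected space and on generating $g$ on the fly are accurate but not needed beyond what the paper already states.
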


\begin{proof}
By running the algorithm $\bT{\paren*{km/T\varepsilon}^2}$ times in parallel
and averaging the outputs, we obtain an estimator with expectation $\Tlk$ and
variance at most $\varepsilon^2 T^2/3$. So by Chebyshev's inequality the
estimator will be within $\varepsilon T$ of $\Tlk$ with probability $2/3$. We
can then repeat this $\bT{\log \frac{1}{\delta}}$ times and take the median of
our estimators to estimate $\Tlk$ to $T\varepsilon$ precision with probability
$1 - \delta$.

Each copy of the algorithm uses $\bO{\log{n}}$ qubits since the size of the
universe is $|U| = \bO{n^2}$ and $\bO{\log n}$ classical bits, and so the
result follows.
\end{proof}

We can now prove the main theorem of this section.
\trianglecounting*
\begin{proof}
By combining Lemma~\ref{lm:triangle_questimate} with Lemma~\ref{lm:classtimate}
(Lemma~7 of~\cite{K21}) and setting $k = \frac{T^{2/5}}{m^{1/5}}
\Delta_E^{2/5}$.
\end{proof}

\section{Maximum Directed Cut}
\label{sec:dicut}
In this section, we prove that the quantum streaming algorithm for Maximum
Directed Cut (max-dicut) introduced in~\cite{KPV24} can be implemented as an
algorithm that is classical other than in its use of our quantum sketch from
Section~\ref{sec:description}.
\begin{definition}[Max-Dicut]
In the (streaming) maximum directed cut problem with approximation factor
$\alpha$, a directed graph $G = (V,E)$ is received one edge at a time in the
stream. The objective of the algorithm is to return an $\alpha$-approximation to \[
\mdicut(G) = \max_{x \in \bool^{V}} \sum_{\dedge{uv} \in E} (1 - x_u)x_v
\]
defined as returning a value in $\brac{\alpha\cdot \mdicut(G), \mdicut(G)}$.
\end{definition}
Equivalently, $\mdicut(G)$ is the maximum, over all partitions $V = V_0 \sqcup
V_1$, of the number of edges $\dedge{uv} \in E$ such that $u \in V_0$ and $v
\in V_1$.

In \cite{KPV24}, the authors designed a quantum streaming algorithm that
$0.4844$-approximates the value of max-dicut in polylogarithmic quantum space.
This algorithm, following the classical results of~\cite{SSSV23b,S23}, makes
use of the fact that an approximation to $\mdicut(G)$ can be derived from a
``snapshot'' of $G$ that groups the vertices $v \in V$ by their biases $b_v =
\frac{\dout_v - \din_v}{d_v}$, where $\dout_v$ is the number of edges with $v$
as their head, $\din_v$ is the number of edges with $v$ as their tail, and $d_v
= \dout_v + \din_v$.
\begin{definition}[Snapshot]
Let $\tb \in \brac{-1,1}^\ell$ be a vector of bias thresholds. The (first-order)
snapshot $\hist{G} \in \Nbb^{\ell \times \ell}$ of $G = (V,E)$ is given by: \[
\hist{G}_{i,j} = \abs{\set{\dedge{uv} \in E}: u \in H_i, v \in H_j}
\]
where $H_i$ is the $i\nth$ ``bias class'', given by \[
H_i = \begin{cases}
\set{v \in V : b_v \in \interval{\tb_i,\tb_{i+1}}} & \mbox{$i \in \brac{\ell - 1}$}\\
\set{v \in V : b_v \in \brac{\tb_{\ell}, 1}} & \mbox{$i = \ell$.}
\end{cases}
\]
\end{definition}
The approximation algorithms need only a constant number of bias classes, so
$\ell = \bO{1}$.

More precisely, the algorithm of~\cite{KPV24} approximates an object called the
``pseudosnapshot'', a version of the snapshot with some noise added that adds
some errors to the biases of the vertices, to account for the fact that a small
change in the bias of a high-degree vertex could substantially change the
counts in the snapshot if it happened to be near a border. They then show that
using this instead of the snapshot does not introduce too much error.

The quantum algorithms component of~\cite{KPV24} is entirely encapsulated in
their Lemma~10, which establishes that a quantum algorithm can estimate this
pseudosnapshot with adequate accuracy. In our Lemma~\ref{lm:psnapest}, we give
a version of this lemma where the algorithm is entirely classical other than in
using our quantum sketch. As both the definition of the pseudosnapshot and
Lemma~\ref{lm:psnapest} are quite ornate, we will defer them until later in
this section. As this lemma is identical to Lemma~10 of~\cite{KPV24} outside of
the ``black-box'' use of the quantum sketch, combining it with the other
results in~\cite{KPV24} immediately gives our result.
\begin{theorem}
\label{thm:mdcutalg}
There is a streaming algorithm which $0.4844$-approximates the \mdcut{} value
of an input graph $G$ with probability $1 - \delta$. The algorithm uses
$\bO{\log^5 n \log \frac{1}{\delta}}$ qubits of space. This algorithm is
classical other than in using the quantum sketch from
Section~\ref{sec:description}.
\end{theorem}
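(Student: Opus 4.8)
The plan is to reduce Theorem~\ref{thm:mdcutalg} to the framework already established in~\cite{KPV24}, replacing only the quantum component with a black-box use of our sketch. Since the excerpt explicitly states that the algorithm of~\cite{KPV24} derives a $0.4844$-approximation from an estimate of the \emph{pseudosnapshot}, and that every quantum ingredient of that paper is encapsulated in their Lemma~10, the strategy is: (i) state the pseudosnapshot definition and the black-box estimation lemma (our Lemma~\ref{lm:psnapest}); (ii) prove Lemma~\ref{lm:psnapest} using only the operations $\create$, $\update$, $\queryone$, $\querypair$ and the reordering guarantee of Lemma~\ref{lm:reordering}; and (iii) invoke the remaining (entirely classical) analysis of~\cite{KPV24}---the reduction from snapshot to \mdcut{} value, the bound showing the pseudosnapshot approximates the snapshot well enough, and the choice of thresholds and rounding scheme---as a black box.

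\medskip

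\noindent The core work is step (ii). First I would set up the sketch universe: each vertex $v$ gets one or more ``stacks'' of tokens $(v,d,\ldots)$, where the presence of $(v,d)$ encodes that $v$ has accumulated degree at least $d$ in the stream so far, with auxiliary coordinates encoding enough of the split between $\din_v$ and $\dout_v$ to recover (a suitably discretized, noised version of) the bias $b_v$. As in Section~\ref{sec:dicut}'s overview, each incident edge triggers an $\update$ that pushes a new token onto the stack (swapping in a dummy/scratch element), so that a $\querypair((u,d),(v,d))$ issued when edge $\dedge{uv}$ arrives tests whether both endpoints currently have degree $\ge d$; because the query removes $(u,d)$ and $(v,d)$ but the next incident edge re-pushes them, the disjointness needed for Lemma~\ref{lm:reordering} is maintained within each batch. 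I would then, for a carefully chosen threshold $d$ (polylogarithmic, matching the pseudosnapshot's noise scale) and with subsampling probability tuned to the target variance, use $\querypair$ to sample edges whose endpoints fall into a given pair of bias classes $(H_i,H_j)$, exactly mirroring the pair-sampling structure described in the introduction. The $\pm 1$ sign returned by $\querypair$ is used in expectation to cancel the contribution of ``spurious'' samples where only one endpoint is actually in $T$, so the estimator is unbiased for the relevant pseudosnapshot entry; running $\plog(n)$ parallel copies and combining them (average then median, as in Lemma~\ref{lm:triangle_questimate}) drives the failure probability to $\delta$ and the additive error below the tolerance that the classical analysis of~\cite{KPV24} requires. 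The space bookkeeping gives $\bO{\log n}$ qubits per copy (universe size $\poly(n)$) times $\bO{\log^4 n \log\frac1\delta}$ copies for the $\ell^2 = \bO{1}$ entries across all required thresholds and accuracy levels, yielding the claimed $\bO{\log^5 n\log\frac1\delta}$.

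\medskip

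\noindent **The main obstacle** I anticipate is faithfully reproducing the pseudosnapshot bookkeeping of~\cite{KPV24} inside the rigid interface of our sketch---in particular, encoding \emph{both} in- and out-degree information (and the extra randomness that defines the pseudosnapshot rather than the snapshot) into a single token stack per vertex while keeping all queries within a batch disjoint, and verifying that the deletions caused by high-degree vertices (the analogue of the overlap problem in Section~\ref{sec:triangle}) are absorbed correctly by the stack-repush mechanism. This is exactly the place where ``the novelty lies on the classical side'': once the encoding is pinned down, the correctness of the estimator follows mechanically from Lemma~\ref{lm:reordering} (probability of non-destruction proportional to set size) and the $\querypair$ specification, in the same pattern as the triangle-counting proof. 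I would therefore spend the bulk of the section stating the pseudosnapshot definition precisely, giving the stack encoding, proving a state-invariant lemma (the analogue of Lemma~\ref{lm:triangle_stateinv}) describing the underlying set after any prefix of the stream, and then an expectation/variance pair of lemmas (analogues of Lemmas~\ref{lm:triangle_exp} and~\ref{lm:triangle_var}); after that, Lemma~\ref{lm:psnapest} and hence Theorem~\ref{thm:mdcutalg} follow by citing the classical reductions of~\cite{KPV24} verbatim.
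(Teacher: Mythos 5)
Your proposal matches the paper's proof: Theorem~\ref{thm:mdcutalg} is obtained by proving a black-box version of Lemma~10 of~\cite{KPV24} (our Lemma~\ref{lm:psnapest}) via exactly the structure you describe---a per-vertex stack encoding with out-degree information folded into higher-order positions of the degree counter, a state-invariant lemma, expectation/variance lemmas relying on Lemma~\ref{lm:reordering} and the $\pm 1$ cancellation, and then citing the classical reduction of~\cite{KPV24} verbatim. The obstacle you flag (encoding both degree counts in one stack while keeping batch queries disjoint) is precisely the technical heart of the paper's Section~\ref{sec:dicut}, resolved there by the $\inc$/$\measure$/$\cleanup$ operations and the interval bookkeeping of Lemma~\ref{lm:stinv}.
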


To illustrate our techniques, we start by considering a simpler problem, the
``heavy edges'' problem. In this problem, the goal is to estimate the number of
directed edges $\dedge{uv}$ in the stream such that $u$ and $v$ have high
degree amongst the edges that arrive before $\dedge{uv}$. This problem can be
solved in $\bO{\log{n}}$ by a classical algorithm with access to the ``pair
counting'' sketch, and might itself be of interest as a primitive for other
applications. The algorithm itself captures the key idea of the
``pseudosnapshot'' algorithm while avoiding some of the complications that make
pseudosnapshot estimation difficult.

\subsection{Counting Heavy Edges}
\label{sec:heavy_edges}
In this section we show how to use the quantum sketch to solve the following
problem: given a stream of directed edges, how many of them are incident to a
vertex that has \emph{already} seen the arrival of a large number of edges
incident to it?

Let $d_v^{\leq \dedge{uv}}$ be the number of edges incident to $v$ that arrive
before $\dedge{uv}$ in the stream, including $\dedge{uv}$ itself.

\begin{definition}[Heavy Edges Problem]
In the $(d_H, d_T)$-heavy edges problem, with accuracy parameter $\eps$, we
receive a stream of directed edges and our goal is to estimate, to $\pm \eps m$
accuracy, the number of edges $\dedge{uv}$ in the stream such that: \[
d_u^{\leq\dedge{uv}} \geq d_H \text{ and } d_v^{\leq \dedge{uv}} \geq d_T \]
\end{definition}

\paragraph{Intuition}
For each vertex $u$ the algorithm will store prefixes (under the order given by
$i$) of $\set{(u, i, H) \mid i \in [\min\{d_H, d_u\}]}$ and $\set{(u, i, T) \mid
i \in [\min\{d_T, d_u\}]}$. The label $\set{H, T}$ describes whether this element
is needed to check the degree of the ``head'' of the directed edge, or the
``tail''. The counter allows to keep track of the degree of the vertex $u$.

The algorithm starts with a sketch containing only ``scratch'' states that do
not depend on the graph. For each edge $\dedge{uv}$ arriving in the stream, the
algorithm takes two actions:
\begin{itemize}
\item Update the information about the endpoints' degrees $u, v$ by shifting
the ``stacks'' of elements by 1 using a permutation $(u, i, h) \leftrightarrow
(u, (i+1), h)$, and swap 4 scratch states into the bottom of the ``stacks'',
$(\ell, s) \to (u, 0, h)$, and the same for $v$ and $t$. 
\item Check whether $\dedge{uv}$ is a heavy edge by making a query
$\querypair((u, d_H, h), (v, d_T, t))$. If the edge is heavy, both of the
elements will be present and so the expected value of the query conditioned on
not returning $\bot$ will be positive. Otherwise at most one will be and the
expected value will be $0$.
\end{itemize}
Note that if the query returns $\bot$, causing the elements queried to be
deleted from the sketch, the elements $(u, d_H-1, h), (v, d_T-1, t)$ are still
present in the sketch, and will become $(u, d_H, h)$ and $(v, d_T, t)$ the next
time an edge neighboring $u$ or $v$ is encountered. 

By Lemma~\ref{lm:reordering}, for each $\dedge{uv}$, the sketch survives until
the query to $((u, d_H, h), (v, d_T, t))$ with probability inversely
proportional to the probability of the query not returning $\bot$, and so the
expected value of the algorithm is proportional to the number of edge
$\dedge{uv}$ that are ``heavy''.

\subsubsection{Algorithm}

We will use an instance of the pairs sketch from Section~\ref{sec:description},
with universe $U$ containing $[n] \times [2n] \times \{H, T\} \cup [4m] \times
\{S\}$ where $S, H, T$ are fixed labels, denoting ``scratch'', ``head'', and
``tail'', respectively (this can be trivially embedded in $\brac{n^{\bO{1}}}$).

\begin{algorithm}
\caption{Quantum streaming algorithm for $(d_H, d_T)-$heavy edges problem}\label{alg:heavy_edges}
\begin{algorithmic}[1]
\State $\Qc := \create([4m] \times \{S\})$. 
\State $\ell := 0$
\For{$\dedge{uv}$ in the stream:}
	\State $\update(\pi_{\ell, uv}, \Qc)$
	\State $\ell := \ell + 4$
	\State $r := \querypair((u, d_H, H), (v, d_T, T), \Qc)$
	\If{$r \ne \bot$}
        \State Return $2rm$ and stop the computation \label{step:heavy_return}
    \EndIf
\EndFor
\State Return 0
\end{algorithmic}
\end{algorithm}

Here $\pi_{\ell, uv}$ denotes the permutation that does the following:
\begin{itemize}
	\item $\forall i \in [2n-1], w \in \{u, v\}, Q \in \{H, T\}: (w, i, Q)
	\rightarrow (w, (i+1) \bmod d_Q, Q)$
	\item Swap:
	\begin{itemize}
	\item $(\ell, S)$ for $(u, 0, H)$
	\item $(\ell+1, S)$ for $(u, 0, T)$
	\item $(\ell+2, S)$ for $(v, 0, H)$
	\item $(\ell+3, S)$ for $(v, 0, T)$ 
	\end{itemize}
\end{itemize}
As this consists of disjoint cycles and swaps, it is a permutation.

\subsubsection{Correctness}

Here we state the main steps of the correctness proof of this algorithm. The proofs of the lemmas can be found in the appendix \ref{appendix:heavy_edges}.

In the beginning of the stream, our quantum sketch $\Qc$ contains the $S_0 =
[4m] \times \{S\}$, $|S_0| = 4m$. We will use $S_{m'}$ to denote the set stored
in the sketch after processing $m'$ edges (that is, $m'$ iterations of the loop
in the algorithm). We will now prove a loop invariant on the contents of the
sketch in the case where it has not yet terminated. We will write $d_v^{<m'}$
for $d_v^{<\dedge{xy}}$, for $\dedge{xy}$ the ${m'}\nth$ edge to arrive in
the stream.

\begin{restatable}{lemma}{hedgeloopinvariant}
\label{lm:hedgeloopinvariant}
	If the algorithm has not yet terminated after processing $m'$ edges,
	\begin{align*}
	S_{m'} =& \bigcup_{w \in V}
	\paren*{\set{(w, j, H)\mid j \in [\min(d_H, d_w^{<m'}) - 1]} \cup \set{(w, j, T)\mid j \in
	[\min(d_T, d_w^{<m'})-1]}}\\ &\cup \{(i, S) \mid 4m' \leq i \leq 4m\}
	\end{align*}
\end{restatable}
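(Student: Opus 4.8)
The plan is to prove Lemma~\ref{lm:hedgeloopinvariant} by induction on $m'$, closely mirroring the structure of the proof of Lemma~\ref{lm:triangle_stateinv} in the triangle-counting section. The base case $m' = 0$ is immediate: before any edge is processed, $\Qc$ contains exactly $[4m] \times \{S\}$ by the definition of $\create$, and for every vertex $w$ we have $d_w^{<1} = 0$ (no edge incident to $w$ has arrived), so both index sets $[\min(d_H,d_w^{<1})-1]$ and $[\min(d_T,d_w^{<1})-1]$ are empty, matching the claimed form.

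For the inductive step, I would assume the invariant holds after $m'$ edges (with the sketch not yet terminated) and analyze what happens when the $(m'+1)\nth$ edge $\dedge{uv}$ is processed, assuming the algorithm does not terminate on it (otherwise the statement is vacuous). The loop does two things in order: first it applies $\update(\pi_{\ell,uv},\Qc)$ with $\ell = 4m'$, then it issues the query $\querypair((u,d_H,H),(v,d_T,T),\Qc)$. I would track the underlying set through each. The permutation $\pi_{\ell,uv}$ shifts each ``stack'' for $w \in \{u,v\}$ and $Q \in \{H,T\}$ by sending $(w,i,Q) \mapsto (w,(i+1)\bmod d_Q, Q)$, and swaps the four scratch elements $(4m',S),\dots,(4m'+3,S)$ into the bottom slots $(u,0,H),(u,0,T),(v,0,H),(v,0,T)$. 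The key observation is that since $d_u^{<m'+1} = d_u^{<m'}+1$ and likewise for $v$ (the new edge is incident to both), the effect on the stacks of $u$ and $v$ is exactly to grow them by one — provided the old stack length $\min(d_Q,d_w^{<m'})-1$ was strictly less than $d_Q - 1$, i.e.\ the stack had not yet ``wrapped''. I would handle the boundary case $d_w^{<m'} \ge d_Q$ separately: there the shift-mod-$d_Q$ together with the scratch swap leaves the stack as a full cycle on $\{0,\dots,d_Q-1\}$, which still equals $\{(w,j,Q) : j \in [\min(d_Q,d_w^{<m'+1})-1]\}$ since $\min(d_Q,d_w^{<m'+1}) - 1 = d_Q - 1$. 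Stacks for vertices $w \notin \{u,v\}$ are untouched, and the scratch set shrinks from $\{(i,S):4m' \le i \le 4m\}$ to $\{(i,S):4(m'+1) \le i \le 4m\}$.

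Then I would apply the query step: by the specification of $\querypair$, conditioned on the algorithm not terminating, the effect on the underlying set is precisely to delete the elements $(u,d_H,H)$ and $(v,d_T,T)$ if they are present (and otherwise do nothing). I would argue that each of these elements is present in the post-$\update$ set if and only if the corresponding stack has reached full length $d_Q$, i.e.\ $d_w^{<m'+1} \ge d_Q$, in which case deleting it brings the stack back down to length $d_Q - 1 = \min(d_Q,d_w^{<m'+1})-1$; and if $d_w^{<m'+1} < d_Q$ the element is absent and the stack already has length $\min(d_Q,d_w^{<m'+1})-1$. Either way the resulting stack matches the claimed form, so $S_{m'+1}$ has exactly the stated description, and the size formula follows by counting. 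The main obstacle I anticipate is the bookkeeping around the $\bmod\, d_Q$ wraparound in $\pi_{\ell,uv}$ interacting with the query deletion — one has to be careful that an element like $(u,d_H,H)$ genuinely appears in the stack exactly when the degree first reaches $d_Q$ and gets cleaned up by the query rather than persisting, and that no ``stale'' high-index elements survive; verifying this requires unwinding the cycle structure of $\pi_{\ell,uv}$ restricted to indices $\{0,\dots,d_Q-1\}$ carefully, but it is routine once the two regimes ($d_w^{<m'+1} < d_Q$ versus $\ge d_Q$) are separated.
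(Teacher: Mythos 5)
Your proposal is correct and follows essentially the same route as the paper's proof: induction on the number of processed edges, with the base case given by $\create$, the $\update$ step growing each of the four stacks of $u,v$ by one while consuming four scratch elements, and the $\querypair$ step (conditioned on returning $\bot$) deleting $(u,d_H,H)$ and $(v,d_T,T)$ if present, which restores the claimed form. The only difference is that you explicitly separate the saturated regime $d_w^{<m'}\ge d_Q$ and the $\bmod\,d_Q$ wraparound, which the paper's proof handles implicitly by directly asserting the post-update index sets $[\min(d_Q+1,d_w^{<m'+1})-1]$.
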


By the properties of the sketch and the previous lemma, we get the following statements about the expectation and the varience of the output.

\begin{restatable}{lemma}{hedgeexpect}
The expectation of the estimator returned by the algorithm is the number of
edges $\dedge{uv}$ such that $d_u^{\leq \dedge{uv}} \geq d_H,
d_v^{\leq \dedge{uv}} \geq d_T$.
\end{restatable}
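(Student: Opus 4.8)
The plan is to follow the template of the proof of Lemma~\ref{lm:triangle_exp}. Write $X = \sum_{\dedge{uv} \in E} X_{\dedge{uv}}$, where $X_{\dedge{uv}}$ denotes the value contributed in the loop iteration that processes $\dedge{uv}$: it equals $2rm$ if the algorithm reaches the call $\querypair((u,d_H,H),(v,d_T,T),\Qc)$ in that iteration and that call returns some $r \in \set{+1,-1}$ (whereupon the algorithm halts on line~\ref{step:heavy_return}), and it equals $0$ otherwise --- that is, if the query returns $\bot$, or if the algorithm already halted while processing an earlier edge. Since the algorithm halts at the first non-$\bot$ response, at most one $X_{\dedge{uv}}$ is nonzero and the concluding \texttt{Return 0} contributes nothing, so the decomposition is valid. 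By linearity of expectation it suffices to prove $\Expect{X_{\dedge{uv}}} = 1$ when $\dedge{uv}$ is heavy (i.e.\ $d_u^{\leq\dedge{uv}} \geq d_H$ and $d_v^{\leq\dedge{uv}} \geq d_T$) and $\Expect{X_{\dedge{uv}}} = 0$ otherwise.

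Fix $\dedge{uv}$, the $m'\nth$ edge to arrive. Let $\Ec$ be the event that the algorithm reaches the $\querypair$ call made while processing $\dedge{uv}$; equivalently, that none of the earlier queries destroyed the sketch. Apply Lemma~\ref{lm:reordering} to the entire sequence of operations from $\create([4m]\times\set{S})$ up to and including the $\update(\pi_{\ell,uv},\Qc)$ executed at the start of this iteration: there is a \emph{deterministic} set $S'$ such that the sketch is $\Qc_{S'}$ whenever it has not been destroyed, and $\prob{\Ec} = \abs{S'}/(4m)$. The two queried elements $(u,d_H,H)$ and $(v,d_T,T)$ are distinct (as $u \neq v$), so $\querypair$ is applicable, and from Lemma~\ref{lm:hedgeloopinvariant} applied to $S_{m'-1}$ together with the explicit form of $\pi_{\ell,uv}$ --- which shifts the ``stacks'' of $u$ and $v$ up by one position and inserts a fresh element at the bottom, so that at query time the stack of $u$ labelled $H$ has exactly $\min(d_H, d_u^{\leq\dedge{uv}})$ elements, and similarly for $v$ --- one reads off that $(u,d_H,H) \in S'$ iff $d_u^{\leq\dedge{uv}} \geq d_H$, and $(v,d_T,T)\in S'$ iff $d_v^{\leq\dedge{uv}} \geq d_T$.

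Now condition on $\Ec$, so that the sketch is $\Qc_{S'}$ with $\abs{S'}$ a fixed number, and invoke the specification of $\querypair$. If $\dedge{uv}$ is heavy, both queried elements lie in $S'$, so the query returns $+1$ (giving output $2m$) with probability $2/\abs{S'}$ and returns $\bot$ otherwise; hence $\Expect{X_{\dedge{uv}}\mid \Ec} = 2m\cdot 2/\abs{S'} = 4m/\abs{S'}$ and $\Expect{X_{\dedge{uv}}} = \prob{\Ec}\,\Expect{X_{\dedge{uv}}\mid \Ec} = \tfrac{\abs{S'}}{4m}\cdot\tfrac{4m}{\abs{S'}} = 1$. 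If $\dedge{uv}$ is not heavy, then at most one of the queried elements lies in $S'$: if exactly one does, the query returns each of $+1$ and $-1$ with probability $1/\abs{S'}$ and $\bot$ otherwise, so the outputs $\pm 2m$ cancel in expectation and $\Expect{X_{\dedge{uv}}\mid\Ec} = 0$; if neither does, the query always returns $\bot$ and $X_{\dedge{uv}} = 0$ identically. In every case $\Expect{X_{\dedge{uv}}} = 0$. Summing over all edges gives $\Expect{X} = \abs{\set{\dedge{uv}\in E : d_u^{\leq\dedge{uv}} \geq d_H \text{ and } d_v^{\leq\dedge{uv}} \geq d_T}}$, as claimed.

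I expect the one genuinely delicate step to be the middle one: Lemma~\ref{lm:hedgeloopinvariant} describes the \emph{post-iteration} set $S_{m'}$ --- after the query has already deleted $(u,d_H,H)$ and $(v,d_T,T)$ --- whereas the argument above needs the set present at the \emph{instant} of the query, obtained by re-applying $\pi_{\ell,uv}$ to $S_{m'-1}$; getting the off-by-one in the stack heights right, and checking that the modular shift in $\pi_{\ell,uv}$ never maps a live element onto an occupied slot, is where care is required. Everything else is a direct application of Lemma~\ref{lm:reordering} and the $\querypair$ specification; in particular, because Lemma~\ref{lm:reordering} already handles arbitrary interleavings of queries and updates, no disjointness of the query pairs across different edges is needed (and indeed they generally are not disjoint, as two edges may share an endpoint).
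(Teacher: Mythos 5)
Your proposal is correct and follows essentially the same route as the paper's proof: decompose the estimator per edge, use Lemma~\ref{lm:reordering} to get the survival probability $\abs{S'}/(4m)$ and the determinism of the set at query time, use Lemma~\ref{lm:hedgeloopinvariant} (re-applying the update permutation to the pre-iteration set) to decide membership of $(u,d_H,H)$ and $(v,d_T,T)$, and then run the three-case analysis in which the heavy case telescopes to $\tfrac{\abs{S'}}{4m}\cdot\tfrac{2}{\abs{S'}}\cdot 2m = 1$ and the other cases vanish. Your explicit attention to the off-by-one between the post-iteration set $S_{m'}$ and the set at the instant of the query is a point the paper's own write-up handles somewhat loosely, so flagging it is appropriate rather than a deviation.
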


\begin{restatable}{lemma}{hedgevar}
    The variance of the estimator returned by the algorithm is $\bO{m^2}$. 
\end{restatable}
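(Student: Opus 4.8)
The plan is to observe that the estimator returned by Algorithm~\ref{alg:heavy_edges} is bounded in absolute value by a deterministic quantity of order $m$, so that its variance is $\bO{m^2}$ for free; this is the same style of argument used for Lemma~\ref{lm:triangle_var}. Write $X$ for the random variable output by the algorithm. Inspecting the pseudocode, the algorithm terminates in exactly one of two ways: either the \textbf{for} loop runs to completion with every $\querypair$ call returning $\bot$, in which case $X = 0$; or at some iteration the $\querypair$ call returns some $r \ne \bot$, at which point (line~\ref{step:heavy_return}) it immediately returns $X = 2rm$ and halts. By the specification of $\querypair$, any non-$\bot$ return value $r$ lies in $\set{+1,-1}$, so in all cases $X \in \set{-2m,\,0,\,2m}$ and hence $\abs{X} \le 2m$ with probability $1$.

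Given this pointwise bound, the conclusion follows in one line: $\Var{X} = \Expect{X^2} - \Expect{X}^2 \le \Expect{X^2} \le (2m)^2 = 4m^2 = \bO{m^2}$. Note that this does not even use the value of $\Expect{X}$ established in the preceding lemma, although one could subtract $\Expect{X}^2$ for a marginally sharper constant; for the downstream application the crude bound is all that is required.

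There is essentially no obstacle here. The only point that warrants a moment's care is verifying that $2rm$ with $r \in \set{+1,-1}$ really is the only nonzero value the algorithm can emit, which is immediate from the fact that line~\ref{step:heavy_return} is the sole \textbf{return} statement producing a nonzero output and it is guarded by $r \ne \bot$. This $\bO{m^2}$ variance bound, together with the expectation lemma, is exactly what is needed to average $\bT{1/\eps^2}$ independent copies of the algorithm and apply Chebyshev's inequality to obtain a $\pm \eps m$ estimate of the number of heavy edges, in direct analogy with Lemma~\ref{lm:triangle_questimate}.
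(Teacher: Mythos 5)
Your proof is correct and matches the paper's argument exactly: the paper's one-line proof also just invokes the pointwise bound $\abs{X} \le 2m$, which you have spelled out in full by checking that the only nonzero return value is $2rm$ with $r \in \set{+1,-1}$. Nothing further is needed.
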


And finally, by repeating the same algorithm many times in parallel and taking the average, we obtain the final result.

\begin{restatable}{lemma}{hedgethm}
There exists an quantum algorithm that approximates the number of edges
$\dedge{uv}$ with $d_u^{\leq \dedge{uv}} \geq d_H$ and $d_u^{\leq \dedge{uv}}
\geq d_H$ to $O(\eps m)$ error with probability $2/3$ and using $O(\frac{1}{\eps^2}
\log{n})$ quantum memory. This algorithm is classical except for the use of the quantum
sketch described in Section~\ref{sec:description}. 
\end{restatable}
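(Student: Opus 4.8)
<br>

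The plan is to obtain the claimed estimator by running $k = \bT{1/\eps^2}$ independent copies of Algorithm~\ref{alg:heavy_edges} in parallel and outputting the average of their outputs. Write $H$ for the quantity to be estimated, i.e.\ the number of edges $\dedge{uv}$ with $d_u^{\leq\dedge{uv}} \ge d_H$ and $d_v^{\leq\dedge{uv}} \ge d_T$. By the expectation lemma above, each copy is an unbiased estimator of $H$, so their average also has expectation $H$. By the variance lemma above, each copy has variance $\bO{m^2}$; since the copies are run on independent sketches with independent internal randomness, their outputs are independent, so the variance of the average of $k$ of them is $\bO{m^2/k}$. Choosing the constant in $k = \bT{1/\eps^2}$ large enough makes this at most $\eps^2 m^2/3$, and then Chebyshev's inequality gives that the average is within $\eps m$ of $H$ with probability at least $2/3$.

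For the space bound, each copy stores a single instance of the pair sketch of Section~\ref{sec:description} over the universe $U = [n]\times[2n]\times\set{H,T} \cup [4m]\times\set{S}$, which (using $m \le n^2$) embeds into $[n^{\bO{1}}]$; by Theorem~\ref{thm:implementability} this costs $\bO{\log n}$ qubits. The surrounding computation of each copy is classical and needs only $\bO{\log n}$ bits: a counter $\ell$, the current edge, and the description of the permutation $\pi_{\ell,uv}$ passed to $\update$. Hence each copy uses $\bO{\log n}$ space, and the $k = \bO{1/\eps^2}$ copies use $\bO{\frac{1}{\eps^2}\log n}$ in total, as claimed; apart from the $\create$, $\update$, and $\querypair$ calls, everything the algorithm does is classical.

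I do not expect any real obstacle here: the statement is the standard ``average independent unbiased estimators and apply Chebyshev'' wrapper around the loop invariant (Lemma~\ref{lm:hedgeloopinvariant}) together with the expectation and variance lemmas, all of which are established beforehand. The only points meriting a sentence of care are (i) confirming that the parallel copies are genuinely independent so that variances add when averaging, and (ii) noting that the accuracy and space factors simply multiply, so the $1/\eps^2$ blow-up in the number of copies does not interfere with the $\bO{\log n}$ per-copy sketch size.
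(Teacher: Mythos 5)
Your proposal is correct and matches the paper's proof: run $\bT{1/\eps^2}$ independent copies in parallel, average, invoke the expectation and variance lemmas, and apply Chebyshev, with $\bO{\log n}$ qubits per copy. You are slightly more careful than the paper (which even has a small typo, writing variance $\eps^2/3$ where $\eps^2 m^2/3$ is meant), but the argument is the same.
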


\subsection{Pseudosnapshot Definition}
Now we proceed to the pseudosnapshot algorithm. We will first define the object
we need to approximate, from Section~5 of~\cite{KPV24}.

Let $(d_i)_{i=0}^{\floor{\log_{1+\eps^3}n}}$ be given by $d_i = \floor{(1 +
\eps^3)^i}$ for $i < \floor{\log_{1+\eps^3}  n}$ and $d_{\floor{\log_{1 +
\eps^3}  n}} = n$.  Let $\accuracy \le \poly{n}$, $\eps \in \brac{0,1}$ be
accuracy parameters to be chosen later, and let
$(f_i)_{i=0}^{\floor{\log_{1+\eps^3}  n}}$ be a family of fully independent
random hash functions such that $f_i : E \rightarrow \bool$ is $1$ with
probability $\accuracy/2d_i$, while $g : V \rightarrow \brac{-\eps,\eps}$ is a
fully independent random hash function that is uniform on
$\brac{-\eps,\eps}$.\footnote{We adopt this notation for the sake of clarity,
but we will only ever evaluate $f_i(e)$ at the update when edge $e$ arrives,
and $g$ after processing the entire stream on the endpoints of edges our
algorithm stores, so we do not need to pay the prohibitive overhead of storing
these hash functions. Moreover, while we write $g(e)$ as a random real number,
it will only ever be used in sums and comparisons with numbers of
$\poly(n,\eps)$ precision, so we do not need to store it any more precision
than that.}

Fix an arrival order for the edges $e$ of the directed graph. For any vertex
$v$ and edge $e$, let $\doutb{e}_v$, $\db{e}_v$, refer to the out-degree and
degree of $v$ when only $e$ and edges that arrive before $e$ are counted, and
let $\douta{e}_v$, $\da{e}_v$ refer to these quantities when counting only
edges that arrive \emph{after} $e$. Let $\wt{i}$ be the largest $i$ such that
$d_i < \db{e}$. Then define $\dbps{e}_v = d_{\wt{i}}$, and let $\doutbps{e}_v$
be the number of edges $e'$ with head $v$ that arrive before $e$ and have
$f_{\wt{i}}(e') = 1$, multiplied by $2d_{\wt{i}}/\accuracy$.  We will then define the
$e$-pseudobias of $v$, $\bps{e}_v$, as \[ \min\set*{2\frac{\doutbps{e}_v +
\douta{e}_v}{\dbps{e}_v + \da{e}_v} - 1 + g(v),
1}\text{.}
\]
In other words, the $\bps{e}_v$ is the bias of $v$ when its degree among $e$ and
edges that arrive before $e$ is rounded to the bottom of the interval
$\interval{d_{\wt{i}},d_{\wt{i}+1}}$, and its out-degree among these edges is
estimated using the number of out-edges ``sampled'' by $f_{\wt{i}}$, with a
small amount of noise $g(v)$ added. Since this can sometimes produce a
pseudobias larger than 1, we then cap it at 1.

\begin{definition}
Let $\tb \in \brac{-1,1}^\ell$ be a vector of bias thresholds. The
pseudosnapshot $\histps{G} \in \Nbb^{\ell \times \ell}$ of $G = (V,E)$ is given
by: \[
\histps{G}_{i,j} = \abs{\set{\dedge{uv} \in E: u \in H^{\dedge{uv}}_i, v \in
H^{\dedge{uv}}_j}}
\]
where $H^e_i$ is the $i\nth$ ``$e$-pseudobias'' class, given by \[
H^e_i = \begin{cases}
\set{v \in V : \bps{e}_v \in \interval{\tb_i,\tb_{i+1}}} & \mbox{$i \in \brac{\ell - 1}$}\\
\set{v \in V : \bps{e}_v \in \brac{\tb_{\ell}, 1}} & \mbox{$i = \ell$.}
\end{cases}
\]
The restriction of $\histps{G}$ to $E' \subseteq E$ is then given by: \[
\histps{G,E'}_{i,j} =  \abs{\set{\dedge{uv} \in E': u \in H^{\dedge{uv}}_i, v \in
H^{\dedge{uv}}_j}}
\]
\end{definition}

\subsection{Pseudosnapshot Estimation Algorithm}
In this section we give a small-space algorithm for estimating the
pseudosnapshot of a directed graph $G = (V,E)$, restricted to edges $e =
\dedge{uv}$ with $\db{e}_u \in \interval{d_i,d_{i+1}}$, $\db{e}_v \in
\interval{d_j, d_{j+1}}$ for some specified $i, j \in
\brac{\floor{\log_{1+\eps^3} n}} \cup \set{0}$. The algorithm will be entirely
classical other than its use of the sketch from Section~\ref{sec:description},
and will satisfy exactly the requirements of the algorithm described in
Lemma~10 of~\cite{KPV24}.

\paragraph{Intuition} As in the heavy edges algorithm described in Section
\ref{sec:heavy_edges}, for each vertex $u$ the algorithm stores ``stacks'' of
elements $\set{(u, i, Q): i \in I }$ in the sketch for index sets $I$ and labels $Q$, and
queries specific $(u,i,Q)$ to determine whether the stack has ``reached''
i.

In the ``heavy edges'' algorithm the counter kept track of the degree of the
vertex $u$. Specifically, it was keeping track of whether the degree of $u$ has
reached the threshold by storing the set $\{(u, i, H)\mid i \in [\min\{d_H,
d_u\}]\}$, and a similar set for $d_T$ and $T$. We need to keep track of both
the in- and out-degree of $u$, as we are interested in whether its bias lies in
a certain range. Unfortunately, it is not possible to maintain two counters per
vertex, as we can only query \emph{pairs}, and so checking whether $u$ and $v$
both are present with in- and out-degree in the right range would not be
possible\footnote{This might raise the question: why can we check if $u$ and
$v$ have degrees in a \emph{range} at all, as that requires checking two
predicates per vertex? However, we can avoid this by e.g.\ counting how often
they have degree at least $d$ and separately counting how often they have
degree at least $d'$, and taking the difference.}. Instead, we track the
degree, and then encode information about the out-degree in the higher-order
bits of the degree, by increasing the degree by $k$ with probability $1/k$
whenever an out-edge is seen, for some $k$ larger than the largest degree we
will allow $v$ to have. Then our queries will look at the stack at multiple
locations $d_H, d_H + k, d_H + 2k$, etc.

\begin{restatable}{lemma}{psnapest}
\label{lm:psnapest}
Let $\alpha, \beta \in \brac{\floor{\log_{1+\eps^3}  n}} \cup \set{0}$. Let
$\accuracy = \poly(n)$ be the integer accuracy parameter used in defining the
pseudobiases. Fix a draw of the hash functions $(f_i)_{i =
0}^{\ceil{\log_{1+\eps^3} n}}$, and therefore the pseudobiases of the graph
$G$.

Then there is a classical algorithm that uses the ``pair sampling'' sketch that, if $\sum_{e \in E}
(f_\alpha(e) + f_\beta(e)) \le 2\accuracy m $, returns an estimate of the
pseudosnapshot of $G$ restricted to edges $e = \dedge{uv}$ with $\db{e}_u \in
\interval{d_\alpha,d_{\alpha+1}}$, $\db{e}_v \in \interval{d_\beta,
d_{\beta+1}}$.

Each entry of the estimate has bias at most the number of edges $\dedge{uv}$ such that:
\begin{enumerate}
\item $\db{\dedge{uv}}_u \in \interval{d_\alpha,
d_{\alpha+1}}$
\item $\db{\dedge{uv}}_v \in \interval{d_\beta, d_{\beta+1}}$
\item $\max\set*{\frac{\accuracy}{2d_{\alpha}} \doutbps{\dedge{uv}}_u + 1,
\frac{\accuracy}{2d_{\beta}} \doutbps{\dedge{uv}}_v + 1} > \accuracy$
\end{enumerate}

Each entry has variance $\bO{\accuracy^3m^2}$. The algorithm uses $\bO{\log n}$
qubits of space. 
\end{restatable}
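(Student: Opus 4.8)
The plan is to generalize the heavy-edges algorithm of Section~\ref{sec:heavy_edges} so that each vertex's ``stack'' in the sketch records not merely its running degree but also an encoding of the number of its already-arrived out-edges $e'$ with $f_\alpha(e')=1$ (for the endpoint playing the role of the head, with $f_\beta$ for the tail). Concretely, I would take a universe of tuples $(w,j,Q)$ --- $w$ a vertex, $Q\in\{H,T\}$ a head/tail label, $j$ a stack index --- together with $\poly(n)$ ``scratch'' elements, and maintain the invariant that the $H$-stack of $w$ has height equal to the running degree of $w$ \emph{offset by $k_\alpha$ for each $f_\alpha$-sampled out-edge of $w$ seen so far}, where $k_\alpha>d_{\alpha+1}$ is chosen so that the ``high bits'' of the height encode the sampled out-count and the ``low bits'' the degree (and symmetrically for the $T$-stack with $k_\beta$). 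On arrival of $\dedge{uv}$ the algorithm first applies an $\update$ permutation that shifts and refills the stacks of $u$ and $v$ as in Section~\ref{sec:heavy_edges} (adding the extra $k_\alpha$-shift when the out-edge is $f_\alpha$-sampled, $k_\beta$ for the tail), then issues the batch of queries $\querypair\big((u,d_\alpha+pk_\alpha,H),(v,d_\beta+qk_\beta,T)\big)$ over $0\le p,q<\accuracy$; the pairs in one batch are pairwise disjoint, so Lemma~\ref{lm:reordering} applies exactly as in the heavy-edges analysis (batches for incident edges may collide, but Lemma~\ref{lm:reordering} handles that too). Since the half-open degree windows $\interval{d_\alpha,d_{\alpha+1}}$, $\interval{d_\beta,d_{\beta+1}}$ cannot be tested directly, I would --- as anticipated in the text --- form the final estimator by inclusion--exclusion over four runs that test the one-sided degree thresholds $d_\alpha$ versus $d_{\alpha+1}$ (for $u$) and $d_\beta$ versus $d_{\beta+1}$ (for $v$), each run using $d_\alpha,d_\beta$ as the ``pretend'' roundings $\dbps{e}_u,\dbps{e}_v$; the difference isolates the edges genuinely in both windows, on which these roundings are correct. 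Finally, the instant a query returns $r\ne\bot$ the algorithm records the endpoints and the position $(p,q)$, switches to a purely classical pass over the rest of the stream to compute $\douta{e}_u,\da{e}_u,\douta{e}_v,\da{e}_v$, draws $g(u),g(v)$ fresh, assembles $\bps{e}_u,\bps{e}_v$ (using $\dbps{e}_u=d_\alpha$ and $\doutbps{e}_u=p\cdot 2d_\alpha/\accuracy$, and likewise for $v$), reads off the class indices $(i,j)$, and outputs $r$ times the scaling factor in coordinate $(i,j)$ of the output vector and $0$ elsewhere; the fact (noted after Lemma~\ref{lm:reordering}) that the query result is available mid-stream is exactly what makes this classical postprocessing possible.

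The first formal step is a loop invariant analogous to Lemma~\ref{lm:hedgeloopinvariant}: after $m'$ edges, if the algorithm has not terminated, the sketch holds the as-yet-unused scratch elements together with, for each vertex $w$, the $H$- and $T$-stacks of the height described above, capped and cyclically refilled in the same manner as in the heavy-edges permutation. This is proved by induction on $m'$ exactly as there: the $\update$ permutation performs the claimed shift/refill (plus the $k$-offset on a sampled out-edge), and a batch of queries that does not terminate the algorithm simply deletes the queried top elements, each of which is restored by the next incident edge.

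With the invariant in hand, the expectation/bias analysis follows the template of the heavy-edges expectation lemma. By Lemma~\ref{lm:reordering} the probability that the sketch survives to the query at position $(d_\alpha+pk_\alpha,d_\beta+qk_\beta)$ exactly cancels that query's dependence on the current sketch size, so, after the appropriate scaling, each such query contributes in expectation exactly the indicator that, at the arrival of $e=\dedge{uv}$, vertex $u$ has degree past the relevant threshold with $f_\alpha$-sampled out-count $p$ and $v$ past its threshold with $f_\beta$-sampled out-count $q$ --- the ``one endpoint present'' case contributes $0$ because $\querypair$ returns $+1$ and $-1$ with equal probability there. Summing over $0\le p,q<\accuracy$ and over edges, and combining the four threshold runs, yields an estimator whose $(i,j)$ entry equals the restricted pseudosnapshot entry $\histps{G,E'}_{i,j}$ \emph{except} that any edge one of whose sampled out-counts is $\ge\accuracy$ is never detected and so contributes $0$ in place of its true indicator. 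Because $\db{e}_u\in\interval{d_\alpha,d_{\alpha+1}}$ forces $\wt{i}=\alpha$ and hence $\tfrac{\accuracy}{2d_\alpha}\doutbps{e}_u$ is precisely the number of $f_\alpha$-sampled out-edges of $u$ before $e$ (symmetrically for $v$), this missing mass is bounded by the number of edges satisfying conditions (1)--(3), which is the claimed bias. The hypothesis $\sum_{e}(f_\alpha(e)+f_\beta(e))\le 2\accuracy m$ enters only here, to guarantee that the total height added to all stacks --- hence the scratch set, hence the whole universe --- stays $\poly(n)$ in size, so the scaling factors are well-defined.

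The variance bound $\bO{\accuracy^3 m^2}$ and the $\bO{\log n}$-qubit bound come last. The latter is immediate: $|U|=\poly(n)$, so $\log|U|=\bO{\log n}$ qubits suffice. For the variance, each coordinate of the output is, in any single run, a scalar supported on one successful destructive query and therefore bounded in magnitude by the (polynomially bounded) scaling factor; a computation parallel to the heavy-edges variance lemma, tracking how the $\accuracy$-sized query batches and the scratch budget enter that scale, gives the stated $\bO{\accuracy^3 m^2}$. I expect the genuine difficulty to lie not in any of these steps individually but in \emph{designing the stack encoding} so that a single pair-query simultaneously certifies both endpoints' degree-window membership and reveals both endpoints' $f$-sampled out-counts, while keeping each batch disjoint, keeping the scratch budget polynomial, and --- the subtlest point --- making the set of edges the encoding fails to handle coincide \emph{exactly} with those flagged by condition (3) rather than with some cruder superset.
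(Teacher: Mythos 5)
Your architecture matches the paper's: per-vertex stacks whose height encodes the running degree in the low-order part and the number of $f$-sampled out-edges in high-order multiples of $d_{\alpha+1}$ (resp.\ $d_{\beta+1}$), a batch of pair queries at the $\accuracy^2$ positions indexed by the two sampled out-counts, a signed four-way inclusion--exclusion between the one-sided thresholds $d_\alpha$ vs.\ $d_{\alpha+1}$ and $d_\beta$ vs.\ $d_{\beta+1}$ to isolate the half-open degree windows, a mid-stream hand-off to a classical stage, and bias coming exactly from edges whose sampled out-count exceeds $\accuracy$. However, there are two concrete gaps. The first is your assertion that ``the pairs in one batch are pairwise disjoint'': with the universe you describe this is false, since the element $(u,d_\alpha+pk_\alpha,H)$ appears in all $\accuracy$ queries with that $p$ and varying $q$. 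Once the pairs overlap, an earlier query in the batch returning $\bot$ deletes the shared element, so a later query in the same batch tests it against the wrong set, and both the detection probability in the ``both present'' case and the $\pm1$ cancellation in the ``one present'' case are miscomputed. The paper's fix is to maintain $2\accuracy^2$ identical copies of every stack (the superscripts on $a^i,b^i,c^i,d^i$) and to route each of the $4\accuracy^2$ queries of a batch through a distinct copy via the indices $s_{i,j},t_{i,j}$, so the batch really is disjoint; this is also where the $\bO{\accuracy^3 m}$ scratch budget and scaling factor come from, so it is not a cosmetic change.

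The second gap is that your loop-invariant sketch (``a batch of queries that does not terminate the algorithm simply deletes the queried top elements, each of which is restored by the next incident edge'') ignores the fact that the $\measure$ queries only touch the first $\accuracy$ blocks of each stack, while a stack can accumulate more than $\accuracy$ blocks. The paper adds a separate $\cleanup(u,v)$ pass of $\queryone$ operations that deletes the would-be query positions at \emph{every} level $i\in\brac{M}$. This is what makes the invariant of Lemma~\ref{lm:stinv} uniform across blocks and, crucially, guarantees that for an edge whose endpoint degrees lie outside the target windows the $A$- and $B$-stacks (and $C$- and $D$-stacks) agree at every queried position, so that the four signed contributions cancel exactly (Lemma~\ref{lm:wrongdegreecontribution}) and wrong-degree edges contribute zero bias. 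You correctly identify ``making the set of missed edges coincide exactly with condition (3)'' as the subtlest point, but the copy-duplication and cleanup mechanisms are precisely what delivers it, and neither appears in your proposal.
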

\noindent
The algorithm will maintain an instance of the quantum sketch from
Section~\ref{sec:description} with underlying set \[
\Sc \cup \bigcup_{i=1}^{2\accuracy^2} (\Ac^i \cup \Bc^i \cup \Cc^i \cup \Dc^i)
\] of size $M' \le M = C\accuracy^3m$ for some sufficiently large constant $C$. $M'$ will
start at $M$ and decrease as queries remove elements from the set. 
All the sets are disjoint and are subsets of the following universe. \[
U = \set{(\ell, S) \mid \ell \in [M]} \cup \set{(u, j, Q) \mid u \in [n], j \in [Mn]}
\] where the last element $Q \in \{a^{i}\mid i \in [2\accuracy^2]\} \cup
\{b^{i}\mid i \in [2\accuracy^2]\} \cup \{c^{i}\mid i \in [2\accuracy^2]\} \cup
\{d^{i}\mid i \in [2\accuracy^2]\}$ or $S$ is a label indicating which of
$\Ac^i, \Bc^i, \Cc^i, \Dc^i$, or $\Sc$ a tuple belongs to.

$\Sc$ will contain our ``scratch elements''. It is equal to \[
\set{(j, S) :  j = \ell, \dots, M} 
\]
where $\ell$ is incremented every time we use a scratch element
(by swapping it with some other element we want), and the label $S$ 
indicates that this is a scratch element. The algorithm will keep track of $\ell$ 
so that it knows which element to swap from.

    The remaining sets encode information about vertices and their degrees. For
$\Ec = \Ac, \Bc, \Cc, \Dc$, each takes the form
\begin{align*}
\Ec^i &= \bigcup_{v \in V}\Ec_v^i\\
\Ec_v^i &= \set{(v,j,e^i)\mid j \in E_v}
\end{align*}
where the sets $E_v = A_v, B_v, C_v, D_v$ are not explicitly stored in the sketch
but directly correspond to the elements contained in it, and $e = a, b,
c, d$ marks which of $\Ac^i, \Bc^i, \Cc^i, \Dc^i$ an element belongs to, for
$i \in \brac{2\accuracy^2}$. Note that the underlying set $E_v$ for each
$\Ec_v^i$ does not depend on $i$---the $2\accuracy^2$ copies are to allow us to
perform a larger set of queries, and the sets will remain identical
until the algorithm terminates.  The four non-scratch components are broken
into two pairs (as each vertex can be both a head and a tail, and the two cases
need to be handled separately), with each pair having an element for checking
whether degrees are high enough, and one for checking whether degrees are low
enough (both will track whether vertices have had enough edges coming out of
them that pass the hash functions).

\begin{itemize}
\item $\Ac^i$ are for tracking vertices with degree at least $d_\alpha$.
\item $\Bc^i$ are for tracking when those vertices have their degree exceed
$d_{\alpha+1}$.
\item $\Cc^i$ are for tracking vertices with degree at least $d_\beta$.
\item $\Dc^i$ are for tracking when those vertices have their degree exceed
$d_{\beta+1}$.
\end{itemize}
  
At the start of the execution of our algorithm, $A_v, B_v, C_v, D_v =
\emptyset$. They will be updated with the following three operations:

\begin{itemize}
\item $\inc(\Ec, v, r)$ replaces $E_v$ with $\set{i + r : i \in E_v} \cup
\brac{r}$, and replaces $\ell$ with $\ell + 2\accuracy^2r$. Note that this can be accomplished
with a single $\update$ operation to the sketch with a permutation that sends $(v,i,e^j)$ to $(v,(i+r) \bmod Mn,e^j)$
for each $i,j$, and then swapping the first $2\accuracy^2r$ remaining elements of $\Sc$ with
$\cup_{i=1}^r (v,i,e^j)$ for each $j \in \brac{2\accuracy^2}$. 
\end{itemize}

\begin{itemize}
\item $\measure(u,v)$ makes the following $\querypair$ queries to the sketch $\Qc$ for each $(i,j) \in \brac{\accuracy}^2$:

\begin{align*}
  r^{1}_{i, j} = \querypair&((u, (d_\alpha + (i-1)d_{\alpha+1}), a^{t_{i,j}}),&(v, (d_\beta + (j-1)d_{\beta+1}), c^{t_{i,j}}), &\Qc)
\\r^{2}_{i, j} =\querypair&((u, (id_{\alpha+1})               , b^{t_{i,j}}),&(v, (d_\beta + (j-1)d_{\beta+1}), c^{s_{i,j}}), &\Qc)
\\r^{3}_{i, j} =\querypair&((u, (d_\alpha + (i-1)d_{\alpha+1}), a^{s_{i,j}}),&(v, (jd_{\beta+1})              , d^{t_{i,j}}), &\Qc)
\\r^{4}_{i, j} =\querypair&((u, (id_{\alpha+1})               , b^{s_{i,j}}),&(v, (jd_{\beta+1})              , d^{s_{i,j}}), &\Qc)
\end{align*}

Where the $s_{i,j}$, $t_{i,j}$ are chosen so that these pairs are all
disjoint to one another across all $i,j$ (note that this is possible
because we have $\accuracy^2$ possible values to choose from). If the result of
any of these queries is anything other than $\bot$, the
quantum part of the algorithm will terminate and the remaining execution will
be entirely classical.
\end{itemize}
    
\begin{itemize}
\item $\cleanup(u,v)$ makes the following $\queryone$ queries for all $w \in \set{u, v}$, $i \in \brac{M}$, and $j \in \brac{2\accuracy^2}$
\begin{align*}
  \queryone&((w,(d_\alpha + id_{\alpha+1}),a^{j}))
\\\queryone&((w,((i+1)d_{\alpha+1}),b^{j}))
\\\queryone&((w,(d_\beta + id_{\beta+1}),c^{j}))
\\\queryone&((w,((i+1)d_{\beta+1}),d^{j}))
\end{align*}

If anything other than
the $\bot$ is returned in any of these queries, the algorithm halts entirely and
outputs a zero estimate for the pseudosnapshot.

Together, the effect of performing $\measure(u,v)$ and $\cleanup(u,v)$, if they
do \emph{not} return something other than$\bot$, is to
delete the following elements from $A_w$, $B_w$, $C_w$, $D_w$, for $w = u,v$
and for all $i \in \brac{M}$ (note that these elements may have not been
present to begin with, or may be ``removed'' multiple times between the two
operations---this does not cause any issues):
\begin{itemize}
\item $d_\alpha + (i-1)d_{\alpha + 1}$ from $A_w$.
\item $id_{\alpha + 1}$ from $B_w$.
\item $d_\beta + (i-1)d_{\alpha + 1}$ from $C_w$.
\item $id_{\beta + 1}$ from $D_w$.
\end{itemize}
\end{itemize}
We can now describe the algorithm.

\paragraph{Initialization} Create the quantum sketch with only scratch elements: 
$$\Qc := \create([M]\times \set{S})$$

\paragraph{Quantum Stage} For each $\dedge{uv}$ processed until the quantum
stage terminates:
\begin{enumerate}
\item $\inc(\Ec,w,1)$ for $w = u,v$, and $\Ec = \Ac, \Bc, \Cc, \Dc$.
\item If $f_\alpha(\dedge{uv}) = 1$, $\inc(\Ac,u,d_{\alpha+1})$ and
$\inc(\Bc,u,d_{\alpha+1})$.
\item If $f_\beta(\dedge{uv}) = 1$, $\inc(\Cc,u,d_{\beta + 1})$ and
$\inc(\Dc,u,d_{\beta+1})$.
\item $\measure(u,v)$. If the result of the query $r^{x}_{i, j}$ is something other than $\bot$, pass $r^{x}_{i, j}$ 
along with $x, i, j, u, v$ to the classical stage and continue.
\item $\cleanup(u,v)$. If any of the queries returns anything other than the
$\bot$, immediately terminate the
algorithm, outputting an all-zeroes estimate.
\end{enumerate}
If the quantum stage processes every edge without being terminated by a
query outcome, output an all-zeroes estimate and skip the classical
stage.
    
\paragraph{Classical Stage} 
This stage is reached once the $\measure$ operation terminates the quantum
stage, with $r^{x}_{i, j}$ passes along with $x, i, j, u, v$.

For the remainder of the stream, track $\douta{e}_u$,$\douta{e}_v$,
$\da{e}_u$,$\da{e}_v$ (giving us exact values for these variables). Then
estimate $\db{e}_u$, $\db{e}_v$ by assuming that they are equal to $d_\alpha$,
$d_\beta$, respectively. Then estimate $\doutbps{e}_u$, $\doutbps{e}_v$, by
assuming that the number of edges $e$ with head $u$ and $f_\alpha(e) = 1$ is
$i-1$, and the number with head $v$ and $f_\beta(e) = 1$ is $j-1$.

Combine these estimates and evaluate $g(u)$, $g(v)$ to estimate $\bps{e}_u$ and
$\bps{e}_v$. 

If $x = 1$ or $4$, set $r^{x}_{i, j} M/2$ as the corresponding entry of
the pseudosnapshot estimate (with every other entry as $0$). If $x = 2$ or $3$,
set it as $- r^{x}_{i, j} M/2$ instead. 

\subsection{Analysis}
\subsubsection{Sketch Invariant}

\begin{lemma}
\label{lm:stinv}
Consider any time after some number of edges have been (completely) processed
in the quantum stage, and suppose the stage has not yet terminated, and $t$ has
not exceeded $M$. Let $v \in V$, and let $r$ be the number of those edges that
were incident to $v$. Let $R$ be the number of those edges $e$ such that $v$
was the head of the edge, and $f_\alpha(e) = 1$. 

Then, if $R = 0$, 
\begin{align*}
A_v &= \Nbb \cap \interval{1,\min(r+1,d_\alpha)}\\
B_v &= \Nbb \cap \interval{1,\min(r+1,d_{\alpha +1})})
\end{align*}
and if $R > 0$, there exists $(\prefix_i)_{i=1}^R \in \brac{r}^R$ such that:
\begin{align*}
A_v &= \Nbb \cap \paren*{\interval{1,d_\alpha} \cup \bigcup_{i=1}^R I_i}\\
B_v &= \Nbb \cap \paren*{\interval{1,d_{\alpha+1}} \cup \bigcup_{i=1}^R J_i}
\end{align*}
Where
\begin{align*}
I_i &= \begin{cases}
\interval{d_\alpha+(i-1)d_{\alpha+1}+\prefix_i,d_\alpha + id_{\alpha + 1}} & \mbox{i
< R}\\
\interval{d_\alpha + (R-1)d_{\alpha + 1} + \prefix_R, Rd_{\alpha + 1}
+ \min(r+1, d_{\alpha})} & \mbox{i = R}
\end{cases}\\
J_i &= \begin{cases}
\interval{id_{\alpha+1}+\prefix_i,(i+1)d_{\alpha + 1}} & \mbox{i
< R}\\
\interval{Rd_{\alpha + 1} + \prefix_R, Rd_{\alpha + 1}
+ \min(r+1, d_{\alpha+1})} & \mbox{i = R}
\end{cases}
\end{align*}
The same relationship holds for $C_v$ and $D_v$, except with $\beta$ instead of $\alpha$.
\end{lemma}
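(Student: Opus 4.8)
The plan is to prove the invariant by induction on the number of edges processed in the quantum stage, after first cutting the problem down to a single vertex and a single pair of stacks. Fix $v$. First I would note that an $\inc$, $\measure$, or $\cleanup$ triggered by an edge not incident to $v$ never queries or permutes an element with first coordinate $v$ (scratch elements aside), so it leaves $A_v,B_v,C_v,D_v$ unchanged; thus only the steps fired by an incoming edge $\dedge{xy}$ with $v\in\{x,y\}$ matter, and I classify those by whether $v$ is the head ($v=x$) or the tail ($v=y$) and whether $f_\alpha(\dedge{xy})=1$. Next I would invoke the two facts already in hand: since the stage has not terminated, every query in $\measure(x,y)$ and $\cleanup(x,y)$ returns $\bot$, so by the discussion preceding the algorithm the combined effect of $\measure(x,y)$ then $\cleanup(x,y)$ is exactly to delete from $A_v$ the progression $\set{d_\alpha+(i-1)d_{\alpha+1}:i\in\brac{M}}$, from $B_v$ the progression $\set{id_{\alpha+1}:i\in\brac{M}}$, and analogously from $C_v,D_v$ with $\beta$ in place of $\alpha$, leaving all $2\accuracy^2$ copies identical; and, since $t\le M$, each $\inc(\Ec,v,r')$ realizes $E_v\mapsto\set{i+r':i\in E_v}\cup\brac{r'}$. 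Since $A_v,B_v$ change only through the $\alpha$-labelled increments (step~1, plus step~2 when $v$ is the head and $f_\alpha$ fires) and $C_v,D_v$ only through the $\beta$-labelled increments, and these two processes are identical after swapping $\alpha\leftrightarrow\beta$, it is enough to handle $A_v,B_v$; and because steps~1 and~2 increment $A_v$ and $B_v$ by the same amounts at the same times, one tuple $(\prefix_i)$ can serve both.

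Then I would carry out the induction. The base case (no edges, all stacks empty, $r=R=0$) is immediate. For the inductive step, process the next edge $\dedge{xy}$. (a) If $v\notin\{x,y\}$, nothing changes. (b) If $v=y$, or $v=x$ with $f_\alpha(\dedge{xy})=0$: only the step-1 increment $\inc(\cdot,v,1)$ fires before the deletion progression, so $r\mapsto r+1$ and $R$ is fixed. Applying ``shift up by $1$, insert $\brac{1}$ at the bottom, then delete the progression'' to the inductive form should give: the bottom block stays $\Nbb\cap\interval{1,\min(r+2,d_\alpha)}$ if $R=0$ and $\Nbb\cap\interval{1,d_\alpha}$ if $R>0$ (the deletion point $d_\alpha$ trims any overhang); each $I_i$ translates up by $1$ and loses its top deletion point, hence reappears with $\prefix_i\mapsto\prefix_i+1\in\brac{r+1}$; and the outermost $I_R$, whose top endpoint carries $\min(r+1,d_\alpha)$, has that endpoint updated to $\min(r+2,d_\alpha)$ once the point $Rd_{\alpha+1}+d_\alpha$ is (possibly) removed. (c) If $v=x$ with $f_\alpha(\dedge{xy})=1$: steps~1 and~2 both fire, and since $\brac{d_{\alpha+1}}\cup\set{d_{\alpha+1}+1}=\brac{d_{\alpha+1}+1}$ their composition on $A_v$ (and on $B_v$) is the single increment $\inc(\cdot,v,d_{\alpha+1}+1)$, after which the deletion progression is applied; $r\mapsto r+1$ and $R\mapsto R+1$. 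That increment raises the bottom block far enough that the deletion points $d_\alpha$ and $d_\alpha+d_{\alpha+1}$ carve it into the full bottom block $\Nbb\cap\interval{1,d_\alpha}$ together with a brand-new innermost interval of prefix $\prefix_1'=1$ (which is in $\brac{r+1}$ since this very edge is incident to $v$), while the old $I_1,\dots,I_R$ translate up by $d_{\alpha+1}+1$ and, after the progression trims their tops, become $I_2',\dots,I_{R+1}'$ with $\prefix_{i+1}'=\prefix_i+1\in\brac{r+1}$, the outermost again corrected through the $\min$ as in (b); when the old $R$ is $0$ the new interval is itself the last interval, and one checks $\prefix_1'=1$ by treating $r+1\ge d_\alpha$ and $r+1<d_\alpha$ separately. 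The $B_v$ argument is the same with $d_{\alpha+1}$ replacing $d_\alpha$ and the progression $\set{id_{\alpha+1}}$, and the $C_v,D_v$ arguments the same with $\beta$.

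The conceptual content is slight; this is really bookkeeping on half-open integer intervals. The part I would write out with care, and the only place I expect friction, is checking in (b) and (c) that after a translation the only deletion points falling \emph{inside} a given $I_i$ are its top endpoint and --- for the outermost interval only --- the point $Rd_{\alpha+1}+d_\alpha$, and exactly when $\min(r+1,d_\alpha)=d_\alpha$; this uses $\prefix_i\ge1$ to keep each $I_i$ strictly above its own left-hand deletion point, so no interior holes are punched. The other point to watch is that the prefix minted in case (c) is exactly $1$ and the translated old prefixes rise by exactly $1$, which is what preserves $(\prefix_i)\in\brac{r}^R$. Degenerate (empty) intervals need no special handling, since translating an empty interval and shifting its endpoints keeps it empty. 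In the write-up I would give case (b) for $A_v$ in full and note that (c) and the $B_v,C_v,D_v$ variants are mechanically the same.
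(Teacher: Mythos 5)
Your proposal is correct and follows essentially the same route as the paper's proof: induction over the processed edges, reduction to $A_v,B_v$ by symmetry, composing the step-1 and step-2 increments into a single shift of $1$ or $d_{\alpha+1}+1$, and then tracking how the deletion progressions $\set{d_\alpha+(i-1)d_{\alpha+1}}$, $\set{id_{\alpha+1}}$ trim the translated intervals, with the same case split (the paper merely separates your case (b) into $R=0$ and $R>0$ subcases and your case (c) into $R=0\to1$ and $R\to R+1$). The details you flag as needing care --- that $\prefix_i\ge 1$ prevents interior holes and that the minted prefix is exactly $1$ --- are precisely the points the paper verifies.
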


\begin{proof}
We will prove the result for $A_v$ and $B_v$. The proof for $C_v$ 
and $D_v$ is identical, with $\beta$ substituted for $\alpha$. Note that when
describing the updates performed on seeing an edge $\dedge{uv}$ or $\dedge{vu}$
we will ignore the updates that only touch $u$, as they have no effect on the
sets we are analyzing here.

We prove this result by induction. First, we consider $R = 0$ and $r = 0$ as
the base case. Next, we prove the inductive step where $r$ is increased by 1
while $R = 0$. This finishes the proof for $R = 0$ and any $r$. Next, we prove
the result for $R = 1$ by considering the step when $R$ is increased from $R =
0$. Note that any update that increases $R$ also increases $r$, so in this case
$r$ is also increased by 1. Last, we fix any $r$ and any $R > 0$ and prove the
inductive step when $r$ is increased by 1, but $R$ stays unchanged, and when
both $r$ and $R$ are increased by 1, which completes the proof.

We start with the case where $R = 0$ and $r =
0$. Then the statement is equivalent to $A_v = B_v = \emptyset$, which follows from
how we have defined the initial state of the algorithm, and the fact that edges
not incident to $v$ do not result in updates that affect $A_v$ or $B_v$.

Now suppose the result holds for $r$ (with $R$ still $0$), and consider the $(r
+ 1)\nth$ edge incident to $v$ to arrive (with no $\dedge{vw}$ such that
$f(\dedge{vw}) = 1$ having arrived yet). Then the update consists of performing
$\inc(\Ac,v,1)$, $\inc(\Bc,v,1)$, and then $\measure(v, w)$ or $\measure(w,v)$
for some $w$, followed by $\cleanup(v,w)$ or $\cleanup(w,v)$. 

Before the
$\measure$ operation, this gives us
\begin{align*}
A_v &= \Nbb \cap \interval{1,\min(r+2,d_\alpha + 1)}\\
B_v &= \Nbb \cap \interval{1,\min(r+2,d_{\alpha + 1} + 1)})
\end{align*}
and as the condition of the lemma supposes that the queries did not result
in the algorithm terminating, it will have removed $d_\alpha$ from $A_v$ and
$d_{\alpha+1}$ from $B_v$, and so 
\begin{align*}
A_v &= \Nbb \cap \interval{1,\min(r+2,d_\alpha)}\\
B_v &= \Nbb \cap \interval{1,\min(r+2,d_{\alpha + 1})})
\end{align*}
as desired.
    
Next we will prove the result for $R$ increased from $0$ to $1$ and $r$ increased by $1$. By the previous section, when $R = 0$ the state is
\begin{align*}
A_v &= \Nbb \cap \interval{1,\min(r+1,d_\alpha)}\\
B_v &= \Nbb \cap \interval{1,\min(r+1,d_{\alpha + 1})})\text{.}
\end{align*}
When the $(r+1)\nth$ edge incident to $v$ arrives, if it is also the first edge
with head $v$ in $f_{\alpha}^{-1}(1)$, the update consists of performing
$\inc(\Ac,v,1)$, $\inc(\Bc,v,1)$, $\inc(\Ac,v,d_{\alpha + 1})$,
$\inc(\Bc,v,d_{\alpha + 1})$, and then $\measure(v, w)$ for
some $w$, followed by $\cleanup(v,w)$. 

    After the $\inc$ operations are performed we have 
\begin{align*}
A_v &= \Nbb \cap \interval{1,\min(r+d_{\alpha + 1} + 2,d_\alpha + d_{\alpha +
1} + 1)}\\
B_v &= \Nbb \cap \interval{1,\min(r+d_{\alpha+1} + 2,2d_{\alpha + 1} +
1)})
\end{align*}
and then after performing the queries we remove $d_\alpha$ and $d_\alpha +
d_{\alpha+1}$ from $A_v$ and $d_{\alpha + 1}$ and $2d_{\alpha + 1}$ from $B_v$,
so we have
\begin{align*}
A_v &= \Nbb \cap (\interval{1,d_\alpha} \cup \interval{d_\alpha + 1,
\min(r+d_{\alpha + 1} + 2,d_\alpha + d_{\alpha + 1})})\\
B_v &= \Nbb \cap (\interval{1,d_{\alpha+1}} \cup \interval{d_{\alpha+1} + 1,
\min(r+d_{\alpha+1} + 2,2d_{\alpha + 1})}))
\end{align*}
which matches the lemma statement by setting $\prefix_1 = \prefix_R = 1$.

We now need to consider two more cases to complete the proof: First, the case when we have the result for
some $R > 0$ and $r$, and the update consists of the $(r+1)\nth$ edge incident
to $v$, and this edge is \emph{not} in $f_{\alpha}^{-1}(1)$ or does not have
$v$ as its head.  Second, the case when we have the result for $R$ and $r$ and
the update consists of the $(r+1)\nth$ edge incident to $v$, which is also the
$(R+1)\nth$ edge with head $v$ in $f_{\alpha}^{-1}(1)$.

Let us deal with the cases when $r$ is increased from $r$ to $r+1$ and $R$
stays the same first. The actions performed during the update are
$\inc(\Ac,v,1)$, $\inc(\Bc,v,1)$, and then $\measure(v, w)$ or
$\measure(w,v)$ for some $w$, followed by $\cleanup(v,w)$ or
$\cleanup(w,v)$. The sets before the update, by the inductive hypothesis, are
\begin{align*}
A_v &= \Nbb \cap \paren*{\interval{1,d_\alpha} \cup \bigcup_{i=1}^R I_i}\\
B_v &= \Nbb \cap \paren*{\interval{1,d_{\alpha+1}} \cup \bigcup_{i=1}^R J_i}
\end{align*}
with the intervals $I_i$ and $J_i$ as defined in the lemma statement. The
$\inc(\Ac, v, 1)$ and $\inc(\Bc, v, 1)$ operations, then, correspond to
replacing these intervals with
\begin{align*}
I_i' &= \begin{cases}
\interval{d_\alpha+(i-1)d_{\alpha+1}+\prefix_i + 1,d_\alpha + id_{\alpha + 1} + 1} & \mbox{i
< R}\\
\interval{d_\alpha + (R-1)d_{\alpha + 1} + \prefix_R + 1, Rd_{\alpha + 1}
+ \min(r+2, d_{\alpha} + 1) } & \mbox{i = R}
\end{cases}\\
J_i' &= \begin{cases}
\interval{id_{\alpha+1}+\prefix_i + 1,(i+1)d_{\alpha + 1} + 1} & \mbox{i
< R}\\
\interval{Rd_{\alpha + 1} + \prefix_R + 1, Rd_{\alpha + 1}
+ \min(r+2, d_{\alpha+1} + 1)} & \mbox{i = R}
\end{cases}
\end{align*}
and $\interval{1,d_\alpha}$ with $\interval{1,d_\alpha + 1}$, and
$\interval{1,d_{\alpha+1}}$ with $\interval{1,d_{\alpha+1}+1}$.

The queries then delete $d_\alpha + (i-1)d_\alpha$ from $A_v$ and
$id_\alpha$ from $B_v$ for every $i \in [M]$. So $\interval{1,d_\alpha + 1}$ and
$\interval{1,d_{\alpha+1}}$ return to $\interval{1,d_\alpha}$,
$\interval{1,d_{\alpha+1}}$, respectively, and the intervals $I_i$, $J_i$
become
\begin{align*}
I_i'' &= \begin{cases}
\interval{d_\alpha+(i-1)d_{\alpha+1}+\prefix_i + 1,d_\alpha + id_{\alpha + 1}} & \mbox{i
< R}\\
\interval{d_\alpha + (R-1)d_{\alpha + 1} + \prefix_R + 1, Rd_{\alpha + 1}
+ \min(r+2, d_{\alpha}) } & \mbox{i = R}
\end{cases}\\
J_i'' &= \begin{cases}
\interval{id_{\alpha+1}+\prefix_i + 1,(i+1)d_{\alpha}} & \mbox{i
< R}\\
\interval{Rd_{\alpha + 1} + \prefix_R + 1, Rd_{\alpha + 1}
+ \min(r+2, d_{\alpha+1})} & \mbox{i = R}
\end{cases}
\end{align*}
and so by setting the new $I_i$ to be $I_i''$ and likewise with $J_i$, the
sets are in the form desired (by incrementing every element of
$(\prefix_i)_{i=1}^R$ by 1), as $r$ is now $1$ larger.

    Finally, we consider the inductive step where both $R$ and $r$ are increased by
$1$. This means the arriving edge is the $(r+1)\nth$ edge incident to $v$,
which is also the $(R+1)\nth$ edge with head $v$ in $f^{-1}_\alpha(1)$.

 The update consists of performing $\inc(\Ac,v,1)$, $\inc(\Bc,v,1)$,
 $\inc(\Ac,v,d_{\alpha + 1})$, $\inc(\Bc,v,d_{\alpha + 1})$, and then
 $\measure(v, w)$ for some $w$, followed by $\cleanup(v,w)$. The sets before
 the update, by the inductive hypothesis, is
\begin{align*}
A_v &= \Nbb \cap \paren*{\interval{1,d_\alpha} \cup \bigcup_{i=1}^R I_i}\\
B_v &= \Nbb \cap \paren*{\interval{1,d_{\alpha+1}} \cup \bigcup_{i=1}^R J_i}
\end{align*}
with the intervals $I_i$ and $J_i$ as defined in the lemma statement.

As $\inc$
operations add together, after all the increment operations we have replaced
$\interval{1,d_\alpha}$ with $\interval{1,d_\alpha + d_{\alpha + 1} + 1}$ and
$\interval{1,d_{\alpha+1}}$ with $\interval{1,2d_{\alpha+1}  + 1}$ and $I_i$,
$J_i$ with 
\begin{align*}
I_i' &= \begin{cases}
\interval{d_\alpha+id_{\alpha+1}+\prefix_i + 1,d_\alpha + (i+1)d_{\alpha + 1} + 1}
& \mbox{i < R}\\
\interval{d_\alpha + Rd_{\alpha + 1} + \prefix_R + 1, (R+1)d_{\alpha + 1}
+ \min(r+2, d_{\alpha} + 1) } & \mbox{i = R}
\end{cases}\\
J_i' &= \begin{cases}
\interval{(i+1)d_{\alpha+1}+\prefix_i + 1,(i+2)d_{\alpha + 1} + 1} & \mbox{i
< R}\\
\interval{(R+1)d_{\alpha + 1} + \prefix_R + 1, (R+1)d_{\alpha + 1}
+ \min(r+2, d_{\alpha+1} + 1)} & \mbox{i = R}
\end{cases}
\end{align*}
and so after the queries, as they remove $d_{\alpha} + (i-1)d_{\alpha +
1}$ from $A_v$ and $id_{\alpha + 1}$ from $B_v$ for all $i$,

$\interval{1,d_\alpha + d_{\alpha + 1} + 1}$ becomes $\interval{1,d_\alpha}
\cup \interval{d_\alpha + 1, d_\alpha + d_{\alpha + 1}}$ and
$\interval{1,2d_{\alpha+1}  + 1}$ becomes $\interval{1,d_{\alpha+1}} \cup
\interval{d_{\alpha + 1}  + 1, 2d_{\alpha +1}}$. $I_i'$ and $J_i'$ become
\begin{align*}
I_i'' &= \begin{cases}
\interval{d_\alpha+id_{\alpha+1}+\prefix_i + 1,d_\alpha + (i+1)d_{\alpha + 1}}
& \mbox{i < R}\\
\interval{d_\alpha + Rd_{\alpha + 1} + \prefix_R + 1, (R+1)d_{\alpha + 1}
+ \min(r+2, d_{\alpha}) } & \mbox{i = R}
\end{cases}\\
J_i'' &= \begin{cases}
\interval{(i+1)d_{\alpha+1}+\prefix_i + 1,(i+2)d_{\alpha + 1}} & \mbox{i
< R}\\
\interval{(R+1)d_{\alpha + 1} + \prefix_R + 1, (R+1)d_{\alpha + 1}
+ \min(r+2, d_{\alpha+1})} & \mbox{i = R}
\end{cases}
\end{align*}
and so the sketch is of the form required, by setting (writing $\prefix_i'$ for the
old $\prefix_i$, and noting that this means $\prefix_i \le r+1$ for all $i$)

\begin{align*}
I_i &= \begin{cases}
\interval{d_\alpha + 1, d_\alpha + d_{\alpha + 1}} & \mbox{$i = 1$}\\
I_{i-1}'' &\mbox{$1 < i \le R+1$}
\end{cases}\\
J_i &= \begin{cases}
\interval{d_{\alpha + 1}  + 1, 2d_{\alpha +1}} & \mbox{$i = 1$}\\
J_{i-1}'' &\mbox{$1 < i \le R+1$}
\end{cases}\\
\prefix_i &= \begin{cases}
1 & \mbox{i = 1}\\
\prefix_{i - 1}' + 1 & \mbox{$1 < i \le R+1$}
\end{cases}
\end{align*}
which completes the proof.

\end{proof}

\subsubsection{Query Outcomes}
In this section we characterize the expected contribution to the pseudosnapshot
estimate from all of the query outcomes that can terminate the quantum
stage of the algorithm: those that return something other than $\bot$ in $\measure$
and $\cleanup$.

\begin{lemma}
For all $u,v$, the expected contribution of $\cleanup(u,v)$ to every entry of
the pseudosnapshot estimate is $0$.
\end{lemma}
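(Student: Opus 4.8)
The plan is to read the conclusion directly off the specification of $\cleanup$ and the way the estimate is assembled. In the surrounding analysis, the expectation of each entry of the pseudosnapshot estimate is broken up according to which query (if any) terminates the quantum stage: for every edge $\dedge{uv}$ there are terms coming from the $\measure(u,v)$ queries $r^{x}_{i,j}$, terms coming from the $\queryone$ queries inside $\cleanup(u,v)$, and a single term for the run in which the quantum stage is never terminated. This lemma asserts that every $\cleanup$ term is $0$.

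First I would treat the branch in which some $\queryone$ call inside $\cleanup(u,v)$ returns a value other than $\bot$. By the description of the algorithm, in this branch the algorithm \emph{halts entirely and outputs the all-zeroes estimate}, regardless of anything else that has occurred. Hence, conditioned on the quantum stage terminating because of a query made in $\cleanup(u,v)$, every entry of the pseudosnapshot estimate is $0$; so the contribution of this family of events to each entry is deterministically $0$, and in particular $0$ in expectation.

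Next I would observe that in the complementary branch --- all $\queryone$ calls in $\cleanup(u,v)$ return $\bot$ --- the operation neither writes to any entry of the estimate nor terminates the quantum stage; by the description of $\cleanup$ it only deletes (already-present) elements from the underlying set of the sketch. Its sole downstream effect is on the size of the underlying set seen by later operations, and by Lemma~\ref{lm:reordering} this is already folded into the analysis of those later $\measure$ and $\cleanup$ operations (the probability the sketch survives up to any later query equals the ratio of the relevant set sizes), so it is not a separate ``contribution'' of $\cleanup(u,v)$ to any entry. I would also note that the $\queryone$ queries within a single $\cleanup(u,v)$ are on pairwise-distinct singletons, so Lemma~\ref{lm:reordering} applies to this batch with no ordering subtleties. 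Combining the two branches, $\cleanup(u,v)$ contributes $0$ to every entry in every run, hence $0$ in expectation.

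The step needing the most care is bookkeeping rather than computation: making precise what ``the contribution of $\cleanup(u,v)$'' means inside the decomposition of the expected estimate, and confirming that a non-$\bot$ outcome inside $\cleanup$ forces the all-zeroes output unconditionally (so that this event genuinely contributes $0$, not merely something small). Once that is pinned down, the lemma follows immediately from the two-case split above.
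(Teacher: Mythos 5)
Your argument is correct and matches the paper's proof, which is a one-line observation that the estimate is not modified when the algorithm terminates due to $\cleanup$ (the all-zeroes output). Your extra bookkeeping about the $\bot$ branch and the role of Lemma~\ref{lm:reordering} is a sound elaboration of the same point rather than a different route.
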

\begin{proof}
This follows immediately from the fact that we do not modify the estimate when
the algorithm terminates due to $\cleanup$.
\end{proof}

\begin{lemma}
\label{lm:rightdegreecontribution}
For every $\dedge{uv} \in E$ such that $\db{\dedge{uv}}_u \in
\interval{d_{\alpha}, d_{\alpha+1}}$, $\db{\dedge{uv}}_v \in
\interval{d_{\beta}, d_{\beta+1}}$, and for every $(i,j) \in
\brac{\accuracy}^2$, the total expected contribution of $\measure(u,v)$ to the
pseudosnapshot estimate from returning one of
$r^{x}_{i,j} {x\in\brac{4} \neq \bot}$ is $1$ to the
$\bps{\dedge{uv}}_u$, $\bps{\dedge{uv}}_v$ entry and zero to all other entries
if $i = \frac{\accuracy}{2d_{\alpha}} \doutbps{\dedge{uv}}_u + 1$ and $j =
\frac{\accuracy}{2d_{\beta}} \doutbps{\dedge{uv}}_v + 1$.  Otherwise it is zero
everywhere.
\end{lemma}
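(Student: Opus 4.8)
The plan is to fix an edge $\dedge{uv}$ as in the statement together with a pair $(i,j)\in\brac{\accuracy}^2$, and to evaluate the expected contribution of the four queries $r^1_{i,j},\dots,r^4_{i,j}$ by combining three ingredients: (a) a description of which of the eight elements queried by $r^1_{i,j},\dots,r^4_{i,j}$ sit in the sketch at the instant $\measure(u,v)$ runs; (b) the outcome probabilities of $\querypair$ extracted from Lemma~\ref{lm:reordering}; and (c) the $\pm$ sign pattern with which the classical stage folds a non-$\bot$ outcome into a single entry of the $\ell\times\ell$ estimate.

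For (a) I would appeal not to the \emph{conclusion} of Lemma~\ref{lm:stinv} but to the intermediate sets computed inside its proof, since what I need is the state of the sketch \emph{after} the $\inc$ operations triggered by $\dedge{uv}$ but \emph{before} $\measure(u,v)$ and $\cleanup(u,v)$ — exactly a ``before the $\measure$ operation'' state in that induction. Let $R_u$ be the number of edges with head $u$ among the first $\db{\dedge{uv}}_u$ edges incident to $u$ that are sampled by $f_\alpha$, so $R_u=\tfrac{\accuracy}{2d_\alpha}\doutbps{\dedge{uv}}_u$, and similarly $R_v=\tfrac{\accuracy}{2d_\beta}\doutbps{\dedge{uv}}_v$ (recall $\dedge{uv}$ is an out-edge of $u$ but an in-edge of $v$, so only $u$'s ``stack'' can receive an out-edge bump on this update). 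Using $\db{\dedge{uv}}_u\in\interval{d_\alpha,d_{\alpha+1}}$, i.e.\ $d_\alpha\le\db{\dedge{uv}}_u<d_{\alpha+1}$, the computation shows (in the notation $A_u,B_u,C_v,D_v$ of Lemma~\ref{lm:stinv} for the index sets, read off at this intermediate point) that $d_\alpha+(i-1)d_{\alpha+1}\in A_u\iff i\le R_u+1$, that $id_{\alpha+1}\in B_u\iff i\le R_u$, and symmetrically with $\beta$ for $v$ that $d_\beta+(j-1)d_{\beta+1}\in C_v\iff j\le R_v+1$ and $jd_{\beta+1}\in D_v\iff j\le R_v$. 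Writing $\mathsf a_i,\mathsf b_i,\mathsf c_j,\mathsf d_j\in\{0,1\}$ for these four membership events and putting $i^\star:=R_u+1$, $j^\star:=R_v+1$, we get the telescoping identities $\mathsf a_i-\mathsf b_i=\mathbbm 1[i=i^\star]$ and $\mathsf c_j-\mathsf d_j=\mathbbm 1[j=j^\star]$ (and $\mathsf a_i,\mathsf b_i$ are exactly the membership indicators for the $u$-elements of $r^x_{i,j}$ since the $2\accuracy^2$ copies $\Ac^i_u,\Bc^i_u,\dots$ share their underlying index sets).

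For (b) and (c): all $4\accuracy^2$ pairs queried by $\measure(u,v)$ are pairwise disjoint, so by Lemma~\ref{lm:reordering} they may be reasoned about in any fixed order, and for any one of them, say $\querypair(p,q)$, the probability of reaching it \emph{and} it returning $+1$ is $2/M$ if both of $p,q$ are present at the start of $\measure(u,v)$, is $1/(2M)$ if exactly one is, and is $0$ otherwise, while the probability it returns $-1$ is $1/(2M)$ if exactly one is present and $0$ otherwise (here $M$ is the size of the sketch at creation; disjointness makes ``present at the start of $\measure(u,v)$'' coincide with ``present when the query is reached'', and the $|T_Q|$ in Lemma~\ref{lm:reordering} cancels against the $1/|T_Q|$ in the $\querypair$ rule). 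Since the classical stage records $+r^x_{i,j}M/2$ for $x\in\{1,4\}$ and $-r^x_{i,j}M/2$ for $x\in\{2,3\}$ into one fixed entry of the estimate — the entry indexed by the classical stage's estimates of $\bps{\dedge{uv}}_u$ and $\bps{\dedge{uv}}_v$, which depend only on $i,j$ and on quantities it tracks exactly — the expected contribution of $r^x_{i,j}$ equals $\pm\tfrac M2\bigl(\Pr[\text{reach},\,+1]-\Pr[\text{reach},\,-1]\bigr)$, which is $\pm1$ when both its queried elements are present and $0$ otherwise. The queried pairs of $r^1_{i,j},\dots,r^4_{i,j}$ have membership indicators $(\mathsf a_i,\mathsf c_j),(\mathsf b_i,\mathsf c_j),(\mathsf a_i,\mathsf d_j),(\mathsf b_i,\mathsf d_j)$, so summing the four expected contributions with signs $(+,-,-,+)$ gives $\mathsf a_i\mathsf c_j-\mathsf b_i\mathsf c_j-\mathsf a_i\mathsf d_j+\mathsf b_i\mathsf d_j=(\mathsf a_i-\mathsf b_i)(\mathsf c_j-\mathsf d_j)=\mathbbm 1[i=i^\star]\,\mathbbm 1[j=j^\star]$ as the total expected contribution at index $(i,j)$.

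Finally I would identify the entry receiving this unit of mass when $(i,j)=(i^\star,j^\star)$: the classical-stage reconstruction plugs in $\dbps{\dedge{uv}}_u=d_\alpha$ (valid because $\db{\dedge{uv}}_u\in\interval{d_\alpha,d_{\alpha+1}}$ forces $\wt i=\alpha$), uses that $i^\star-1=R_u$ edges with head $u$ were sampled by $f_\alpha$ — which by definition of $R_u$ reproduces $\doutbps{\dedge{uv}}_u$ exactly — and combines these with the exactly-tracked $\douta{\dedge{uv}}_u$, $\da{\dedge{uv}}_u$ and the exact value of $g(u)$, so its estimate of $\bps{\dedge{uv}}_u$ equals the true $\bps{\dedge{uv}}_u$, and likewise for $v$. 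Hence the contribution $\mathbbm 1[i=i^\star,\,j=j^\star]$ lands precisely in the $\bps{\dedge{uv}}_u,\bps{\dedge{uv}}_v$ entry and is zero elsewhere and for every other $(i,j)$, which is the claim. I expect the main obstacle to be step (a): reading off the correct intermediate sketch state from the several cases of the induction behind Lemma~\ref{lm:stinv}, and in particular checking that a sampled out-edge (which bumps $u$'s ``stack'' by $d_{\alpha+1}$) shifts all the membership thresholds by exactly one index, so that the telescopings $\mathsf a_i-\mathsf b_i$ and $\mathsf c_j-\mathsf d_j$ each isolate a single index; the degenerate value $\db{\dedge{uv}}_u=d_\alpha$ should be cross-checked against the convention the interval notation fixes for $\wt i$.
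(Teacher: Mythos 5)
Your proposal is correct and follows essentially the same route as the paper's proof: use the sketch invariant (Lemma~\ref{lm:stinv}) to read off which of the eight queried elements are present after the $\inc$ operations, combine the $\querypair$ outcome probabilities with Lemma~\ref{lm:reordering} so that the current sketch size cancels against the survival probability, and exploit the $(+,-,-,+)$ sign pattern so the four contributions telescope to an indicator of $i=R_u+1$, $j=R_v+1$. The only differences are presentational --- you fold the conditioning into unconditional probabilities $2/M$ and $1/(2M)$ up front and write the paper's case analysis as the product $(\mathsf a_i-\mathsf b_i)(\mathsf c_j-\mathsf d_j)$, whereas the paper conditions on survival, works with $M'$, and multiplies by $M'/M$ at the end.
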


\begin{proof}

We will start by analyzing the expectation conditional on the algorithm not
terminating before $\dedge{uv}$ arrives. This will give us the result in terms
of $M'$. We will then use Lemma~\ref{lm:reordering} to give us the expectation
in terms of $M$.

At the time when $\dedge{uv}$ arrives, let $R_u'$, $R_v'$, be the $R$ in
Lemma~\ref{lm:stinv} for $u,v$ respectively, and likewise for $r_u'$, $r_v'$
and $r$. Then, for $w = u,v$, set $r_w = r_w' + 1$. Set $R_u = R_u' +
f_\alpha(\dedge{uv}) + f_\beta(\dedge{uv})$. Note that this means that, before
making queries, the algorithm will update the sketch with an $\inc(\Ec, w, 1)$
operation for $w = u,v$ and $\Ec = \Ac, \Bc, \Cc, \Dc$, 
an $\inc(\Ec,u,d_{\alpha+1}\cdot f_{\alpha}(\dedge{uv}))$ operation for $\Ec = \Ac, \Bc$
and an
$\inc(\Ec,u,d_{\beta+1}\cdot f_{\beta}(\dedge{uv}))$ operation for $\Ec = \Cc, \Dc$.

Note that $R_u = \frac{\accuracy}{2d_{\alpha}}\doutbps{\dedge{uv}}_u$, $R_v =
\frac{\accuracy}{2d_{\beta}} \doutbps{\dedge{uv}}_v$, $r_u = \db{\dedge{uv}}_u$, $r_v =
\db{\dedge{uv}}_v$. So we have $r_u \in \interval{d_\alpha,d_{\alpha+1}}$, $r_v
\in \interval{d_\beta,d_{\beta+1}}$.

By Lemma~\ref{lm:stinv}, this means that for all $i$, before the increments,
\begin{align*}
d_\alpha + id_{\alpha+1} - 1 \in A_u, id_{\alpha+1} - 1 \in B_u &\mbox{ iff $i
\in \brac{R_u'}$}\\
d_\beta + id_{\beta+1} - 1 \in C_v, id_{\beta+1} - 1 \in D_v &\mbox{ iff $i \in
\brac{R_v'}$}
\end{align*}
and after them:
\begin{align*}
d_\alpha + id_{\alpha+1} \in A_u, id_{\alpha+1} \in B_u &\mbox{ iff $i
\in \brac{R_u}$}\\
d_\beta + id_{\beta+1} \in C_v, id_{\beta+1} \in D_v &\mbox{ iff $i \in
\brac{R_v}$}
\end{align*}

Now, we write $\Qc$ for the state of the sketch before the
queries (where the underlying set is of size $M'$), and recall that
\begin{align*}
  r^{1}_{i, j} = \querypair&((u, (d_\alpha + (i-1)d_{\alpha+1}), a^{t_{i,j}}),& (v, (d_\beta + (j-1)d_{\beta+1}), c^{t_{i,j}}), &\Qc)
\\r^{2}_{i, j} =\querypair&((u, (id_{\alpha+1})               , b^{t_{i,j}}),& (v, (d_\beta + (j-1)d_{\beta+1}), c^{s_{i,j}}), &\Qc)
\\r^{3}_{i, j} =\querypair&((u, (d_\alpha + (i-1)d_{\alpha+1}), a^{s_{i,j}}),& (v, (jd_{\beta+1})              , d^{t_{i,j}}), &\Qc)
\\r^{4}_{i, j} =\querypair&((u, (id_{\alpha+1})               , b^{s_{i,j}}),& (v, (jd_{\beta+1})              , d^{s_{i,j}}), &\Qc)
\end{align*}

we have
\begin{align*}
\Pr[r^{1}_{i, j} = +1] &= \begin{cases}
\frac{2}{M'}  & \mbox{$i \in \brac{R_u + 1}$ and $j \in \brac{R_v + 1}$}\\
\frac{1}{2M'} & \mbox{exactly one of $i \in \brac{R_u + 1}$ and $j \in \brac{R_v + 1}$}\\
0 & \mbox{otherwise}
\end{cases}\\
\Pr[r^{1}_{i, j} = -1] &= \begin{cases}
0  & \mbox{$i \in \brac{R_u + 1}$ and $j \in \brac{R_v + 1}$}\\
\frac{1}{2M'} & \mbox{exactly one of $i \in \brac{R_u + 1}$ and $j \in \brac{R_v + 1}$}\\
0 & \mbox{otherwise}
\end{cases}\\
\Pr[r^{2}_{i, j} = +1] &= \begin{cases}
\frac{2}{M'}  & \mbox{$i \in \brac{R_u}$ and $j \in \brac{R_v + 1}$}\\
\frac{1}{2M'} & \mbox{exactly one of $i \in \brac{R_u}$ and $j \in \brac{R_v + 1}$}\\
0 & \mbox{otherwise}
\end{cases}\\
\Pr[r^{2}_{i, j} = -1] &= \begin{cases}
0  & \mbox{$i \in \brac{R_u}$ and $j \in \brac{R_v + 1}$}\\
\frac{1}{2M'} & \mbox{exactly one of $i \in \brac{R_u}$ and $j \in \brac{R_v + 1}$}\\
0 & \mbox{otherwise}
\end{cases}\\
\Pr[r^{3}_{i, j} = +1] &= \begin{cases}
\frac{2}{M'}  & \mbox{$i \in \brac{R_u + 1}$ and $j \in \brac{R_v}$}\\
\frac{1}{2M'} & \mbox{exactly one of $i \in \brac{R_u + 1}$ and $j \in \brac{R_v}$}\\
0 & \mbox{otherwise}
\end{cases}\\
\Pr[r^{3}_{i, j} = -1] &= \begin{cases}
0  & \mbox{$i \in \brac{R_u + 1}$ and $j \in \brac{R_v}$}\\
\frac{1}{2M'} & \mbox{exactly one of $i \in \brac{R_u + 1}$ and $j \in \brac{R_v}$}\\
0 & \mbox{otherwise}
\end{cases}\\
\Pr[r^{4}_{i, j} = +1] &= \begin{cases}
\frac{2}{M'}  & \mbox{$i \in \brac{R_u}$ and $j \in \brac{R_v}$}\\
\frac{1}{2M'} & \mbox{exactly one of $i \in \brac{R_u}$ and $j \in \brac{R_v}$}\\
0 & \mbox{otherwise.}
\end{cases}\\
\Pr[r^{4}_{i, j} = -1] &= \begin{cases}
0  & \mbox{$i \in \brac{R_u}$ and $j \in \brac{R_v}$}\\
\frac{1}{2M'} & \mbox{exactly one of $i \in \brac{R_u}$ and $j \in \brac{R_v}$}\\
0 & \mbox{otherwise.}
\end{cases}\\
\end{align*}

    Now, recall that the contribution to the chosen entry of the pseudosnapshot
(with the choice depending only on $i,j$) when seeing the result
$r^{a}_{i, j}$  is $r^{a}_{i, j}M/2$ if $a = 1,4$ and $-r^{a}_{i, j}M/2$
if $a = 2,3$. Therefore, writing the total expected contribution from queries
summed over possible outcomes $r^{a}_{i, j} = \pm1$ as $x_{i,j}^a$, we have

\begin{align*}
x_{i,j}^1 &= \begin{cases}
\frac{M}{M'} & \mbox{$i \in \brac{R_u + 1}$ and $j \in \brac{R_v + 1}$}\\
0 & \mbox{otherwise}
\end{cases}\\
x_{i,j}^2 &= \begin{cases}
-\frac{M}{M'} & \mbox{$i \in \brac{R_u}$ and $j \in \brac{R_v + 1}$}\\
0 & \mbox{otherwise}
\end{cases}\\
x_{i,j}^3 &= \begin{cases}
-\frac{M}{M'} & \mbox{$i \in \brac{R_u + 1}$ and $j \in \brac{R_v}$}\\
0 & \mbox{otherwise}
\end{cases}\\
x_{i,j}^4&= \begin{cases}
\frac{M}{M'} & \mbox{$i \in \brac{R_u}$ and $j \in \brac{R_v}$}\\
0 & \mbox{otherwise}
\end{cases}
\end{align*}
and so \[
\sum_{a=1}^4x_{i,j}^a = \begin{cases}
\frac{M}{M'} & \mbox{$i = R_u + 1$ and $j = R_v + 1$}\\
0 & \mbox{otherwise.}
\end{cases}
\]

Now, when $i = R_u + 1$ and $j = R_v + 1$, and $\db{\dedge{uv}}_u \in
\interval{d_{\alpha}, d_{\alpha+1}}$, $\db{\dedge{uv}}_v \in
\interval{d_{\beta}, d_{\beta+1}}$, the entry of the pseudosnapshot estimate
chosen is $\bps{\dedge{uv}}_u, \bps{\dedge{uv}}_v$. We have that the
contribution to it conditioned on the algorithm not terminating before
processing $\dedge{uv}$ is $M/M'$. So, as the contribution from $\dedge{uv}$ is
guaranteed to be $0$ if the algorithm has already terminated, the total
expectation is, by Lemma~\ref{lm:reordering}, \[ 
(1 - p) \cdot \frac{M}{M'} = (1 - p) \cdot \frac{M}{(1-p)M} = 1
\]
where $p$ is the probability of termination before processing $\dedge{uv}$.

\end{proof}

\begin{lemma}
\label{lm:wrongdegreecontribution}
For every $\dedge{uv}$ such that $\db{\dedge{uv}}_u \not\in
\interval{d_{\alpha}, d_{\alpha+1}}$, or $\db{\dedge{uv}}_v \not\in
\interval{d_{\beta}, d_{\beta+1}}$, the total expected contribution of
$\measure(u,v)$ to any entry of the pseudosnapshot estimate is zero.
\end{lemma}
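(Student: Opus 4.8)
The plan is to reuse the structure of the proof of Lemma~\ref{lm:rightdegreecontribution}, tracking the one place where the in-range hypothesis was used. As there, it suffices to fix an edge $\dedge{uv}$ for which the quantum stage has not terminated before $\dedge{uv}$ is processed (otherwise $\measure(u,v)$ never runs and contributes nothing) and to show that, conditioned on this, the expected contribution of the queries $r^1_{i,j},\dots,r^4_{i,j}$ to every entry of the pseudosnapshot estimate is $0$; unlike in Lemma~\ref{lm:rightdegreecontribution}, no rescaling via Lemma~\ref{lm:reordering} is needed, since a conditional expectation of $0$ immediately forces the unconditional one to be $0$. I would also split into the two symmetric cases $\db{\dedge{uv}}_u \notin \interval{d_\alpha,d_{\alpha+1}}$ and $\db{\dedge{uv}}_v \notin \interval{d_\beta,d_{\beta+1}}$: the four queries of $\measure(u,v)$ pair each of $A_u,B_u$ with each of $C_v,D_v$ (with signs $+,-,-,+$), so the roles of $(u,A_u,B_u,d_\alpha,d_{\alpha+1})$ and $(v,C_v,D_v,d_\beta,d_{\beta+1})$ are interchangeable and it is enough to treat the first case.

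Assume $\db{\dedge{uv}}_u\notin\interval{d_\alpha,d_{\alpha+1}}$. Exactly as in Lemma~\ref{lm:rightdegreecontribution}, I would apply Lemma~\ref{lm:stinv} to the sketch just before $\dedge{uv}$ and push it forward through the $\inc$ operations that $\dedge{uv}$ triggers, to read off which of the queried markers $(u,d_\alpha+(i-1)d_{\alpha+1},a^{\cdot})$ and $(u,id_{\alpha+1},b^{\cdot})$ lie in $A_u,B_u$ at query time. Writing $R_u$ for the hash-fire count used there and $\mathbbm{1}^A_i,\mathbbm{1}^B_i$ (resp.\ $\mathbbm{1}^C_j,\mathbbm{1}^D_j$) for the indicators that the level-$i$ $A_u$- and $B_u$-markers (resp.\ level-$j$ $C_v$- and $D_v$-markers) are present, the same computation as in Lemma~\ref{lm:rightdegreecontribution} --- the specification of $\querypair$ together with the sign pattern $+,-,-,+$ --- gives that the total expected contribution of $r^1_{i,j},\dots,r^4_{i,j}$ equals $\frac{M}{M'}(\mathbbm{1}^A_i-\mathbbm{1}^B_i)(\mathbbm{1}^C_j-\mathbbm{1}^D_j)$, with $M'$ the size of the underlying set at query time. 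In the in-range proof the hypothesis forced $\mathbbm{1}^A_i-\mathbbm{1}^B_i=\mathbbm{1}[i=R_u+1]$: the $A_u$-stack had reached its current-window marker $d_\alpha+R_u d_{\alpha+1}$ while the $B_u$-stack had not reached $(R_u+1)d_{\alpha+1}$, leaving the single surviving term.

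The crux for the present lemma is to show that, under the negated hypothesis, $\mathbbm{1}^A_i=\mathbbm{1}^B_i$ for \emph{every} $i$, so that this quantity is $0$ for all $(i,j)$ and the contribution to every entry vanishes. Here I would use Lemma~\ref{lm:stinv}'s description of the last intervals $I_R,J_R$, whose right endpoints are $R\, d_{\alpha+1}+\min(\db{\dedge{uv}}_u+1,d_\alpha)$ and $R\, d_{\alpha+1}+\min(\db{\dedge{uv}}_u+1,d_{\alpha+1})$ respectively: every marker at a level below $R_u+1$ lies in both $A_u$ and $B_u$, every marker above $R_u+1$ lies in neither, the current-window marker $d_\alpha+R_u d_{\alpha+1}$ lies in $A_u$ only once $\db{\dedge{uv}}_u$ is (essentially) at least $d_\alpha$, and $(R_u+1)d_{\alpha+1}$ lies in $B_u$ only once $\db{\dedge{uv}}_u$ is (essentially) at least $d_{\alpha+1}$. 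Hence if $\db{\dedge{uv}}_u<d_\alpha$ then neither stack reaches its current-window marker, and if $\db{\dedge{uv}}_u\ge d_{\alpha+1}$ then both do; in either sub-case $\mathbbm{1}^A_i=\mathbbm{1}^B_i$ for all $i$, as required. The symmetric case yields $\mathbbm{1}^C_j=\mathbbm{1}^D_j$ for all $j$ by the identical argument with $\beta$ in place of $\alpha$.

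The step I expect to be the main obstacle is the second one: carefully propagating Lemma~\ref{lm:stinv} through the $\dedge{uv}$-increments and the $\cleanup$ removals to pin down $\mathbbm{1}^A_i$ and $\mathbbm{1}^B_i$ at the precise queried positions, in particular handling the interplay between the window-prefix offsets $\prefix_i$ (which fix the left endpoints of the $I_i,J_i$) and the degree-dependent caps on the right endpoints of $I_R,J_R$, as well as the off-by-one between ``degree'' and ``stack height'' that makes the genuine boundary of the nonzero regime $\db{\dedge{uv}}_u\in\interval{d_\alpha,d_{\alpha+1}}$ rather than a shifted copy of it. Once those membership facts are in hand, the cancellation $(\mathbbm{1}^A_i-\mathbbm{1}^B_i)(\mathbbm{1}^C_j-\mathbbm{1}^D_j)=0$ is immediate.
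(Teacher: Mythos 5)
Your proposal is correct and follows essentially the same route as the paper: condition on the algorithm surviving to $\dedge{uv}$, invoke Lemma~\ref{lm:stinv} to conclude that the out-of-range degree forces the $A$/$B$ (or $C$/$D$) marker memberships to coincide at every level, and let the $+,-,-,+$ sign pattern cancel the four query contributions. Your factorization of the conditional expectation as $\frac{M}{M'}(\mathbbm{1}^A_i-\mathbbm{1}^B_i)(\mathbbm{1}^C_j-\mathbbm{1}^D_j)$ is just a cleaner algebraic packaging of the paper's statement that the relevant pairs of queries have identical outcome distributions.
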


\begin{proof}
We will analyze the expectation conditional on the algorithm not
terminating before $\dedge{uv}$ arrives. As the expected contribution is
trivially $0$ conditional on the algorithm terminating before $\dedge{uv}$
arrives, this will suffice for the result. 

We have that at least one of $\db{\dedge{uv}}_w < d_\gamma$ for some
$(w,\gamma) \in \set{(u,\alpha), (v,\beta)}$ or $\db{\dedge{uv}}_w \ge
d_{\gamma + 1}$ for some $(w,\gamma) \in \set{(u,\alpha), (v,\beta)}$. By
Lemma~\ref{lm:stinv}, this implies that, for one of $(E,F) \in \set{(A,B),
(C,D)}$, \[
d_\gamma + (i - 1)d_{\gamma+1} \in E_w \Leftrightarrow id_{\gamma+1} \in F_w
\]
for all $i \in \brac{M}$. 

So, writing $r^{a}_{i, j}$ for the outcomes of queries in $\measure(u, v)$ for the state
of the algorithm, we have either 
\begin{align*}
\Pr[r^{1}_{i, j} = +1] &= \Pr[r^{2}_{i, j} = +1],& \Pr[r^{1}_{i, j} = -1] &= \Pr[r^{2}_{i, j} = -1]\\
\Pr[r^{3}_{i, j} = +1] &= \Pr[r^{4}_{i, j} = +1],& \Pr[r^{3}_{i, j} = -1] &= \Pr[r^{4}_{i, j} = -1]\\
\end{align*}

for all $i,j$, or 
\begin{align*}
\Pr[r^{1}_{i, j} = +1] &= \Pr[r^{2}_{i, j} = +1],& \Pr[r^{3}_{i, j} = -1] &= \Pr[r^{3}_{i, j} = -1]\\
\Pr[r^{2}_{i, j} = +1] &= \Pr[r^{4}_{i, j} = +1],& \Pr[r^{2}_{i, j} = -1] &= \Pr[r^{4}_{i, j} = -1]\\
\end{align*}
for all $i,j$. 

    As for all $i,j$ the contribution from the query
result $r^{a}_{i, j}$ is made to the same entry of the estimate
regardless of $a$, and is $r^{a}_{i, j}M/2$ for $a = 1,4$ and $-r^{a}_{i, j}M/2$
for $a = 2,3$, this implies the expected contributions cancel out, and so the
lemma follows.

\end{proof}

\begin{lemma}
\label{lm:bias}
Each entry of the estimate has bias at most the number of edges
$\dedge{uv}$ such that:
\begin{enumerate}
\item $\db{\dedge{uv}}_u \in \interval{d_\alpha,
d_{\alpha+1}}$
\item $\db{\dedge{uv}}_v \in \interval{d_\beta, d_{\beta+1}}$
\item $\max\set*{\frac{\accuracy}{2d_{\alpha}} \doutbps{\dedge{uv}}_u + 1,
\frac{\accuracy}{2d_{\beta}} \doutbps{\dedge{uv}}_v + 1} > \accuracy$
\end{enumerate}
\end{lemma}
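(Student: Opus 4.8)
The plan is to combine the three preceding lemmas---the one stating that termination inside $\cleanup(u,v)$ contributes $0$ in expectation to every entry, Lemma~\ref{lm:rightdegreecontribution}, and Lemma~\ref{lm:wrongdegreecontribution}---to obtain a closed form for the expectation of each entry of the estimate, and then compare it directly against the definition of the restricted pseudosnapshot $\histps{G,E'}$, where $E'$ is the set of edges $\dedge{uv}$ with $\db{\dedge{uv}}_u \in \interval{d_\alpha,d_{\alpha+1}}$ and $\db{\dedge{uv}}_v \in \interval{d_\beta,d_{\beta+1}}$.

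First I would note that the output vector of the algorithm is supported on at most one entry: it is all-zeroes unless some $\measure(u,v)$ query terminates the quantum stage, in which case a single value $\pm r^{x}_{i,j}M/2$ is written into the entry determined by $i,j$ (and by the classical-stage rule, $x$ only flips the sign). Consequently, for a fixed entry, the expectation of that entry of the estimate is the sum over edges $\dedge{uv}$ of the expected contribution of $\measure(u,v)$ to that entry: the no-termination case contributes nothing by inspection, and $\cleanup$-terminations contribute nothing by the cited lemma, while each per-edge term already absorbs, via Lemma~\ref{lm:reordering}, the probability that no earlier query destroyed the sketch.

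Next I would substitute the two contribution lemmas. Edges $\dedge{uv}\notin E'$ contribute $0$ by Lemma~\ref{lm:wrongdegreecontribution}. For $\dedge{uv}\in E'$, summing Lemma~\ref{lm:rightdegreecontribution} over $(i,j)\in\brac{\accuracy}^2$, the only pair that can contribute is $i=\frac{\accuracy}{2d_\alpha}\doutbps{\dedge{uv}}_u+1$, $j=\frac{\accuracy}{2d_\beta}\doutbps{\dedge{uv}}_v+1$; since $\db{\dedge{uv}}_u\in\interval{d_\alpha,d_{\alpha+1}}$ forces $\frac{\accuracy}{2d_\alpha}\doutbps{\dedge{uv}}_u$ to be a nonnegative integer (and likewise for $v$), this pair lies in $\brac{\accuracy}^2$ exactly when the maximum in condition~3 is $\le\accuracy$, in which case $\dedge{uv}$ contributes $1$ to the $\bps{\dedge{uv}}_u,\bps{\dedge{uv}}_v$ entry and $0$ elsewhere, and otherwise contributes $0$ everywhere. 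Hence the expectation of entry $(k,l)$ of the estimate equals the number of edges $\dedge{uv}\in E'$ whose pseudobias classes are $(k,l)$ and for which the maximum in condition~3 is $\le\accuracy$. On the other hand $\histps{G,E'}_{k,l}$ is the number of \emph{all} edges $\dedge{uv}\in E'$ with pseudobias classes $(k,l)$---here the match is that the definition of $\histps{G,E'}$ places edge $\dedge{uv}$ into the entry indexed by the $\bps{\dedge{uv}}$-classes of its endpoints, which is exactly the entry Lemma~\ref{lm:rightdegreecontribution} assigns mass to. Subtracting, the bias of entry $(k,l)$ equals the number of edges $\dedge{uv}\in E'$ with pseudobias classes $(k,l)$ and maximum in condition~3 exceeding $\accuracy$, which is at most the number of edges satisfying all three listed conditions, proving the lemma.

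I do not expect a serious obstacle here: the content lives entirely in the earlier lemmas, and what remains is bookkeeping. The only points requiring care are (i) confirming that the entry to which Lemma~\ref{lm:rightdegreecontribution} assigns mass coincides with the entry the definition of $\histps{G,E'}$ uses for that edge; (ii) the integrality remark that makes ``$i,j\in\brac{\accuracy}$'' equivalent to ``the maximum in condition~3 is $\le\accuracy$''; and (iii) that each edge feeds at most one entry, so the per-entry discrepancy is bounded above by the total count of bad edges rather than only their sum. If desired, one could also record the stronger fact that the bias is \emph{nonnegative} (the estimate only ever undercounts), which falls out of the same computation.
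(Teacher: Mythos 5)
Your proposal is correct and follows essentially the same route as the paper: combine Lemma~\ref{lm:wrongdegreecontribution} (edges outside the degree classes contribute zero) with Lemma~\ref{lm:rightdegreecontribution} (each edge in the right classes contributes exactly $1$ to its pseudobias entry unless the index pair falls outside $\brac{\accuracy}^2$, which happens precisely when condition~3 holds), and conclude the bias is bounded by the count of such edges. Your write-up is somewhat more explicit than the paper's (e.g.\ the cleanup case, the integrality remark, and the observation that the bias is nonnegative), but the argument is the same.
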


\begin{proof}
By Lemma~\ref{lm:wrongdegreecontribution}, no edge edge with endpoints in the
wrong degree classes contribute to the estimate. By
Lemma~\ref{lm:rightdegreecontribution} every edge $\dedge{uv}$ with endpoints
in the right degree classes contributes $1$ to the correct entry of the
estimate of $\histps{G}$ (and 0 to all others), unless \[
\max\set*{\frac{\accuracy}{2d_{\alpha}} \doutbps{\dedge{uv}}_u + 1,
\frac{\accuracy}{2d_{\beta}} \doutbps{\dedge{uv}}_v + 1} > \accuracy
\]
in which case it contributes 0 to all entries, as the $i,j$ in
Lemma~\ref{lm:rightdegreecontribution} only range in $\brac{\accuracy}^2$.
\end{proof}

\begin{lemma}
The variance of any entry of the pseudosnapshot estimate is $\bO{\accuracy^6m^2}$.
\end{lemma}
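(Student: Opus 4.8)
The plan is to bound the variance by a crude \emph{deterministic} bound on the magnitude of the estimator, exactly in the spirit of Lemma~\ref{lm:triangle_var}. Fix an entry of the pseudosnapshot estimate and let $X$ be the algorithm's estimate for that entry. Recall from the description of the classical stage that the estimate has at most one nonzero entry, and that this entry is set to $r^{x}_{i,j}M/2$ (for $x = 1,4$) or $-r^{x}_{i,j}M/2$ (for $x = 2,3$), where $r^{x}_{i,j}$ is the value returned by one of the $\querypair$ operations performed in $\measure(u,v)$ at the moment the quantum stage terminates; the estimate is the all-zeroes matrix whenever the quantum stage instead runs to completion or is terminated by a $\cleanup$ query. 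Since a $\querypair$ operation returns an element of $\set{+1,-1,\bot}$, we get $X \in \set{-M/2,\,0,\,M/2}$, so $\abs{X} \le M/2$ with probability $1$.

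Next, since the underlying set of the sketch has size $M' \le M = C\accuracy^3 m$ for a fixed constant $C$, this gives $\abs{X} \le \tfrac{C}{2}\accuracy^3 m$ deterministically, and hence
\[
\Var{X} \le \Expect{X^2} \le \paren*{\tfrac{C}{2}\accuracy^3 m}^2 = \bO{\accuracy^6 m^2},
\]
as claimed.

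The argument has no real obstacle --- it is a one-line consequence of the boundedness of the estimator, mirroring Lemma~\ref{lm:triangle_var}. The only subtlety worth recording is that a sharper dependence on $\accuracy$ could be obtained by additionally bounding the probability $p$ that the quantum stage terminates on a $\measure$ query relevant to the fixed entry: using Lemma~\ref{lm:reordering} together with Lemmas~\ref{lm:rightdegreecontribution} and~\ref{lm:wrongdegreecontribution}, one would instead estimate $\Expect{X^2} \le (M/2)^2 p$. This refinement is not needed for the stated bound.
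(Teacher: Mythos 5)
Your proof is correct and is essentially identical to the paper's: both bound the variance by the deterministic bound $\abs{X} \le \bO{M}$ on the single nonzero entry and then use $M = \bO{\accuracy^3 m}$.
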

\begin{proof}
The output of the algorithm is an estimate of the pseudosnapshot with one
$\bO{M}$ entry and all other entries $0$. So this follows by the fact that $M =
\bO{\accuracy^3 m}$.
\end{proof}

\subsubsection{Space Usage}
\begin{lemma}
The algorithm uses only $\bO{\log n}$ qubits of space.
\end{lemma}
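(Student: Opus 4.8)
The plan is to reduce the whole statement to a single routine fact. By Theorem~\ref{thm:implementability}, the quantum sketch $\Qc$ over a universe $U$ can be implemented with $\log|U|$ qubits, and the pseudosnapshot estimation algorithm of Lemma~\ref{lm:psnapest} is entirely classical except that it maintains one instance of $\Qc$. Hence the algorithm's total quantum memory is exactly $\log|U|$, and the whole proof amounts to checking that $|U| = \poly(n)$, so that $\log|U| = \bO{\log n}$.

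To do that I would simply expand the definition of $U$ given just above the lemma. It consists of the $M$ scratch elements $(\ell,S)$ with $\ell \in [M]$, together with the vertex elements $(u,j,Q)$ with $u \in [n]$, $j \in [Mn]$, and $Q$ ranging over the $\bO{\accuracy^2}$ labels $\{a^i,b^i,c^i,d^i : i \in [2\accuracy^2]\}$ (plus the scratch label $S$). Thus $|U| = \bO{\accuracy^2 M n^2}$; substituting $M = \bO{\accuracy^3 m}$ and using the standard assumption that the stream length satisfies $m = \poly(n)$, together with $\accuracy = \poly(n)$, gives $|U| = \poly(n)$ and therefore $\log|U| = \bO{\log n}$, which is exactly the claimed bound.

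Finally I would note that nothing else in the algorithm is stored in qubits: the hash families $(f_i)_i$ and $g$ are never materialized --- each value is consumed at most once (at the update where the relevant edge arrives, or once at the end of the stream), so they are realized by drawing fresh random bits, as explained in the footnotes --- and the remaining classical state is just the scratch counter $\ell$, the $\bO{1}$ running quantities $\douta{e}_u,\douta{e}_v,\da{e}_u,\da{e}_v$ maintained in the classical stage, and the constant-size tuple $(x,i,j,u,v)$ handed across stages, all of which fit in $\bO{\log n}$ classical bits and use no quantum memory. I do not expect a genuine obstacle here; the one point that requires a moment's care is confirming that every parameter entering $U$ --- the scratch-set size $M = C\accuracy^3 m$ and the stack height $Mn$ in particular --- is polynomially bounded in $n$, which is precisely why we need both $\accuracy$ and the stream length $m$ to be polynomial in $n$, as is standard for graph streams.
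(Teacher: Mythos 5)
Your proposal is correct and follows essentially the same route as the paper's own proof: expand the universe $U$, observe $|U| = \poly(n)$ using $M = \bO{\accuracy^3 m}$, $\accuracy = \poly(n)$, and $m = \poly(n)$, so the sketch needs $\bO{\log|U|} = \bO{\log n}$ qubits, with the remaining classical state being a constant number of $\poly(n)$-sized counters. Your additional remarks about the hash functions and the classical-stage bookkeeping are consistent with the paper's footnotes and add nothing that conflicts with its argument.
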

\begin{proof}
The algorithm maintains a quantum sketch with elements from the universe 
$$U = \{(\ell, S) \mid \ell \in [M]\} \cup \{(u, j, q) \mid u \in [n], j \in [Mn]\}$$
$q \in \{a^{i}\mid i \in [2\accuracy^2]\} \cup \{b^{i}\mid i \in [2\accuracy^2]\} \cup \{c^{i}\mid i \in [2\accuracy^2]\} \cup \{d^{i}\mid i \in [2\accuracy^2]\}$. It requires $\bO{\log{|U|}} = \bO{\log
n}$ qubits to store, along with a constant number of counters of size
$\poly(n)$, along with some rational numbers made from constant-degree
polynomials of such numbers.

\end{proof}

\section*{Acknowledgements}

Nadezhda Voronova thanks Robin Kothari for his question at QIP 2024, which led to further work and is addressed in this paper. She was supported by a Sloan Research Fellowship to Mark Bun.

John Kallaugher and Ojas Parekh were supported by the Laboratory Directed
Research and Development program at Sandia National Laboratories, a
multimission laboratory managed and operated by National Technology and
Engineering Solutions of Sandia, LLC., a wholly owned subsidiary of Honeywell
International, Inc., for the U.S. Department of Energy's National Nuclear
Security Administration under contract DE-NA-0003525. Also supported by the
U.S. Department of Energy, Office of Science, Office of Advanced Scientific
Computing Research, Accelerated Research in Quantum Computing program,
Fundamental Algorithmic Research for Quantum Computing (FAR-QC). 

Sandia National Laboratories is a multi-mission laboratory managed and operated
by National Technology \& Engineering Solutions of Sandia, LLC (NTESS), a wholly
owned subsidiary of Honeywell International Inc., for the U.S. Department of
Energy’s National Nuclear Security Administration (DOE/NNSA) under contract
DE-NA0003525. This written work is authored by an employee of NTESS. The
employee, not NTESS, owns the right, title and interest in and to the written
work and is responsible for its contents. Any subjective views or opinions that
might be expressed in the written work do not necessarily represent the views
of the U.S. Government. The publisher acknowledges that the U.S. Government
retains a non-exclusive, paid-up, irrevocable, world-wide license to publish or
reproduce the published form of this written work or allow others to do so, for
U.S. Government purposes. The DOE will provide public access to results of
federally sponsored research in accordance with the DOE Public Access Plan.

\newpage
\bibliographystyle{alpha}
\bibliography{refs}

\newpage

\appendix
\section{Omitted Proofs from Section \ref{sec:heavy_edges}}
\label{appendix:heavy_edges}
In this section, we present the omitted proofs from
Section~\ref{sec:heavy_edges}.

In the beginning of the stream, our quantum sketch $\Qc$ contains the $S_0 =
[4m] \times \{S\}$, $|S_0| = 4m$. We will use $S_{m'}$ to denote the set stored
in the sketch after processing $m'$ edges (that is, $m'$ iterations of the loop
in the algorithm). We will now prove a loop invariant on the contents of the
sketch in the case where it has not yet terminated. We will write $d_v^{<m'}$
for $d_v^{<\dedge{xy}}$, for $\dedge{xy}$ the ${m'}\nth$ edge to arrive in
the stream.

\hedgeloopinvariant*
\begin{proof}
	We will proceed by induction. Before processing any edges, $S_0 = \{(i, S)
	\mid 0 \leq i \leq 4m\}$ by the definition of the $\create$ operation.

	Now suppose
	$$S_{m'} = \{(i, S) \mid 4m' \leq i \leq 4m\} \cup \bigcup_{w \in V}
	(\bigcup_{j \in [\min(d_H, d_w) - 1]} (w, j, H) \cup  \bigcup_{j \in
	[\min(d_T, d_w)-1]} (w, j, T))$$
	and the $(m'+1)$-th edge is $\dedge{uv}$.

	First, the algorithm applies the $\update(\pi_{(4m'), uv}, \Qc)$ operation.
	By the definition of the permutation, and using the fact that $d_w^{<m'}$
	is $d_w^{<m'+1} + 1$ for $w = u,v$ and $d_w^{<m'}$ otherwise (and likewise,
	the permutation only touches $(w,j,Q)$ for $w = u,v$), this replaces the
	set with
	\begin{align*}
	&\{i, S) \mid 4(m'+1) \leq i \leq 4m\} \\
  &\cup \bigcup_{w \in V\setminus\{u,
	v\}} \paren*{\bigcup_{j \in [\min(d_H, d_w^{<m'+1}) - 1]} (w, j, H) \cup
	\bigcup_{j \in [\min(d_T, d_w^{<m'+1})-1]} (w, j, T)}\\ &\cup
	\paren*{\bigcup_{j \in [\min(d_H+1, d_w^{<m'+1})-1]} (w, j, H) \cup
	\bigcup_{j \in [\min(d_T+1, d_w^{<m'+1})-1]} (w, j, T)}
	\end{align*}
	Next, the algorithm executes $\querypair((u, d_H, H), (v, d_T, T), \Qc)$.
	If this query returns anything other than $\bot$, the algorithm terminates
	before $m'$ edges are processed and so the lemma holds trivially.
	Otherwise, the only effect of the query is the deletion of elements $(u,
	d_H, H)$ and $(v, d_T, T)$, if they are present in the set. Therefore 
	\begin{align*}
	S_{m'+1} &= \{(i, S) \mid 4(m'+1) \leq i \leq 4m\} \\
  &\phantom{=}\cup \bigcup_{w \in
	V\setminus\{u, v\}} \paren*{\bigcup_{j \in [\min(d_H, d_w) - 1]} (w, j, H) \cup
	\bigcup_{j \in [\min(d_T, d_w)-1]} (w, j, T)}\\ 
	&\phantom{=}\cup \paren*{\bigcup_{j \in
	[\min(d_H, d_w)-1]} (w, j, H) \cup  \bigcup_{j \in [\min(d_T, d_w)-1]} (w,
	j, T)} \\
	&= \{(i, S) \mid 4(m'+1) \leq i \leq 4m\} \cup \bigcup_{w \in V}
	\paren*{\bigcup_{j \in [\min(d_H, d_w) - 1]} (w, j, H) \cup \bigcup_{j \in
	[\min(d_T, d_w)-1]} (w, j, T)}
	\end{align*}
\end{proof}
Using this invariant, we show that the queries deliver the correct result in
expectation.
\hedgeexpect*
\begin{proof}
We will show that, for each edge $\dedge{uv}$, the expectation of that edge's
contribution to the estimator (that is, the probability that the algorithm
terminates while processing that edge, times the expectation of the value the
algorithm returns if this happens) is $1$ if $d_u^{\leq \dedge{uv}} \geq d_H,
d_v^{\dedge{uv}}$ and $0$ otherwise.

Let $\dedge{uv}$ be the ${m'}\nth$ such edge.  Let $s_{m'}$ be the size of
$S_{m'}$, given that the algorithm has not terminated when this edge is
processed.  By Lemma~\ref{lm:reordering}, this probability is
$\frac{s_{m'}}{4m}$. By Lemma~\ref{lm:hedgeloopinvariant}, 
  $$S_{m'} = \{(i, S) \mid 4m' \leq i \leq 4m\} \cup \bigcup_{w \in V}
  \paren*{\bigcup_{j \in [\min(d_H, d_w^{<m'}) - 1]} (w, j, H) \cup  \bigcup_{j
  \in [\min(d_T, d_w^{<m'})-1]} (w, j, T)}$$
if the algorithm has not stopped the computation before reaching $\dedge{uv}$. After executing $\update$, the set of stored elements becomes
\begin{align*} 
  &\{(i, S) \mid 4(m'+1) \leq i \leq 4m\} \\
  &\cup \bigcup_{w \in V\setminus\{u,
  v\}} \paren*{\bigcup_{j \in [\min(d_H, d_w^{<m'+1}) - 1]} (w, j, H) \cup
  \bigcup_{j \in [\min(d_T, d_w^{<m'+1})-1]} (w, j, T)}\\ &\cup
  \paren*{\bigcup_{j \in [\min(d_H+1, d_w^{<m'+1})-1]} (w, j, H) \cup
  \bigcup_{j \in [\min(d_T+1, d_w^{<m'+1})-1]} (w, j, T)} 
\end{align*}
Note that the size of this set is $s_{m'-1}$. We can now consider the expected
contribution of $\dedge{uv}$ to the estimator based on three possible cases:
\begin{itemize}
	\item $d_u^{\leq \dedge{uv}} \geq d_H, d_v^{\leq \dedge{uv}} \geq d_T$.
	Then $\querypair$ returns $1$ with probability $2/s_{m'-1}$, resulting in
	the algorithm terminating with output $2m$, and $\bot$
	otherwise. Therefore, the expected contribution is \[
		\frac{s_{m'-1}}{4m} \cdot \frac{2}{s_{m'-1}} \cdot 2m = 1
	\]
	\item Exactly one of $d_u^{\leq \dedge{uv}} \geq d_H$ or $d_v^{\leq
	\dedge{uv}} \geq d_T$ holds. Then $\querypair$ returns $1$ with probability
	$1/2s_{m'-1}$, $-1$ with probability $1/2s_{m'-1}$, and $\bot$ otherwise.
	In the first two cases, the algorithm terminates with output $2m$ and
	$-2m$, respectively, so the outcomes cancel and the expected contribution
	to the estimator is $0$, as the algorithm continues running if $\bot$ is returned.
	\item $d_u^{\leq \dedge{uv}} < d_H, d_v^{\leq \dedge{uv}} < d_T$. Then
	$\querypair$ returns $\bot$ with probability $1$, and so the algorithm will not terminate at this step and the expected contribution to the estimator is 0.
\end{itemize}
This concludes the proof.
\end{proof}

\hedgevar*
\begin{proof}
Follows from the magnitude of the estimator being at most $2m$. 
\end{proof}

\begin{theorem}
There exists an quantum algorithm that approximates the number of edges
$\dedge{uv}$ with $d_u^{\leq \dedge{uv}} \geq d_H$ and $d_u^{\leq \dedge{uv}}
\geq d_H$ to $O(\eps m)$ error with probability $2/3$ and using $O(\frac{1}{\eps^2}
\log{n})$ quantum memory. This algorithm is classical except for the use of the quantum
sketch described in Section~\ref{sec:description}. 
\end{theorem}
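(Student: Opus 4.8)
The plan is to boost the single-run estimator of Algorithm~\ref{alg:heavy_edges} into an $\eps m$-accurate one by parallel repetition together with Chebyshev's inequality, the same amplification used for the other applications in this paper. The two ingredients are already in place: by the preceding lemma on the expectation, one run of Algorithm~\ref{alg:heavy_edges} outputs a random variable $X$ with $\Expect{X}$ equal to the number of heavy edges $\dedge{uv}$ (those with $d_u^{\leq \dedge{uv}} \geq d_H$ and $d_v^{\leq \dedge{uv}} \geq d_T$), and by the preceding lemma on the variance, $\Var{X} = \bO{m^2}$.

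First I would run $N = \bT{1/\eps^2}$ independent copies of Algorithm~\ref{alg:heavy_edges} in parallel over the stream, each copy holding its own instance of the pair sketch (so that the measurement outcomes of different copies are independent), and output the average $\bar X = \frac1N \sum_{i=1}^N X_i$ of the copies' outputs. Parallel rather than sequential execution is required since the stream is seen only once; that different copies terminate their quantum stage at different points of the stream is immaterial, as the copies are independent and each $X_i$ is simply read off at the end of the stream. By linearity $\Expect{\bar X}$ is still the number of heavy edges, and by independence $\Var{\bar X} = \Var{X}/N = \bO{m^2/N}$, which for a large enough constant in $N$ is at most $\eps^2 m^2/3$.

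Chebyshev's inequality then gives $\prob{\abs{\bar X - \Expect{\bar X}} > \eps m} \le \Var{\bar X}/(\eps m)^2 \le 1/3$, so $\bar X$ is within $\eps m$ of the number of heavy edges with probability at least $2/3$, as claimed; a success probability $1 - \delta$ could be obtained by taking the median of $\bO{\log(1/\delta)}$ such averaged estimators. For the space bound, each copy stores a pair sketch over the universe $U = [n] \times [2n] \times \set{H,T} \cup [4m] \times \set{S}$, which embeds in $\brac{n^{\bO{1}}}$, so by Theorem~\ref{thm:implementability} it uses $\bO{\log n}$ qubits, plus $\bO{\log n}$ classical bits for the counter $\ell$ and for the parameters specifying the permutations passed to $\update$; summed over the $\bT{1/\eps^2}$ copies this is $\bO{\tfrac{1}{\eps^2}\log n}$ qubits and classical bits. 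The only non-classical operation is the pair sketch of Section~\ref{sec:description}, so the algorithm is classical apart from that sketch.

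I do not anticipate a real obstacle: the genuinely nontrivial work is already finished --- the loop invariant of Lemma~\ref{lm:hedgeloopinvariant}, which pins down the contents of the sketch after $m'$ edges, and the unbiasedness argument, which relies on Lemma~\ref{lm:reordering} to cancel the bias that the sketch would otherwise accumulate as earlier queries shrink it --- and what remains is routine amplification and space accounting. The only point worth a second look is verifying that the per-copy classical overhead really is $\bO{\log n}$ bits, which holds because a single copy of Algorithm~\ref{alg:heavy_edges} maintains nothing beyond the counter $\ell \le 4m = \poly(n)$ and a constant number of $\poly(n)$-sized parameters.
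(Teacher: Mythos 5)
Your proposal is correct and matches the paper's proof: the paper likewise runs $\bT{1/\eps^2}$ parallel copies of Algorithm~\ref{alg:heavy_edges}, averages, and applies Chebyshev's inequality using the expectation and variance lemmas. Your added detail on independence of the copies and the per-copy space accounting is consistent with (and slightly more explicit than) the paper's argument.
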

\begin{proof}
By running $\bT{\frac{1}{\eps^2}}$ copies of the algorithm described above in
parallel and averaging the estimators, we can obtain an estimator with the
correct expectation, and variance $\eps^2/3$. The result then follows by
Chebyshev's inequality.
\end{proof}

\end{document}